\pgfplotsset{compat=newest}
\definecolor{zero}{RGB}{180, 180, 180}
\definecolor{one}{RGB}{206, 104, 104}
\definecolor{two}{RGB}{231, 195, 146}
\definecolor{three}{RGB}{231, 217, 146}
\definecolor{four}{RGB}{209, 223, 141}
\definecolor{five}{RGB}{116, 185, 116}
\definecolor{six}{RGB}{126, 147, 165}
\definecolor{seven}{RGB}{146, 136, 176}
\definecolor{eight}{RGB}{166, 124, 166}
\newtheorem{theorem}{Theorem}[section]
\newtheorem{corollary}{Corollary}[theorem]
\newtheorem{lemma}[theorem]{Lemma}
\newtheorem{prop}[theorem]{Proposition}
\newtheorem{definition}[theorem]{Definition}
\newtheorem{example}[theorem]{Example}
\theoremstyle{definition}
\newtheorem{remark}[theorem]{Remark}
\DeclareMathOperator{\id}{\mathbf{1}}
\renewcommand\labelenumi{(\roman{enumi})}
\renewcommand\theenumi\labelenumi
\begin{document}

\title{Polynomial Time Quantum Gibbs Sampling for Fermi-Hubbard Model at any Temperature}
\author{Štěpán Šmíd}
\email{s.smid23@imperial.ac.uk}
\affiliation{%
Department of Computing, Imperial College London, United Kingdom
}%
\author{Richard Meister}
\affiliation{%
Department of Computing, Imperial College London, United Kingdom
}%
\author{Mario Berta}
\affiliation{%
Institute for Quantum Information, RWTH Aachen University, Germany
}%
\affiliation{%
Department of Computing, Imperial College London, United Kingdom
}%
\author{Roberto Bondesan}
\affiliation{%
Department of Computing, Imperial College London, United Kingdom
}%

\date{\today}


\begin{abstract}
    Recently, there have been several advancements in quantum algorithms for Gibbs sampling. These algorithms simulate the dynamics generated by an artificial Lindbladian, which is meticulously constructed to obey a detailed-balance condition with the Gibbs state of interest, ensuring it is a stationary point of the evolution, while simultaneously having efficiently implementable time steps. The overall complexity then depends primarily on the mixing time of the Lindbladian, which can vary drastically, but which has been previously bounded in the regime of high enough temperatures [Rouz\'e {\it et al.}~arXiv:2403.12691 and arXiv:2411.04885].

    In this work, we calculate the spectral gap of the Lindbladian for free fermions using third quantisation, and also prove a logarithmic bound on its mixing time by analysing corresponding covariance matrices. Then we prove a constant gap of the perturbed Lindbladian corresponding to interacting fermions up to some maximal coupling strength. This is achieved by using theorems about stability of the gap for lattice fermions. Our methods apply at any constant temperature and independently of the system size.
    The gap then provides an upper bound on the mixing time, and hence on the overall complexity of the quantum algorithm, proving that the purified Gibbs state of weakly interacting (quasi-)local fermionic systems of any dimension can be prepared in $\widetilde{\mathcal{O}} (n^3 \operatorname{polylog}(1/\epsilon))$ time on $\mathcal{O}(n)$ qubits, where $n$ denotes the size of the system and $\epsilon$ the desired accuracy. As an application of Gibbs sampling, we explain how to calculate partition functions for the considered systems.
    We provide exact numerical simulations for small system sizes supporting the theory and also identify different suitable jump operators and filter functions for the sought-after regime of intermediate coupling in the Fermi-Hubbard model.
\end{abstract}

\maketitle


\section{Overview} \label{sec:overview}

\paragraph{Introduction.} Quantum computers promise to have a transformative impact on computing, as quantum algorithms are believed to solve certain computational tasks much faster, i.e., with {\it superpolynomial speed-up} compared to purely classical algorithms. Although it is important to realise that such large quantum advantages are anything but generic, one of the most promising areas of application is the simulation of quantum many-body systems (QMBS) \cite{dalzell2023algorithmsreview}. Here, while quantum proposals on how to resolve the zero temperature ground state physics of QMBS range back to the early age of quantum computing \cite{Kiatev2002classicalquantumcomputation}, there has been recent algorithmic progress on how to efficiently create non-zero temperature quantum Gibbs states via simulated Lindbladian evolution in the quantum circuit model \cite{chen2023quantum,chen2023efficient}. These {\it quantum Gibbs samplers (QGS)} correspond to the non-commutative analogue of the classically highly successful {\it Markov chain Monte Carlo (MCMC)} methods \cite{levin2017MCMT}, and are then also hoped to efficiently provide insights into the non-zero temperature physics of QMBS in regimes that are computationally challenging for classical methods. There is the physical intuition that QGS starting from the breakthrough result \cite{chen2023efficient} will perform well for some average case instances relevant to computational physics and computational chemistry.\footnote{The average case character of this promise would then also not contradict known worst case hardness results such as \cite{Kiatev2002classicalquantumcomputation,Aharonov:2009aa,schuch2009computational,PRXQuantum.3.020322}.} This is contrasted to the previously proposed intricate non-zero temperature quantum methods that are (partially) missing rigorous guarantees \cite{Temme:2011aa,Yung2012metropolis,shtanko2021,moussa2022lowdepthquantummetropolisalgorithm,Rall2023thermalstate,Wocjan:2023aa} (however, see also \cite{jiang2024weakmeasurements}), or, are believed to be computationally expensive on relevant finite-size instance sizes for finite temperature \cite{narayan2017qbshittingtime,vanApeldoorn2020quantumsdpsolvers,vanapeldoorn2019spdsolvers,gilyen2019qsvt,an2023quantumalgorithmlinearnonunitary}.\footnote{We refer to \cite{zhang2023dissipativequantumgibbssampling} for another potentially more near-term approach.} The guiding idea behind the latest algorithmic Lindbladian constructions is to efficiently simulate (fast) thermalisation processes in nature as, e.g., modelled by the Davies generator (cf.~\cite{Mozgunov2020completelypositive,PhysRevB.102.115109} for some recent discussions). In particular, it is possible to bring together {\it algorithmic efficiency} with an exact notion of {\it quantum detailed balanced}, and we refer to the recent QGS frameworks \cite{ding2024efficient,gilyen2024glauber} as well as references therein for an extended discussion.\\

\paragraph{Motivation.} Classical MCMC algorithms are usually termed efficient when they converge to the Gibbs state in time polynomial or even logarithmic in system size (see \cite{levin2017MCMT} and references therein). In contrast, the recently proposed Lindbladian evolution-based QGS are only efficient in the sense that all algorithmic steps are implemented in basically linear time, but the complexity or overall run-time further relies on the so-called {\it mixing time}, which then has to be estimated for each system of interest on a case-by-case basis. The situation is akin to quantum adiabatic algorithms, whose runtime is governed by the Hamiltonian gap along the adiabatic path (see \cite{RevModPhys.90.015002} and references therein).

Starting from classical MCMC it is well understood that rigorous bounds on the convergence time as studied in mathematical physics (see, e.g, \cite{temme2015thermalize,Kastoryano:2016aa,Brandao:2019aa,Bardet:2024aa,kochanowski2024rapid} and references therein) often vastly overestimate the observed convergence times when running the algorithms in practice and reading out physical information, such as relevant observables or correlation functions. However, it is by design challenging to numerically run QGS to estimate practical mixing times\,---\,after all we do not (yet) have reliable large-scale quantum computers and the classical simulation thereof is believed to be hard. As such, analytical bounds on mixing times are the current main tool to start any efficiency analysis.\\

\paragraph{Approach.} For our work, we are looking for QMBS with instances of QGS where
\begin{enumerate}
    \item classical methods are not sufficient to conclusively determine the physics at non-zero temperature
    \item we can derive rigorous and efficient bounds on the mixing time, at least for certain non-trivial parameter regimes
    \item we can subsequently give informed heuristics on how to fine-tune QGS to potentially even expect efficient mixing times for relevant average case parameter regimes beyond analytical worst case guarantees.
\end{enumerate}
As an example, previous general results on efficient quantum mixing times are generically available in the high-temperature limit \cite{rouze2024efficient,rouze2024optimalquantumalgorithmgibbs}, where, however, classical methods can also work \cite{bakshi2024hightemperature}.\footnote{Other problem specific rigorous results on efficient mixing time bounds include random sparse Hamiltonians \cite{ramkumar2024mixingtimequantumgibbs}, random local Hamiltonians \cite{basso2024random}, the toric code \cite{ding2024toriccode}, parent Hamiltonians of shallow quantum circuits \cite{chen2024constanttemp,rajakumar2024contantlocal}, among others.}

In contrast, here we focus on the quantum simulation of {\it fermionic systems} and specifically the {\it Fermi-Hubbard model} on a $D$-dimensional lattice. This model enjoys widespread applications in science, for example, to the Mott metal-insulator transition and to high temperature superconductivity \cite{baertschi2024usecaseslosalamos}, while at the same time remaining challenging for classical methods \cite{Arovas_2022,Qin2022hubbardmodel}. As such, it further serves as a standard benchmark for computational methods \cite{chan2024quantumchemistryclassicalheuristics}, including quantum simulations using ultra-cold atoms \cite{gross2017quantum}.
We note that the sought-after {\it average case efficiency} for typical finite system sizes would not be in contradiction with the known worst-case hardness results for fermionic systems \cite{PRXQuantum.3.020322,schuch2009computational}.\\

\paragraph{Results.} Our main finding (Theorem \ref{thm: main, gap} together with Corollary \ref{cor:main-result}) is for fermionic Hamiltonians with exponentially decaying interactions. Namely, we show that for such systems of size $n$ and at any constant temperature $T>0$, there exists a constant maximal interaction strength such that the corresponding purified Gibbs state can be created up to approximation $\epsilon>0$ in trace distance 
\begin{align}\label{eq:main-result}
\text{in quantum gate complexity $\widetilde{\mathcal{O}}(n^3 \operatorname{polylog}(1/\epsilon))$ and with $\mathcal{O}(n)$ qubits,}
\end{align}
where the $\widetilde{\mathcal{O}}(\cdot)$ notation absorbs subleading polylogarithmic terms. Crucially, the constant on the interaction strength is {\it independent of the system size} and thus we conclude that (weakly) interacting fermionic systems in fixed dimension and at any constant temperature can be efficiently simulated on quantum computers! Although free fermions are efficiently solvable, to the best of our knowledge, there is no provably efficient classical algorithm for the weakly interacting regime.\\
\indent Our results include, in particular, the {\it Fermi-Hubbard model}. Its spinful version on a $D$-dimensional lattice is governed by the Hamiltonian
\begin{align}
    H_\mathrm{FH} = -t \sum_{\langle i, j \rangle, \sigma} \left(a_{i,\sigma}^\dagger a_{j,\sigma} + a_{j,\sigma}^\dagger a_{i,\sigma}\right) + U \sum_i a_{i,\uparrow}^\dagger a_{i,\uparrow} a_{i,\downarrow}^\dagger a_{i,\downarrow},
\end{align}
where $\langle \cdot, \cdot \rangle$ denotes neighbouring sites on the lattice, $\sigma \in \{\uparrow, \downarrow\}$ the spins, and $a_{i,\sigma}^{(\dagger)}$ the fermionic annihilation and creation operators on site $i$ with spin $\sigma$. The weakly interacting limit corresponds to $U/t\lesssim1$ \cite{Arovas_2022,Qin2022hubbardmodel,chan2024quantumchemistryclassicalheuristics}, which can then serve as an analytical starting point for further numerical investigations. 
Furthermore, while the Fermi-Hubbard model is exactly solvable for the $D=1$ case \cite{essler2005one}, we emphasize that our results are equally valid in any dimension. 
We refer to \cite{Arovas_2022} for a recent review of exact and heuristic results for the Hubbard model.
To the best of our knowledge, 
no classical algorithm with performance guarantees exists in the regime we study, leaving open the possibility of exponential quantum advantage.
This is in contrast to the case of high temperature, where polynomial time algorithms are known \cite{bakshi2024hightemperature,Mann_2021}.
However, powerful heuristic methods, such as tensor networks, allow one to study the ground state of the Fermi-Hubbard model up to lattices of size $16\times 16$
\cite{liu2025accuratesimulationhubbardmodel}.
\\

\paragraph{Methods.} We follow the recent work \cite{chen2023efficient} which derives that the mixing time of Lindbladian QGSs can be upper bounded by giving lower bound bounds on the {\it spectral gap of the Lindbladian}. This spectral gap, in turn, is equal to the spectral gap of a corresponding {\it parent Hamiltonian}, which we analyse. Our approach is to first study free fermions in the general QGS framework of \cite{ding2024efficient} and make the particular design choice of {\it Majorana jump operators} and {\it Gaussian filter functions} in the corresponding algorithmic Lindbladian.\footnote{Our considerations for free fermions also work for any choice of filter functions.} In particular, this allows us to {\it exactly} compute the finite spectral gap of the free fermionic parent Hamiltonian (Proposition \ref{prop - spectrum of free fermionic Lindbladian}) via Prosen's third quantisation formalism \cite{Prosen_2008}. On top of this, recognising that the dynamics is restricted to that of Gaussian states, we calculate the covariance matrix of the evolved state, which together with optimal trace norm bounds \cite{bittel2025optimaltracedistanceboundsfreefermionic} allows us to prove a \textit{logarithmic upper bound} on the mixing time for free fermions -- hence proving \textit{rapid mixing} at any temperature for the first time (Proposition \ref{prop - rapid mixing of free fermions}). Second, we make use of Hasting's stability result \cite{hastings2017stabilityfreefermihamiltonians} on the spectral gap of free fermions under perturbation (see also \cite{De_Roeck_2018,koma2020stabilityspectralgaplattice}) in order to quantitatively extend the finite spectral gap in a system size-independent manner to the interacting parent Hamiltonian (Theorem \ref{thm: main, gap}). To lift the locality of interaction from the fermionic Hamiltonian to the Lindbladian's parent Hamiltonian we use Lieb-Robinson bounds and employ matrix analysis methods.\\

\paragraph{Extensions.} Our methods equally apply to the opposite regime $U/t\gg1$ of the Fermi-Hubbard model. Here, we start by exactly solving the {\it atomic limit} $t=0$ case of no hopping and the same choices of jump operators and filter functions (Proposition \ref{prop - spectrum of atomic Lindbladian}), after which we use adapted {\it eigenvalue perturbation techniques} to control finite $t>0$ (Theorem \ref{thm: gap stability atomic Lindbladian}). We again find (Corollary \ref{cor:main-result}) that at any finite temperature there exists a constant (system size-independent) maximal $t$ such that the corresponding Gibbs state can be algorithmically created with the complexities as stated in \eqref{eq:main-result}. We emphasize that the flexibility of our proof techniques naturally lend themselves to future explorations of other QMBS models in various regimes.
We note that provably efficient classical algorithms exist for quantum perturbations of classical Hamiltonian also at low temperatures \cite{Helmuth_2023}. As an application of Gibbs sampling, we adapt \cite[Theorem 8]{rouze2024optimalquantumalgorithmgibbs} to explain how to calculate partition functions for the considered systems, hence providing the means to resolve the physics in thermal equilibrium of the underlying QMBS in an end-to-end fashion.
\\

\paragraph{Numerical simulations.} We perform small-scale exact classical simulations for the weakly interacting spinless and spinful Fermi-Hubbard model in order to trial the hidden asymptotic constants in our analytical result. We find reasonable finite-gap behaviour and confirm in particular the predicted {\it system size-independent scaling}. In general, the {\it temperature dependence} is not favourable in our analytical result, and this also becomes visible in the numerical analysis. However, varying the choice of the jump operators from Majorana to Paulis and the filter functions from Gaussian to Metropolis, we observe a much improved temperature dependence in our simulations. To further push away this from the weakly interacting regimes, where perturbative methods can be applicable, we additionally test intermediate-strength couplings with $2 \lesssim U/t \lesssim 6$ for the spinless $D=1$ case at different temperatures. We see promising behaviour much beyond our analytical bounds, but for this intermediate regime the scaling in system size remains inconclusive from our small-scale numerics. We refer to Figure \ref{fig:Intro - results} for more details; the code for these simulations is available at \cite{Smid_GitHub_GibbsSampling}.\\

The remainder of this paper is structured as follows. We first give a general background on quantum Gibbs samplers in Sections \ref{sec:Quantum Gibbs sampling} -- \ref{sec:detailed-balance}. We then introduce fermionic systems in the third quantisation formalism (Section \ref{sec: third quantisation}) and discuss locality of the corresponding Lindbladian parent Hamiltonians (Section \ref{sec: locality}). Our main analytical results are derived in Section \ref{sec:fermions}, where we compute the relevant spectral gap and rapid mixing time for free fermions (Section \ref{sec:free-fermions}), analyse the stability of the gap under perturbations caused by exponentially decaying interactions (Section \ref{sec:stability}), discuss extensions to the atomic limit (Section \ref{sec:stability atomic}), and finally argue why this leads to algorithmically efficient QGS (Section \ref{sec:efficiency}). Section \ref{sec: partition function} explains how QBS provides a mean for calculating the partition function of the underlying QMBS. We give our numerical results in Section \ref{sec:simulations}, both for the analytically bounded regime (Section \ref{sec:analytically-bounded}) and much beyond (Section \ref{sec:beyond}). Last but not least, we speculate in Section \ref{sec:outlook} on the further use of the presented techniques. Some technical arguments are deferred to Appendices \ref{app:lemmas} -- \ref{appendix:detailed bounds}.

\begin{figure}[H]
    \centering
    \resizebox{0.9\textwidth}{!}{
        \subfloat[The gap $\Delta$ as a function of $U$]{
            \begin{tikzpicture}

\definecolor{zero}{RGB}{180, 180, 180}
\definecolor{one}{RGB}{206, 104, 104}
\definecolor{two}{RGB}{231, 195, 146}
\definecolor{three}{RGB}{231, 217, 146}
\definecolor{four}{RGB}{209, 223, 141}
\definecolor{five}{RGB}{116, 185, 116}
\definecolor{six}{RGB}{126, 147, 165}
\definecolor{seven}{RGB}{146, 136, 176}
\definecolor{eight}{RGB}{166, 124, 166}

\begin{axis}[
    height=.32\columnwidth,
    width=.47\columnwidth,
    xmin=-10.1, xmax=10.1,
    ymin=0, ymax=1.1,
    xlabel={$U$}, ylabel={$\Delta$},
    axis on top,
    legend cell align={left},
    legend style={
        at={(0.5,1.02)},
        anchor=south,
        cells={anchor=west},
        /tikz/every even column/.append style={column sep=.6em},
        /tikz/every odd column/.append style={column sep=.05em}},
    legend columns=4,
    ]
    \addlegendimage{empty legend}
    \addlegendimage{empty legend}
    \addlegendimage{empty legend}
    \addlegendimage{empty legend}
    \addlegendentry{}
    \addlegendentry{}
    \addlegendentry{\hspace{-3.5em}$\boldsymbol{n_\mathrm{sites}}$}
    \addlegendentry{}
    \addplot[mark=none, color=one, thick] table[x=U, y=Delta_nqb5_beta1] {data/usweep.dat};
    \addlegendentry{$5$}
    \addplot[mark=none, color=five, thick] table[x=U, y=Delta_nqb6_beta1] {data/usweep.dat};
    \addlegendentry{$6$}
    \addplot[mark=none, color=six, thick] table[x=U, y=Delta_nqb7_beta1] {data/usweep.dat};
    \addlegendentry{$7$}
    \addplot[mark=none, color=eight, thick] table[x=U, y=Delta_nqb8_beta1] {data/usweep.dat};
    \addlegendentry{$8$}
    
\end{axis}

\end{tikzpicture}
       }
        \subfloat[Slope of decay of $\Delta$ as $n$ grows]{
            \begin{tikzpicture}

\definecolor{zero}{RGB}{180, 180, 180}
\definecolor{one}{RGB}{206, 104, 104}
\definecolor{two}{RGB}{231, 195, 146}
\definecolor{three}{RGB}{231, 217, 146}
\definecolor{four}{RGB}{209, 223, 141}
\definecolor{five}{RGB}{116, 185, 116}
\definecolor{six}{RGB}{126, 147, 165}
\definecolor{seven}{RGB}{146, 136, 176}
\definecolor{eight}{RGB}{166, 124, 166}

\begin{axis}[
    height=.32\columnwidth,
    width=.47\columnwidth,
    xmin=1.8, xmax=11.2,
    ymin=0, ymax=1.3,
    xlabel={$n_\mathrm{qubits}$}, ylabel={$\tilde{d}$},
    axis on top,
    legend cell align={left},
    legend style={at={(0.5,1.02)}, anchor=south},
    legend columns=2,
    ]
    \addplot[mark=*, mark options={fill=white}, mark size=1.5, color=one, thick] table[x=n_qb, y=d_beta1_spinless_pos] {data/gap_initial_slope_1D_only.dat};
    \addlegendentry{$U > 0$, spinless}
    \addplot[mark=*, mark options={solid}, mark size=1.5, color=one, thick, densely dashed] table[x=n_qb, y=d_beta1_spinless_neg] {data/gap_initial_slope_1D_only.dat};
    \addlegendentry{$U < 0$, spinless}
    \addplot[mark=triangle*, mark options={fill=white}, mark size=2, color=five, thick] table[x=n_qb, y=d_beta1_spinful_pos] {data/gap_initial_slope_1D_only.dat};
    \addlegendentry{$U > 0$, spinful}
    \addplot[mark=triangle*, mark options={solid}, mark size=2, color=five, thick, densely dashed] table[x=n_qb, y=d_beta1_spinful_neg] {data/gap_initial_slope_1D_only.dat};
    \addlegendentry{$U < 0$, spinful}
\end{axis}

\end{tikzpicture}
        }}
\caption{\textbf{Numerical results for the Fermi-Hubbard model.} \textbf{(a)} Plotting the gap $\Delta$ of the full Lindbladian $\mathcal{L}^\dagger$ associated with the spinless $D=1$ Fermi-Hubbard thermal state with design choices beyond our analytical results\,---\,when using the Metropolis filter function and single site Pauli jump operators instead\,---\,as a function of the coupling strength $U$. Here we plot different system sizes separately, for the case $\beta = 1$ and $t = 1$, demonstrating a large spectral gap in the regime of intermediate coupling $2 \lesssim U/t \lesssim 6$, which also does not seem to close with growing inverse temperature $\beta$ (cf.~Fig.~\ref{fig:usweep}). We observe complete closing of the gap only when the interaction $U$ exceeds the support of the filter function. \textbf{(b)} Plotting the slope $\tilde d = \mp\left.\frac{\partial \Delta}{\partial U}\right|_{U=0^{\pm}}$ under which the spectral gap $\Delta$ of the full Lindbladian $\mathcal{L}^\dagger$ closes from that of the unperturbed Lindbladian $\mathcal{L}_0^\dagger$ in the analytically bounded regime, as the system size $n$ grows, at $\beta = 1$. As per our main result (Theorem \ref{thm: main, gap}) leading to the complexities as stated in Eq.~\eqref{eq:main-result}, this quantity has to be upper bounded uniformly in $n$. Figure~\ref{fig:spinless-fermi-hubbard-slope} shows data for more sets of parameters exhibiting different types of behaviours.}
    \label{fig:Intro - results}
\end{figure}
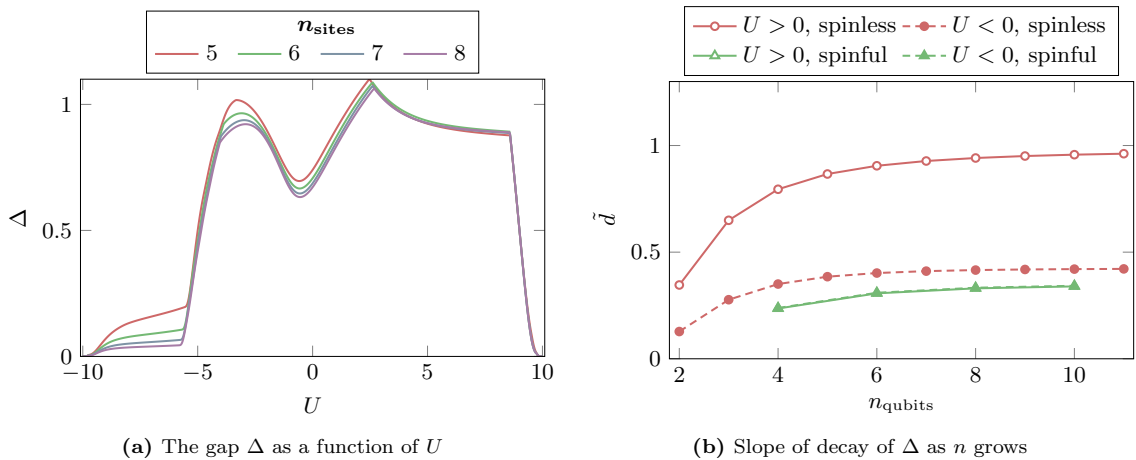


\section{Background}
\label{sec:background}

\subsection{Quantum Gibbs Sampling}
\label{sec:Quantum Gibbs sampling}

We start by reviewing latest literature on quantum Gibbs sampling and framework mostly developed in \cite{chen2023efficient,ding2024efficient}.
Quantum Gibbs sampling is the task of preparing the thermal state $\sigma_\beta = e^{-\beta H}/Z$ for a quantum Hamiltonian $H$.
This task is the quantum analog of classical Gibbs sampling and we briefly introduce that first.
Classical Gibbs sampling aims at
sampling from a classical distribution $\pi(i)=e^{-\beta E(i)}/Z$ where $E(i)$ is the energy of the configuration $i\in\Omega$ of the classical system.
As usual, we assume that $E(i)$ is known explicitly, but computing the partition function $Z$ is in general intractable.
The workhorse of classical Gibbs sampling is Markov chain Monte Carlo.
To sample from $\pi$ we construct the transition operator of a Markov chain $P_{ij}$ which gives the probability of transitioning from $i$ to $j$
such that $
\lim_{t\to\infty}(P^T)^t(\pi_0)=\pi$ for an arbitrary initial distribution $\pi_0$.
$\pi$ is called the stationary distribution of the Markov chain.
A sufficient condition for $\pi$ to be a stationary distribution of $P$ is the classical detailed balance condition, which is defined as $P$ being self-adjoint with respect to the inner product $\langle f,g\rangle_\pi = \sum_{i\in\Omega}\bar{f}_ig_i\pi_i$, where $f,g$ are functions on the configuration space.
Explicitly,
\begin{align}
\langle f,Pg\rangle_\pi   
=
\sum_{i,j\in\Omega}\bar{f}_iP_{ij}g_j\pi_i
=
\langle Pf,g\rangle_\pi 
=
\sum_{i,j\in\Omega}P_{ji}\bar{f}_ig_j\pi_j
\end{align}
which is true for all $f,g$ if 
\begin{align}
    P_{ij}\pi_i=P_{ji}\pi_j
\end{align}
for all $i,j\in\Omega$. By summing this equation over $i$ and using that $\sum_{i\in\Omega}P_{ji}=1$ we indeed have 
$\pi P = \pi$ so that the distribution $\pi$ is an eigenvector of $P$ with eigenvalue $1$.
If $P$ is also aperiodic and irreducible, then the Perron-Frobenius theorem guarantees that
$\pi$ is the unique stationary state of $P$ as desired.
Here, aperiodic means that the greatest common divisor of the number of transitions by which any $i\in\Omega$ can be reached starting from $i$ is $1$ and irreducible means that $P$ has no non-trivial invariant subspaces.\\

In the quantum case, we can proceed analogously. We first define the quantum Markov semigroup $\mathcal{P}_t$ as the semigroup of completely positive, unital maps. While in the classical case we used discrete time dynamics for simplicity, in the quantum case it is more useful to use continuous-time dynamics, so that we can work with the generator of the dynamics called the Lindbladian $\mathcal{L}$: $\mathcal{P}_t = e^{t\mathcal{L}}$.
To perform quantum Gibbs sampling, we construct a quantum Markov semigroup such that $\lim_{t\to\infty}
\mathcal{P}_t^\dagger(\rho_0) = \sigma_\beta$
where $\rho_0$ is an arbitrary initial state and $\Phi^\dagger$ for a superoperator $\Phi$ is the adjoint w.r.t.~the Hilbert-Schmidt inner product: $\langle A,B\rangle=\Tr(A^\dagger B)$.
There are many scalar products that reduce to the classical inner product $\langle \cdot, \cdot\rangle_\pi$ when $\sigma_\beta$ is a diagonal matrix with diagonal $\pi$.
The most useful one for our purposes is the Kubo-Martin-Schwinger (KMS) inner product.
Given a full rank state $\sigma>0$, this is defined for two operators $A,B$ as
\begin{align}
    \langle A,B\rangle_{\sigma}
    =
    \Tr(
    A^\dagger
    \Gamma_\sigma
    (B))
    \,,
\end{align}
where 
\begin{align}
\label{eq:Gamma}
\Gamma_\sigma(A) = 
\sigma^{1/2}
A
\sigma^{1/2}    
\,.
\end{align}

\begin{definition}(Quantum Detailed Balance)
    A Lindbladian $\mathcal{L}$ satisfies the KMS quantum detailed balance (QDB) condition if $\mathcal{L}$ is self-adjoint with respect to the KMS inner product.
\end{definition}

Since $\mathcal{L}[\id]=0$, we have that if $\mathcal{L}$ satisfies QDB,
\begin{align}
    0
    =
    \langle A, \mathcal{L}[\id]\rangle_\sigma
    =
    \langle \mathcal{L}[A], \id\rangle_\sigma
    =
    \langle \mathcal{L}[A], \sigma\rangle
    =
    \langle A, \mathcal{L}^\dagger[\sigma]\rangle
    \,,
\end{align}
for any operator $A$.
This shows that $\mathcal{L}^\dagger[\sigma]=0$
so that $\sigma$ is a stationary state of the dynamics generated by $\mathcal{L}^\dagger$. We can write the QDB condition more explicitly as:
\begin{equation}
    \mathcal{L}
    = 
    \Gamma_\sigma^{-1}
    \circ
    \mathcal{L}^\dagger 
    \circ
    \Gamma_\sigma \,.
\end{equation}
Note that in general $\mathcal{L}$ is a non-Hermitian operator, however the self-adjointness with the KMS inner product guarantees real spectrum.
We can define a Hermitian operator by a similarity transformation as follows:

\begin{definition}(Parent Hamiltonian)
Given a Lindbladian $\mathcal{L}$ satisfying QDB, we define the parent Hamiltonian 
of the state $\sqrt{\sigma}$, also known as the discriminant associated with the Lindbladian, as
\begin{align}
\label{eq:calH}
\mathcal{H} = 
\Gamma_{\sigma}^{-1/2}
\circ
\mathcal{L}^\dagger 
\circ
\Gamma_{\sigma}^{+1/2}
\,.
\end{align}
\end{definition}

Hermiticity of 
 $\mathcal{H}$ follows from the QDB condition:
\begin{align}
\mathcal{H}^\dagger =     
\Gamma_{\sigma}^{+1/2}
\circ
\mathcal{L}
\circ
\Gamma_{\sigma}^{-1/2}
=
\Gamma_{\sigma}^{-1/2}
\circ
\mathcal{L}^\dagger
\circ
\Gamma_{\sigma}^{+1/2}
=
\mathcal{H}
\,.
\end{align}
Also $\mathcal{H}$
has the same spectrum as $\mathcal{L}$ since they are related by a similarity transformation.
In particular 
\begin{align}
\mathcal{H}[\sqrt{\sigma}] =     
\Gamma_{\sigma}^{-1/2}
\circ
\mathcal{L}^\dagger 
[\sigma^{1/4}\sqrt{\sigma}\sigma^{1/4}]
=
0
\end{align}
which shows that indeed $\mathcal{H}$ is the parent Hamiltonian of $\sqrt{\sigma}$.
Here we slightly abuse the notion of parent Hamiltonian, since $\sqrt{\sigma}$ is the highest energy state of $\mathcal{H}$ rather than the lowest one.\\

The Lindbladian can be written in the following form
\begin{equation}
    \mathcal{L}^\dagger [\rho] = -i[G,\rho] + \sum_{a \in \mathcal{A}} \left( L_a \rho L_a ^\dagger - \frac{1}{2} \{ L_a ^\dagger L_a, \rho \} \right) \,.
\end{equation} 
$L_a$ are called the Lindblad operators
and $G=G^\dagger$ is the coherent term.
In terms of these operators the parent Hamiltonian is
\begin{align}
    \mathcal{H}[\rho]
    &=
    -i(\tilde{G}\rho - \rho \tilde{G}^\dagger)
    +
    \sum_{a\in\mathcal{A}}
    \left(
    \tilde{L}_a\rho\tilde{L}_a^\dagger 
    -
    \frac{1}{2}
    (\tilde{M}_a\rho + \rho \tilde{M}^\dagger_a)
    \right)
    \\
    \label{eq:tilde_ops}
    \tilde{G}&=\sigma_\beta^{-1/4}G
    \sigma_\beta^{+1/4}
    \,,\quad 
    \tilde{L}_a=\sigma_\beta^{-1/4}L_a
    \sigma_\beta^{+1/4}
    \,,\quad 
    \tilde{M}_a
    =
    \sigma_\beta^{-1/4}L_a^\dagger L_a
    \sigma_\beta^{+1/4}\,.
\end{align}
Similarly to the classical case, we define irreducibility of the quantum channel $\mathcal{E}$ as the absence of invariant subspaces. More precisely if $\mathcal{E}$ acts on the space of $d\times d$ matrices $\mathcal{M}_d$, $\mathcal{E}$ is irreducible if for a projector $P$, the identity $\mathcal{E}(P\mathcal{M}_d P)\subseteq P\mathcal{M}_dP$ occurs only for the trivial cases of $P=0,\id$ \cite[Thm 6.2]{wolf}.
Then if $e^{t\mathcal{L}^\dagger}$ is irreducible, the kernel of $\mathcal{L}^\dagger$ is one-dimensional and spanned by the full rank density matrix $\sigma_\beta$ \cite[Prop 7.5]{wolf}.
This guarantees uniqueness of the stationary state.
A useful criterion for irreducibility is that the algebra generated by $L_a$'s and $G$ is the whole operator algebra \cite[Cor 7.2]{wolf}.\\

The efficiency of the Lindbladian dynamics to prepare a thermal state is governed by the mixing time.
\begin{definition}
    The mixing time of the Lindbladian $\mathcal{L}^\dagger$ is \begin{equation}
        t_\textup{mix}(\epsilon) = \inf \left\{ t \geq 0 \left|\, \forall \rho: \left\|e^{t\mathcal{L}^\dagger}[\rho] - \sigma_\beta\right\|_{\Tr} \leq \epsilon \right.\right\}\,,
    \end{equation} where $\|A\|_{\Tr}
    =\Tr(\sqrt{A^\dagger A})
    $ denotes the trace norm.
\end{definition}

The Lindbladian time evolution implemented on a quantum computer will prepare the following state:
\begin{definition}
    The purified Gibbs state, also known as the thermofield double state, is
    \begin{equation}
        |\sqrt{\sigma_\beta}\rangle = \frac{1}{\sqrt{\Tr(e^{-\beta H})}} \sum_i e^{-\beta E_i /2} |E_i\rangle |\overline{E_i}\rangle\,,
    \end{equation} where we follow the vectorisation convention $|\psi\rangle\langle \phi| \to |\psi\rangle|\overline \phi\rangle$.
\end{definition}
This state is a vector in the doubled Hilbert space.
The Gibbs state $\sigma_\beta$ can be recovered by computing the reduced density matrix on one of the two copies of the Hilbert space.
However, access to the purified Gibbs state can be more useful, allowing, for example, more efficient estimation of observables \cite{Knill_2007}.


\subsection{Construction of Lindbladians with Quantum Detailed Balance and Their Properties}
\label{sec:detailed-balance}

Next, we review the construction of a Lindbladian that satisfies QDB for $\sigma=\sigma_\beta$.
We follow the construction of \cite{ding2024efficient}\,---\,the main difference from the construction of \cite{chen2023efficient} is that it allows one to use a finite number of Lindblad operators.
The construction is given in terms of a set of self-adjoint operators $\{A_a\}_{a\in \mathcal{A}}$ called jump operators and filter functions 
$\{\hat{f}^a(\nu)\}_{a\in \mathcal{A}}$
obeying
\begin{equation}
    \label{eq:q_fcn}
    \hat{f}^a(\nu) = q^a (\nu) e^{-\beta \nu /4},\quad q^a(-\nu) = \overline{q^a(\nu)} 
    \,.
\end{equation}
The Lindblad operators are then given by \begin{align}
    L_a &= \hat{f} ^a (\operatorname{ad}_H) A^a\\
    &= \sum\limits_{\nu \in B_H} \hat{f}^a (\nu) A^a _\nu\\
    \label{eq:jump_op_int_def}
    &= 
    \int_{-\infty} ^{\infty} f^a (t)  e^{iHt} A^a e^{-iHt}\ \dd t 
\end{align}
where $B_H=\{\nu=E_i-E_j\}$ is the set of Bohr frequencies, 
with $E_i,E_j$ running over the spectrum of $H$, and 
\begin{align}
    A_\nu = \sum_{i,j|E_i-E_j=\nu}
    P_iAP_j
    \,,\quad 
    A=\sum_{\nu\in B_H}A_\nu
    \,,\quad 
    A_\nu^\dagger = A_{-\nu}
    \,,
\end{align}
with $P_i$ the projector onto the eigenspace of eigenvalue $E_i$. Here, $\operatorname{ad}_H X = [H,X]$ represents the adjoint endomorphism. Note that $\operatorname{ad}_H A_\nu = [H,A_\nu]=\nu A_{\nu}$.
$\hat{f}^a(\nu)$
is the Fourier transform of $f^a(t)$.
The coherent term is given by \begin{align}
    G &= -i \tanh \circ \log (\Delta ^{1/4} _{\sigma_\beta}) \left( \frac{1}{2} \sum_{a \in \mathcal{A}} L_a ^\dagger L_a \right)\\
    &= \frac{i}{2} \sum\limits_{a \in \mathcal{A}} \sum\limits_{\nu \in B_H} \tanh\left( \frac{\beta \nu}{4} \right) (L_a ^\dagger L_a)_\nu\\
    &= \sum_{a \in \mathcal{A}} \int_{-\infty}^\infty g(t) e^{iHt} (L_a ^\dagger L_a) e^{-iHt}\ \dd t \label{eqn:Coherent term}
\end{align} 
with
\begin{equation}
    \hat{g} (\nu) = \frac{i}{2} \tanh\left( \frac{\beta \nu}{4} \right) \cdot \kappa(\nu)\,,
\end{equation} where $\Delta_\rho [X] = \rho X \rho^{-1}$ is the modular superoperator, and $\kappa(\nu)$ is a sort of smooth indicator function, obeying $\kappa(\nu) = 1$ on $\nu \in [-2\|H\|,2\|H\|]$, and decaying smoothly and rapidly afterwards, so that it belongs to the class of Gevrey functions as per \cite[Equation (3.17)]{ding2024efficient}.
In \cite{ding2024efficient}, it was proven that the Lindbladian so defined satisfies QDB with the thermal state $\sigma_\beta$.\\

The operators entering the parent Hamiltonian of equation \eqref{eq:tilde_ops} then become
\begin{align}
    \label{eq:tilde_L}
    \tilde{L}_a = e^{+\beta/4 H} \sum_{\nu \in B_H} \hat{f}^a(\nu) A_\nu^a e^{-\beta/4 H} = \sum_{\nu \in B_H} \hat{f}^a(\nu)e^{\beta/4\nu} A_\nu^a = \int_{-\infty}^{+\infty} f^a(t+i\beta/4) e^{iHt}A^ae^{-iHt}\ \dd t\,,
\end{align}
where we used $[H,A_\nu]=\nu A_\nu$, 
and similarly, 
\begin{align}
    \label{eq:tilde_G}
    \tilde{G}
    =
    e^{+\beta/4 H}
    \frac{i}{2} \sum\limits_{a \in \mathcal{A}} \sum\limits_{\nu \in B_H} \tanh\left( \frac{\beta \nu}{4} \right)
    (L_a ^\dagger L_a)_\nu
    e^{-\beta/4 H}
    =
    \sum_{a \in \mathcal{A}} \int_{-\infty}^\infty g(t+i\beta/4) e^{iHt} (L_a ^\dagger L_a) e^{-iHt}\ \dd t
    \,.
\end{align}
Note that $\tilde{L}_a = \tilde{L}_a^\dagger$ as $\overline{q^a(\nu)}=q^a(-\nu)$, $A_\nu^\dagger=A_{-\nu}$ and $B_{H}$ is symmetric under negation, $\tilde{G}^\dagger = \tilde{G}$ is as well. See also Lemma \ref{lemma:contour change} for the derivation of $\tilde{L}, \tilde{G}$ from a simple change of contour argument.
Finally, we note that the parent Hamiltonian is frustration-free, namely $\mathcal{H}=\sum_{a\in\mathcal{A}}\mathcal{H}_a$ where each $\mathcal{H}_a$ annihilates after vectorisation the purified Gibbs state. We defer to Section \ref{sec: locality} a discussion on locality of the parent Hamiltonian.\\

A popular filter function that we will mostly focus on below is the Gaussian one
\begin{align}
\label{eq:Gaussian filter}
    \hat f(\nu) = e^{-(\beta\nu+1)^2/8 + 1/8}\,,\quad 
    f(t) = \frac{1}{2\pi} \int_{-\infty}^{\infty}\hat{f}(\nu)e^{-i\nu t}\ \dd \nu
    =
    \sqrt{\frac{2}{\pi\beta^2}}
    \exp\left(-\tfrac{2}{\beta^2}
    \left(t-i\tfrac{\beta}{4}\right)^2\right)
    \,,
\end{align}
so that $f(t+i\beta/4)$ is positive. Another choice suggested by the authors of Ref.~\cite{ding2024efficient} is the Metropolis-type filter
\begin{equation} \label{eq:metropolis}
    \hat{f}^a(\nu) = \hat{f}(\nu) = q(\nu) \, e^{-\beta\nu / 4} =e^{-\sqrt{1 + \beta^2 \nu^2}} \, w(\nu / S) \, e^{-\beta\nu/4}\,,
\end{equation}
where $q(\nu)$ is supported on $[-S, S]$, and $w(x)$ is a ``bump function'' with support \emph{only} in the interval $x \in [-1, 1]$, for which we use
\begin{equation}
    w(x) =
    \begin{cases}
        e^{-\frac{1}{5(1-x^2)}} &\lvert x \rvert < 1\\
        0 & \lvert x \rvert \geq 1
    \end{cases}.
\end{equation}

Reference \cite[Theorem 34]{ding2024efficient} also proves that this Lindbladian evolution can be simulated on a quantum computer 
up to time $t$ 
with time complexity
\begin{equation}
\label{eq:runtime_quantum}
\tilde{\mathcal{O}}(t (\beta + 1) |\mathcal{A}|^2 \log^{1+s}(1/\epsilon))\,,
\end{equation} 
where now $\epsilon$ is the precision of the channel in the diamond norm, and $s \geq 1$ is the Gevrey order of the filter function $\hat{f}(\nu)$ (which is for example equal to $1$ for the Gaussian filter).
This assumes normalisation of the jump operators of the form $\max_{a \in \mathcal{A}} \|A^a\| \leq 1$, access to their block encodings, access to controlled Hamiltonian simulation, and preparation oracles for the filter function $f(t)$ (where $f^a(t) = f(t)$ is taken to be the same for all $a \in \mathcal{A}$) and coherent function $g(t)$.


\subsection{Fermionic Systems and Third Quantisation}
\label{sec: third quantisation}

We consider a set of fermionic creation and annihilation operators $a_i, a_i^\dagger$, $i\in \{1,\dots, n\}$. They generate the canonical anti-commutation relations algebra, which is defined by 
\begin{align}
    \{a_i, a_j\} = \{a_i^\dagger, a_j^\dagger\} = 0\,,\quad 
    \{a_i, a^\dagger_j\} = \delta_{i,j}\,,
\end{align}
where $\{a,b\}=ab+ba$ is the anticommutator.
Note that we will often use $1$ for the identity operator when the interpretation is obvious from the context.
The space of quantum states is the Fock space, which is spanned by $\ket{x}=(a_1^\dagger)^{x_1}\cdots (a_n^\dagger)^{x_n}\ket{0^n}$
where $x$ is a bit string of length $n$ and $\ket{0^n}$ is the vacuum, $a_i\ket{0^n}=0$ for all $i$.
An inner product is defined so that $\braket{x}{y}=\delta_{x,y}$ and $a^\dagger$ is indeed the adjoint of $a$.
We denote $N_S = \sum_{i\in S} a_i^\dagger a_i$ the number of fermions for a set $S$ of indices.
We also denote
$N_{\text{tot}}=N_{[n]}$, with $[j]=\{1,\dots,j\}$.
A fermionic operator $A$ is called even if $(-1)^{N_{\text{tot}}}A(-1)^{N_{\text{tot}}}=A$
and odd if 
$(-1)^{N_{\text{tot}}}A(-1)^{N_{\text{tot}}}=-A$.
Physical fermionic Hamiltonians are of the form
\begin{align}
    H = \sum_{I\subseteq [n]}
    h_I
\end{align}
where $h_I$ is an \textit{even} polynomial in $a_i, a_i^\dagger$ with $i\in I$, so that it is even.
The Fock space admits similarly an orthogonal decomposition in  even and odd orthogonal subspaces\,---\,a basis state $\ket{x}$ is even if $|x|$ is even, and odd if $|x|$ is odd, where $|x|$ is the Hamming weight of the bit string. 
Mathematically, the decomposition into even and odd sectors gives a $\mathbb{Z}_2$ grading. The tensor product $\ket{x}\otimes \ket{y}$ has grading (or parity) $|x|+|y|\mod 2$ and similarly for operators.
When we use the tensor product symbol for fermionic objects we implicitly assume this $\mathbb{Z}_2$ grading.

The Fock space has dimension $2^n$ and can be identified with the space of $n$ qubits.
Creation and annihilation operators can be represented on the space of qubits via the Jordan-Wigner transformation
\begin{align} \label{eq:jordan-wigner}
    N_i = \frac{1}{2}(\id + Z_i)
    \,,\quad 
    a_i = (-1)^{N_{[i-1]}} \sigma_i^{-}
    \,,\quad 
    a_i^\dagger = (-1)^{N_{[i-1]}} \sigma_i^{+}
\end{align}
where $\sigma^{\pm}
=
(X\pm i Y)/2$.
Other transformations exist, but we will not need them here.
It is also convenient to introduce another basis of the fermionic algebra given by the Majorana operators
\begin{align}
    \omega_{2j-1}=
    a_j+a_j^\dagger \,,\quad 
    \omega_{2j}=
    i(a_j-a_j^\dagger)\,,
\end{align}
for $j\in [n]$. Note that 
the $\omega_i$'s are self-adjoint and satisfy
$\{ \omega_i, \omega_j \}=2\delta_{i,j}$.\\

Next we discuss third quantisation, which is a formalism introduced in \cite{Prosen_2008} that allows one to efficiently solve Lindblad master equations for quadratic fermionic systems.
We start by introducing a Hilbert space structure $B \to |B\rangle$ to the space of operators by defining a canonical basis $\{P_\alpha\}_{\alpha \in \{0,1\}^{2n}}$ with $P_\alpha = \prod_{i=1}^{2n}\omega_i ^{\alpha_i}$. These basis vectors are orthonormal with respect to the inner product $\langle P|Q\rangle = \frac{1}{4^n} \Tr(P^\dagger Q)$. Now define $2n$ annihilation linear maps $c_j$ over this operator space by $c_j|P_\alpha\rangle = \delta_{\alpha_j,1} |\omega_j P_\alpha\rangle$. The action of their Hermitian adjoints, called creation linear maps, is readily found to be $c^\dagger _j |P_\alpha\rangle = \delta_{\alpha_j,0} |\omega_j P_\alpha\rangle$. These maps then obey the canonical anticommutation relations, $\{c_j,c_k\} = 0$ and $\{c_j,c^\dagger_k\} = \delta_{j,k}$, and so they act like canonical fermions. They will be referred to as adjoint Fermi maps, or a-fermions for short.

We can straightforwardly obtain the following relations: 
\begin{align} 
|P_\alpha \omega_j \rangle &= (-1)^{|\alpha|+\alpha_j} |\omega_j P_\alpha\rangle \\
    |\omega_j \omega_k P_\alpha \rangle - |P_\alpha \omega_j \omega_k\rangle &= 2 (c_j c^\dagger _k + c^\dagger _j c_k) |P_\alpha\rangle\\
    |\omega_j P_\alpha\rangle &= (c^\dagger _j + c_j)|P_\alpha\rangle \\
    (-1)^{\alpha_j}|\omega_j P_\alpha\rangle &= (c^\dagger _j - c_j)|P_\alpha\rangle \\
    (-1)^{|\alpha|} |P_\alpha\rangle &= \exp(i\pi N)|P_\alpha\rangle,
\end{align}
where $N = \sum_j c^\dagger _j c_j$ is the number operator. Also note that the Lindbladian, while not necessarily conserving the number of Majorana fermions, conserves their parity; and so we can restrict ourselves to the physical case of even numbers of Majorana fermions, hence recognizing that $\exp(i\pi N) = 1$ on this subspace. These properties hence allow us to rewrite the action of a quadratic fermionic Lindbladian like \begin{equation}
    \mathcal{L}^\dagger |_{+} [P_\alpha] \cong \mathcal{L}^\dagger |_{+} |P_\alpha\rangle
\end{equation} by expressing $\mathcal{L}^\dagger|_{+}$ as a quadratic form in a-fermions. Hence the spectrum of $\mathcal{L}^\dagger|_{+}$ can then be simply studied as that of a quadratic (not necessarily Hermitian) fermionic system.
We shall do this explicitly in Section \ref{sec:free-fermions}.

\begin{example}
    Consider a simple fermionic superoperator $\mathcal{L}[\rho] = a  \cdot \omega_1  \rho  \omega_2 + b \cdot  \omega_1 \omega_2 \rho  + c \cdot \rho \omega_1 \omega_2$. Associating the Hilbert space structure to this space, we can write $\mathcal{L}|\rho\rangle = a  |\omega_1  \rho  \omega_2\rangle + b | \omega_1 \omega_2 \rho\rangle  + c | \rho \omega_1 \omega_2\rangle$. Applying the second rule to the last term, we get $\mathcal{L}|\rho\rangle = a  |\omega_1  \rho  \omega_2\rangle + b | \omega_1 \omega_2 \rho\rangle  + c ( |\omega_1 \omega_2\rho\rangle - 2(c_1 c_2^\dagger + c_1 ^\dagger c_2) |\rho\rangle)$. Now applying the first rule to the first term, we obtain $\mathcal{L}|\rho\rangle = a  (-1)^{|\alpha| + \alpha_2}|\omega_1 \omega_2 \rho \rangle + (b+c) | \omega_1 \omega_2 \rho\rangle - 2c(c_1 c_2^\dagger + c_1 ^\dagger c_2) |\rho\rangle$. Restricting our view to physical states with $|\alpha|$ being even, and applying rules 3 and 4, we finally arrive at $\mathcal{L}|_+|\rho\rangle = a  (c_1^\dagger +c_1)(c_2^\dagger - c_2)|\rho \rangle + (b+c) (c_1^\dagger +c_1)(c_2^\dagger + c_2)|\rho\rangle - 2c(c_1 c_2^\dagger + c_1 ^\dagger c_2) |\rho\rangle$. Hence we see that \begin{align}
        \mathcal{L}|_+ &\cong a  (c_1^\dagger +c_1)(c_2^\dagger - c_2) + (b+c) (c_1^\dagger +c_1)(c_2^\dagger + c_2) - 2c(c_1 c_2^\dagger + c_1 ^\dagger c_2)\\
        &= a  (c_1^\dagger +c_1)(c_2^\dagger - c_2) + b (c_1^\dagger +c_1)(c_2^\dagger + c_2) + c(c_1^\dagger -c_1)(c_2^\dagger - c_2)\,.
    \end{align}
\end{example}

\begin{remark}
    The idea of Section \ref{sec:stability}, where we will prove stability of the gap of this Lindbladian under perturbation, is to view both the unperturbed part and the perturbation of the corresponding parent Hamiltonian in third quantisation. Since the unperturbed part will transform into a free fermionic Hamiltonian, we will be able to use gap stability results for free fermions \cite{hastings2017stabilityfreefermihamiltonians,De_Roeck_2018,koma2020stabilityspectralgaplattice} to show constant gap of the interacting Lindbladian.
\end{remark}


\pagebreak
\subsection{On Locality}
\label{sec: locality}

Reference \cite{ding2024efficient} shows that if $H$ is a geometrically local Hamiltonian, $A_a$ are local
and the filter function is Gaussian, then the Lindblad operators $L_a$ are quasi-local and $G$ is a sum of quasi-local terms.
Here we extend this result and discuss the locality properties of the parent Hamiltonian for fermionic systems and systems with exponentially decaying interactions.
The quasi-locality of the parent Hamiltonian will be an important ingredient in the proofs of gap stability we present below.\\

We consider a lattice $\Lambda$.
For qubit systems, an operator $O$ has support $I$ if it can be written as $O=\id_{\Lambda\setminus I}\otimes A$ for some operator $A$ acting on the space of the qubits at $I$.
This implies that operators that are 
spatially separated\,---\,i.e.~with disjoint supports\,---\,commute.
This definition is not useful for fermionic systems as odd fermionic operators anti-commute even if they are spatially separated.
We say that a fermionic operator $A$ has support $I$ if $A$ is a polynomial in $a_i, a_i^\dagger$ with $i\in I$.
We call a fermionic operator local if its support is a geometrically local region of the lattice, and
we call a fermionic operator quasi-local if it can be approximated by a local operator with an exponentially decaying error.
The following result is a generalisation of \cite[Prop.~20]{ding2024efficient} for Hamiltonians with exponentially decaying interactions.

\begin{prop}
\label{prop: locality}
Consider a Hamiltonian $H$ with interactions that decay at least exponentially, and local jump operators $A^a$ with Gaussian filter functions.
Then the parent Hamiltonian 
\eqref{eq:calH} is a sum of quasi-local terms.
\end{prop}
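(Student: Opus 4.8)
The plan is to reduce the claim to the qubit-system result \cite[Prop.~20]{ding2024efficient} (quasi-locality of the parent Hamiltonian for geometrically local $H$ and local $A^a$ with Gaussian filters) by two successive approximation steps: first replacing the exponentially-decaying interaction by a strictly geometrically-local truncation, and second translating all operator statements between the fermionic picture (support = region of the lattice on which $a_i,a_i^\dagger$ act) and the qubit picture under the Jordan--Wigner transformation \eqref{eq:jordan-wigner}. For the first step, write $H=\sum_X h_X$ with $\|h_X\|$ decaying at least exponentially in the diameter of $X$; for a cutoff length $\ell$ define $H_\ell=\sum_{\operatorname{diam}(X)\le\ell} h_X$, so that $\|H-H_\ell\|$ (or, more carefully, the relevant interaction norm controlling Lieb--Robinson constants) is exponentially small in $\ell$. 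One then shows that the jump operators $L_a$, the coherent term $G$, and hence the operators $\tilde L_a,\tilde G$ of \eqref{eq:tilde_L}--\eqref{eq:tilde_G} depend on $H$ in a Lipschitz-continuous way up to the $\operatorname{ad}_H$ (equivalently $e^{iHt}\cdot e^{-iHt}$) conjugation, so the error incurred by replacing $H$ with $H_\ell$ in all of these is again exponentially small in $\ell$.

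The core of the argument is the locality estimate for fixed geometrically-local $H$. Here I would invoke the Lieb--Robinson bound for the (now strictly local) Hamiltonian $H_\ell$: for a local jump operator $A^a$ supported near a site $x$, the Heisenberg-evolved operator $e^{iH_\ell t}A^a e^{-iH_\ell t}$ is, up to error $\exp(-c(r-v|t|))$, supported in the ball of radius $r$ about $x$. Convolving against the Gaussian filter $f^a(t)$ (resp.\ $f^a(t+i\beta/4)$ for $\tilde L_a$, which remains Gaussian-type and integrable along the shifted contour by \eqref{eq:Gaussian filter} and Lemma~\ref{lemma:contour change}) and splitting the $t$-integral at $|t|\sim r/(2v)$ gives a quasi-local approximation of $L_a$ (and $\tilde L_a$) with exponentially decaying tails, exactly as in \cite[Prop.~20]{ding2024efficient}; the same reasoning applied to $L_a^\dagger L_a$ and the coherent-function convolution handles $G$ and $\tilde G$. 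The products $\tilde L_a\rho\tilde L_a^\dagger$, $\tilde M_a\rho+\rho\tilde M_a^\dagger$, and $\tilde G\rho-\rho\tilde G^\dagger$ appearing in $\mathcal H=\Gamma_\sigma^{-1/2}\circ\mathcal L^\dagger\circ\Gamma_\sigma^{1/2}$ of \eqref{eq:calH} are then sums/products of quasi-local operators acting from the left and right on $\rho$; collecting terms by the region of the lattice they touch, and using that the a-fermion/third-quantisation or vectorised picture turns a left-right action supported on a region $I$ into an operator supported on (two copies of) $I$, shows that $\mathcal H$ is a sum of terms each quasi-local in this doubled sense, with the exponential tail inherited from the Lieb--Robinson estimate convolved with the Gaussian. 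Summing the two error contributions (interaction truncation $\sim e^{-c\ell}$ and Lieb--Robinson tail $\sim e^{-c' r}$) and optimising $\ell\sim r$ yields a single exponentially decaying bound, which is the assertion.

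The main obstacle I anticipate is controlling the imaginary-time shift $t\mapsto t+i\beta/4$ in $\tilde L_a$ and $\tilde G$: unlike the real-time evolution, $e^{iH(t+i\beta/4)}A^a e^{-iH(t+i\beta/4)}=e^{-\beta H/4}\big(e^{iHt}A^ae^{-iHt}\big)e^{\beta H/4}$ is conjugation by the \emph{unbounded} operator $e^{\beta H/4}$, so the naive Lieb--Robinson bound does not directly apply and one must either use the analyticity/contour argument of Lemma~\ref{lemma:contour change} together with the Gaussian decay of $f^a(t+i\beta/4)$ to keep the $t$-integral absolutely convergent, or invoke a complex-time Lieb--Robinson bound (e.g.\ valid for $|\operatorname{Im}t|$ below a temperature-dependent threshold, which $\beta/4$ satisfies after rescaling) to retain quasi-locality of the shifted Heisenberg operators. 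A secondary technical point is bookkeeping the fermionic $\mathbb Z_2$-graded notion of support through the Jordan--Wigner strings so that ``quasi-local'' is meant in the fermionic sense defined above rather than in the naive qubit sense; this is routine but must be stated carefully, and is where the restriction to \emph{even} $h_I$ (hence even $L_a^\dagger L_a$, $G$) is used to ensure Jordan--Wigner strings cancel outside the relevant region.
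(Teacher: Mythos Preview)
Your proposal is on the right track but takes a more circuitous route than the paper, and in one place you treat as an obstacle what is actually the key simplification.

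The paper avoids your truncation step $H\to H_\ell$ altogether by invoking directly a Lieb--Robinson bound valid for Hamiltonians with exponentially decaying interactions, \eqref{eqn:LR}, so there is no need to reduce to the geometrically-local case of \cite[Prop.~20]{ding2024efficient}. More importantly, the imaginary-time shift you flag as ``the main obstacle'' is, in the paper's framing, not an obstacle at all. After the contour change of Lemma~\ref{lemma:contour change} --- which you do cite --- the representation \eqref{eq:tilde_L} reads $\tilde L_a=\int f^a(t+i\beta/4)\,e^{iHt}A^ae^{-iHt}\,\dd t$: the Heisenberg evolution is purely \emph{real-time}, and the entire imaginary shift is absorbed into the argument of the filter. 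For the Gaussian, $f(t+i\beta/4)$ is even real and positive by \eqref{eq:Gaussian filter}, so the ordinary real-time Lieb--Robinson bound applies with no need for a complex-time variant. Likewise for $\tilde G$: while $g(t)\propto 1/\sinh(2\pi t/\beta)$ is singular at $t=0$, the shifted $g(t+i\beta/4)\propto 1/\cosh(2\pi t/\beta)$ is regular and exponentially decaying, so the convolution against the Lieb--Robinson tail converges after a standard splitting of the $t$-integral. The paper explicitly remarks that passing to the parent Hamiltonian \emph{improves} the analytic behaviour of the defining integrals; your write-up obscures this by casting the shift as a difficulty rather than a simplification.

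Your Jordan--Wigner bookkeeping is also unnecessary here: the paper works directly in the fermionic picture, noting (in the remark after the proof) that the Lieb--Robinson bound holds for fermionic $A^a$ of either parity because the Hamiltonian terms $h_I$ are even, so $[h_I,A^a]$ is supported on $I$ regardless.
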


\begin{proof}
We define local approximations of $\tilde{L}, \tilde{G}$, and $L$ 
from \eqref{eq:tilde_L} and \eqref{eq:tilde_G}
by 
\begin{align}
    \tilde{L}_a^{(r)}
    &=
    \int_{-\infty} ^{\infty} f^a (t+i\beta/4)  e^{iH_{B_r(a)}t} A^a e^{-iH_{B_r(a)}t}\ \dd t 
    \,,\\
    \tilde{G}^{(r)}
    &=
    \sum_{a\in \mathcal{A}} \tilde G_a ^{(r)} =
    \sum_{a\in \mathcal{A}}
    \int_{-\infty} ^{\infty} g (t+i\beta/4)  e^{iH_{B_r(a)}t} (L^{(r)\dagger}_a L^{(r)}_a) e^{-iH_{B_r(a)}t}\ \dd t 
    \,,\\
    L_a^{(r)}
    &=
    \int_{-\infty} ^{\infty} f^a (t)  e^{iH_{B_r(a)}t} A^a e^{-iH_{B_r(a)}t}\ \dd t 
    \,,
\end{align}
where $B_r(a)$ is a ball
of radius $r$ around the support of $A^a$ and $H_\Omega=\sum_{I\,|\,I\cap\Omega\neq\emptyset}h_I$ is the truncated Hamiltonian to region $\Omega$.

Here we shall use a weaker version of the Lieb-Robinson bound than the one for local systems \cite[Lemma 5]{Haah_2021} used in \cite[Prop.~20]{ding2024efficient}, which also holds for exponentially decaying Hamiltonian interactions, and tells us that \begin{equation}\label{eqn:LR}
    \left\| e^{iHt} A^a e^{-iHt} - e^{iH_{B_r(a)}t} A^a  e^{-iH_{B_r(a)}t}\right\| \leq \|A^a\| \min\left\{ 2, J e^{-\mu r} (e^{\mu v |t|}-1)\right\}
\end{equation}
for some constants $J$, $v$, and $\mu$. From here, we shall assume $\|A^a\| \leq 1$. Using the Gaussian filter \eqref{eq:Gaussian filter}, it follows that \begin{align}
    \| L_a - L_a^{(r)}\| \leq \int_{-\infty}^\infty |f(t)|  J e^{-\mu r} (e^{\mu v |t|}-1)\ \dd t = C e^{-\mu r}\,,
\end{align} and similarly that \begin{align}
    \| \tilde L_a - \tilde L_a^{(r)}\| \leq \int_{-\infty}^\infty |f(t+i\beta/4)|  J e^{-\mu r} (e^{\mu v |t|}-1)\ \dd t = \tilde C e^{-\mu r}\,,
\end{align} as the integrals over $|f(t)| e^{c|t|}$ and $|f(t+i\beta/4)| e^{c|t|}$ converge.

\enlargethispage{0.2cm}
Regarding the coherent term, consider the function $\hat g(\nu) = \frac{i}{2} \tanh\left( \frac{\beta \nu}{4} \right)$ without the presence of the bump function. Its representation in the time domain is then $g(t) = \frac{1}{\beta} \frac{1}{\sinh(2\pi t /\beta)}$, which decays exponentially as $|t| \to \infty$ but has a singularity at $t=0$. But the coherent term in the parent Hamiltonian then depends on $g(t+i\beta/4) = -\frac{i}{\beta} \frac{1}{\cosh(2\pi t /\beta)}$, which is no longer singular at $t= 0$. Hence we can observe that \begin{align}
    \|\tilde G_a - \tilde G_a ^{(r)}\| &\leq \int_{-\infty}^\infty |g(t+i\beta/4)| \left\| e^{iHt} L_a ^\dagger L_a e^{-iHt} - e^{iH_{B_r(a)}t} L_a^{(r)\dagger} L_a^{(r)}  e^{-iH_{B_r(a)}t}  \right\| \dd t \\
    &\leq 2 \int_{-\infty}^\infty |g(t+i\beta/4)| \left\| e^{iHt} L_a e^{-iHt} - e^{iH_{B_r(a)}t} L_a^{(r)}  e^{-iH_{B_r(a)}t}  \right\| \dd t \\
    &\leq 2 \int_{-\infty}^\infty \int_{-\infty}^\infty |g(t+i\beta/4)| \cdot |f(s)| \left\| e^{iH(t+s)} A^a e^{-iH(t+s)} - e^{iH_{B_r(a)}(t+s)} A^a  e^{-iH_{B_r(a)}(t+s)}  \right\| \dd s \dd t \\
    &\leq 2 \int_{-\infty}^\infty \int_{-\infty}^\infty |g(t+i\beta/4)| \cdot |f(s)| \min\left\{ 2, J e^{-\mu r} (e^{\mu v |t+s|}-1) \right\} \dd s \dd t \\
    &\leq 2 \int_{-\infty}^\infty |g(t+i\beta/4)| \min\left\{ 2, \tilde J e^{-\mu r} (c e^{\mu v |t|}-1) \right\} \dd t\,,
\end{align} where the last inequality follows from splitting $|t+s| \leq |t| + |s|$ and carrying out the integral over $s$, which converges since $f(s)$ is Gaussian. Now this minimum changes at $|t| = t^* = \frac{1}{\mu v} \log \left( (2 e^{\mu r}/ \tilde J  + 1)/c\right)$, which we can lower bound by $t^* \geq \frac{1}{\mu v} \log(2/(c\tilde J)) + r/v = \tilde c + r/v$. Hence we can continue the upper bound like \begin{align}
    \|\tilde G_a - \tilde G_a ^{(r)}\| &\leq 2 \int_{|t| > \tilde c + r/v} |g(t+i\beta/4)| \cdot 2\ \dd t + 2 \int_{|t| < \tilde c + r/v} |g(t+i\beta/4)| \cdot \tilde J e^{-\mu r} (c e^{\mu v |t|}-1)\ \dd t\\
    &= \frac{8}{\beta} \int_{\tilde c + r/v}^\infty \frac{1}{\cosh(2\pi t/\beta)}\ \dd t + \frac{4\tilde J}{\beta} e^{-\mu r} \int_0^{\tilde c + r/v}  \frac{1}{\cosh(2\pi t/\beta)} (c e^{\mu v t} -1)\ \dd t\\
    &\leq \frac{16}{\beta} \int_{\tilde c + r/v}^\infty e^{-2\pi t/\beta}\ \dd t + \frac{8\tilde J}{\beta} e^{-\mu r} \int_0^{\tilde c + r/v}  e^{-2\pi t/\beta} (c e^{\mu v t} -1)\ \dd t\\
    &= \frac{8}{\pi} e^{-2\pi(\tilde c + r/v)/\beta} + \frac{4\tilde J}{\pi} e^{-\mu r} \left(e^{-2\pi(\tilde c + r/v)/\beta} -1\right) - \frac{8 c \tilde J}{2\pi - \beta\mu v} \left(e^{-2\pi r /(v\beta) + \tilde c (\mu v - 2\pi /\beta)}-e^{-\mu r}\right)\,,
\end{align} which is indeed exponentially decaying in $r$, proving that $\tilde G_a$ is quasi-local, and hence that $\tilde G$ is a sum of quasi-local terms. Altogether this shows that the parent Hamiltonian $\mathcal H$ is a sum of quasi-local terms.
\end{proof}

One can observe that, since the transformation to the parent Hamiltonian improves the decay of the filter functions and gets rid of the singularity of the coherent function $g(t)$ appearing in the defining integrals, the parent Hamiltonian is actually a better behaved and a more natural object than the Lindbladian itself; even though it contains terms of the form $\sigma_\beta ^{-1/4} O \sigma_\beta ^{1/4}$, which are generally non-local and can grow to infinite size even for a finite $\beta$ \cite{Avdoshkin_2020}, as we do not have Lieb-Robinson bounds for imaginary/Euclidean time evolution.\\

Note that the Lieb-Robinson bound, which follows from the bound on the commutator with the Hamiltonian terms, is true in our fermionic setting independently of whether 
$A_a$  is even or odd in the number of fermions, since the constituent Hamiltonian terms are always even, so that \eqref{eqn:LR} still holds.
We refer to \cite{nachtergaele2018lieb} for more on Lieb-Robinson bounds and locality for fermions.\\

We conclude this section with a remark on the runtime of the quantum algorithm that simulates the Lindbladian dynamics. Note that if we take local jump operators we have $|\mathcal{A}| = \Omega(n)$. This is due to
quasi-locality of $L_a$'s and $G$, and the irreducibility criterion for uniqueness of the stationary state $\sigma_\beta$ discussed in section \ref{sec:Quantum Gibbs sampling} that requires the $L_a$'s to span the whole operator algebra.
This implies that the runtime of Eq.~\eqref{eq:runtime_quantum} is lower bounded by $\Omega(n^2)$ even before considering the mixing time.


\subsection{Fermi-Hubbard Model}

As mentioned in the overview (Section~\ref{sec:overview}), this work is concerned with the applicability of a particular QGS to fermionic systems, and specifically to the Fermi-Hubbard model. In its original form, it consists of fermions on a $D$-dimensional lattice and is governed by the Hamiltonian
\begin{equation}
    \label{eq:Hamiltonian FH}
    H_\mathrm{FH} = -t \sum_{\langle i, j \rangle, \sigma} \left(a_{i,\sigma}^\dagger a_{j,\sigma} + a_{j,\sigma}^\dagger a_{i,\sigma}\right) + U \sum_i a_{i,\uparrow}^\dagger a_{i,\uparrow} a_{i,\downarrow}^\dagger a_{i,\downarrow},
\end{equation}
where $\langle \cdot, \cdot \rangle$ means neighbouring sites on the lattice, $\sigma \in \{\uparrow, \downarrow\}$, and $a_{i,\sigma}^{(\dagger)}$ are the usual fermionic annihilation (creation) operators on site $i$ with spin $\sigma$. The model parameters $t$ and $U$ are usually positive, though we will also consider $U < 0$ in some instances (the attractive Fermi-Hubbard model).

There is also a \emph{spinless} (sometimes called \emph{polarised}) version of this model, which removes the spin from the particles and replaces the on-site interaction with that of nearest neighbours. Its Hamiltonian is therefore
\begin{equation}
    \label{eq:Hamiltonian pFH}
    H_\mathrm{pFH} = -t \sum_{\langle i, j \rangle} \left(a_{i}^\dagger a_{j} + a_{j}^\dagger a_{i}\right) + U \sum_{\langle i, j \rangle} a_{i}^\dagger a_{i} a_{j}^\dagger a_{j}.
\end{equation}
Being less computationally demanding on classical hardware (for the same number of sites) but still exhibiting interesting behaviour, the spinless Fermi-Hubbard model is a good candidate for numerical finite-size study of the QGS considered in this work, and in Section~\ref{sec:simulations} we present results for both the spinful and spinless Fermi-Hubbard models.


\section{Analytical Results on Interacting fermions}\label{sec:fermions}

This chapter will provide the bulk of the proof for gapness of the Lindbladian $\mathcal L ^\dagger$ corresponding to weakly interacting fermionic systems, and hence for the efficiency of the quantum Gibbs state preparation. In Section \ref{sec:free-fermions}, we will explicitly calculate the gap of the Lindbladian for free fermions and express it using the third quantisation as a quadratic fermionic system. Section \ref{sec:stability} will then bound the perturbation of the Lindbladian for the interacting fermionic case, and explain how we can use the stability of free fermions to lower bound the gap. 
In Section \ref{sec:stability atomic}, we discuss the atomic limit where inter site interactions are set to zero and show that the Lindbladian gap persists also for perturbations around this limit.
Finally, Section \ref{sec:efficiency} discusses how these results on the gap translate to mixing time and algorithmic complexity of Gibbs state preparation.


\subsection{Spectrum of the Lindbladian for Free Fermions}\label{sec:free-fermions}

\begin{lemma}\label{lemma: G vanishing}
    For a free fermionic system, given by $H_0 = \boldsymbol{\omega}^T \cdot h \cdot \boldsymbol{\omega} = \sum_{i,j} \omega_i h_{ij} \omega_j$ with $h$ Hermitian and anti-symmetric, by taking the set of jump operators to be $\mathbf{A} = M \cdot \boldsymbol{\omega}$, where $M$ is a unitary matrix, and the filter functions $\hat f^a$ to be real and equal, the coherent term $G$ vanishes.
\end{lemma}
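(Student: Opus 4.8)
The plan is to reduce everything to one conservation law: that the operator $\tfrac12\sum_{a}L_a^\dagger L_a$ sitting inside the coherent term commutes with $H_0$. Granting this, $\sigma_\beta=e^{-\beta H_0}/Z$ commutes with $\sum_a L_a^\dagger L_a$, so the modular superoperator $\Delta_{\sigma_\beta}$ fixes it; hence $\log(\Delta_{\sigma_\beta}^{1/4})$ annihilates it and $G=-i\tanh\!\big(\log(\Delta_{\sigma_\beta}^{1/4})\big)\big(\tfrac12\sum_a L_a^\dagger L_a\big)=0$. Equivalently, the conservation law says that $\sum_a L_a^\dagger L_a$ is supported entirely at Bohr frequency $\nu=0$, so in $G=\tfrac i2\sum_\nu\tanh(\beta\nu/4)\,(\sum_a L_a^\dagger L_a)_\nu$ only the $\nu=0$ term survives and it vanishes because $\tanh 0=0$.

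To establish the conservation law I would first use the unitarity of $M$ — which, since the jump operators $A^a$ must be self-adjoint and the $\omega_j$ are self-adjoint, forces $M$ to be real orthogonal — together with the fact that the filter is a single real function $\hat f$. Writing $L_a=\hat f(\operatorname{ad}_{H_0})A^a=\sum_k M_{ak}\,\tilde\omega_k$ with $\tilde\omega_k:=\hat f(\operatorname{ad}_{H_0})\omega_k$, the $a$-sum collapses by $M^\dagger M=\id$ to $\sum_a L_a^\dagger L_a=\sum_k \tilde\omega_k^\dagger\tilde\omega_k$, which no longer references $M$. Next I would exploit that $H_0=\sum_{ij}\omega_i h_{ij}\omega_j$ is quadratic: $\operatorname{ad}_{H_0}$ preserves the $2n$-dimensional span of the $\omega_j$'s and acts there as the matrix $\Theta=4h$, whose eigenvalues (the single-particle Bohr frequencies $\pm\epsilon_j$, possibly $0$) are real because $h$ is Hermitian; hence $\mathcal F:=\hat f(\Theta)$ is a well-defined matrix function, and it is Hermitian since $\hat f$ is real-valued. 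Then $\tilde\omega_k=\sum_j\mathcal F_{jk}\,\omega_j$, and using $\omega_j^\dagger=\omega_j$ one gets $\sum_k\tilde\omega_k^\dagger\tilde\omega_k=\sum_{j,l}(\mathcal F^2)_{lj}\,\omega_j\omega_l$, a quadratic form whose coefficient matrix $T=(\mathcal F^2)^{T}=\hat f(-\Theta)^2$ is a function of $\Theta$ and therefore commutes with $h$.

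Finally I would split $T$ into symmetric and antisymmetric parts: the symmetric part contributes only a scalar because $\{\omega_j,\omega_l\}=2\delta_{jl}$, while the antisymmetric part $T^{\mathrm{anti}}=\tfrac12(T-T^{T})$ is again a function of $\Theta$ and so still commutes with $h$. Invoking the standard fact that $X\mapsto\sum_{ij}X_{ij}\omega_i\omega_j$ intertwines the commutator of antisymmetric matrices with the commutator of the corresponding quadratic fermionic operators (up to an additive scalar), one concludes $[\sum_a L_a^\dagger L_a,H_0]\propto\sum_{ij}[T^{\mathrm{anti}},h]_{ij}\,\omega_i\omega_j=0$. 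The step I expect to be the main obstacle is the middle bookkeeping: pinning down the precise correspondence between $\operatorname{ad}_{H_0}$ on operators and left multiplication by $\Theta$ on $\mathrm{span}\{\omega_j\}$ (signs and numerical factors), cleanly separating the genuinely quadratic part of $\sum_a L_a^\dagger L_a$ from its scalar part, and dealing with possible zero modes $\epsilon_j=0$ (which are harmless, as they land at $\nu=0$ anyway). As a cross-check one can instead work in the normal-mode basis $H_0=\sum_j\epsilon_j(b_j^\dagger b_j-\tfrac12)$, where $\sum_a L_a^\dagger L_a=2\sum_j\big(\hat f(\epsilon_j)^2\,b_jb_j^\dagger+\hat f(-\epsilon_j)^2\,b_j^\dagger b_j\big)$ manifestly commutes with $H_0$; the key input there is that the Bogoliubov transformation from the $\omega$'s to the $b$'s is real orthogonal, so the ``anomalous'' overlaps $\sum_k u_{kp}u_{kq}$ vanish and the ``normal'' overlaps $\sum_k\overline{u_{kp}}u_{kq}$ reduce to $2\delta_{pq}$.
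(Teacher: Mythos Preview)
Your proposal is correct and follows essentially the same route as the paper: reduce $\sum_a L_a^\dagger L_a$ via $M^\dagger M=\id$ to a Majorana quadratic form whose coefficient matrix is a function of $h$, split off the scalar part, observe that the antisymmetric remainder commutes with $h$ so the operator is an integral of motion supported at Bohr frequency $\nu=0$, whence $G\propto\hat g(0)=0$. The only discrepancy is the sign in $\operatorname{ad}_{H_0}|_{\mathrm{span}\{\omega_j\}}=-4h$ (the paper's convention) versus your $\Theta=4h$, which is exactly the bookkeeping you flagged and is immaterial to the vanishing of $G$; the paper additionally writes the antisymmetric part explicitly as $q(4h)^2\sinh(2\beta h)$, but that is for later use and not needed for this lemma.
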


\begin{proof}
    Note that on the space $\mathcal{S} = \operatorname{span}\{\omega_a\}$, we have that $\operatorname{ad}_{H_0} =_{\mathcal S} -4h$. Hence the time-evolved jump operators are given by \begin{equation}
        \mathbf{A}(t) = e^{iH_0 t} \mathbf{A} e^{-iH_0 t} = e^{iH_0 t} M \cdot \boldsymbol{\omega} e^{-iH_0 t} = M \cdot e^{-4iht} \cdot  \boldsymbol{\omega}\,,
    \end{equation} and the Lindblad operators are then just \begin{equation}
        \mathbf{L} = M \cdot \int_{-\infty}^\infty f(t) e^{-4iht}\ \dd t \cdot \boldsymbol{\omega} = M \cdot \hat{f} (-4h) \cdot \boldsymbol{\omega}\,.
    \end{equation} Hence we get that \begin{align}
        \sum_{a\in \mathcal{A}} L^\dagger _a L_a = \mathbf{L}^\dagger \cdot \mathbf{L} 
        &= \boldsymbol{\omega}^T \cdot \hat{f} (-4h) \cdot M^\dagger \cdot M \cdot \hat{f} (-4h) \cdot \boldsymbol{\omega} \\
        &= \boldsymbol{\omega}^T \cdot [\hat{f} (-4h)]^2 \cdot \boldsymbol{\omega} \\
        &= \boldsymbol{\omega}^T \cdot \left( \frac{[\hat{f} (-4h)]^2 - [\hat{f} (4h)]^2}{2} + \operatorname{diag}\left([\hat{f} (-4h)]^2 \right)\right) \cdot \boldsymbol{\omega} \\
        &= \boldsymbol{\omega}^T \cdot q(4h)^2 \cdot \sinh(2\beta h) \cdot \boldsymbol{\omega} + \Tr\left( \hat{f}(-4h)^2 \right)\,,
    \end{align} where we have split up the matrix $[\hat{f} (-4h)]^2$ into its anti-symmetric, diagonal, and a symmetric hollow part.
    Now observe that \begin{equation}
        [(H_0)_n,\boldsymbol{\omega}^T \cdot A \cdot \boldsymbol{\omega}] = \boldsymbol{\omega}^T \cdot 4^n [(h)_n,A] \cdot \boldsymbol{\omega} 
    \end{equation} for any anti-symmetric matrix $A$, where $[(X)_n,Y]$ denotes the $n$-th iterated commutator. Hence by using the Campbell identity, we obtain \begin{align}
        \sum_{a \in \mathcal{A}} e^{iH_0 t} L^\dagger _a L_a e^{-iH_0 t} &= \boldsymbol{\omega}^T \cdot e^{4ith} \cdot q(4h)^2 \cdot \sinh(2\beta h) \cdot e^{-4ith} \cdot \boldsymbol{\omega} + \Tr\left( \hat{f}(-4h)^2 \right)\\
        &= \boldsymbol{\omega}^T \cdot q(4h)^2 \cdot \sinh(2\beta h) \cdot \boldsymbol{\omega} + \Tr\left( \hat{f}(-4h)^2 \right)\,,
    \end{align} independent of $t$, proving that $\sum_{a\in \mathcal{A}} L^\dagger _a L_a$ is an integral of motion under $H_0$, which means that \begin{equation}G = \int_{-\infty}^\infty g(t) \cdot \sum_{a \in \mathcal{A}} e^{iH_0 t} L^\dagger _a L_a e^{-iH_0 t}\ \dd t = \int_{-\infty}^\infty g(t) \cdot \sum_{a \in \mathcal{A}} L^\dagger _a L_a\ \dd t \propto \hat{g}(0) = 0\,,\end{equation}
    meaning that the coherent term vanishes.
\end{proof}

\begin{prop}\label{prop - spectrum of free fermionic Lindbladian}
    The Lindbladian $\mathcal{L}^\dagger_0$ corresponding to the free fermionic Hamiltonian $H_0$ with the set of jump operators $\{\omega_a\}_{a=1}^{2n}$ and equal real filter functions $\hat f^a(\nu) = \hat f(\nu) = q (\nu) e^{-\beta \nu /4}$ has spectral gap\footnote{Here by spectral gap we mean the gap between the highest and second highest eigenvalue of the Lindbladian, i.e. the one that bounds the mixing time; though this will turn out to be the same gap as between the lowest and second lowest eigenvalue.} given by \begin{equation}
        \Delta_0 = 2 \cdot \min_i q(4\epsilon_i) ^2 \cosh(2\beta \epsilon_i),
    \end{equation} where $\epsilon_i \in \operatorname{spec}(h)$ are the eigenvalues of the single particle Hamiltonian $h$.
\end{prop}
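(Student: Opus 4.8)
The plan is to diagonalise the free-fermionic Lindbladian $\mathcal{L}_0^\dagger$ explicitly in the third-quantisation formalism and read off the gap from the resulting quadratic a-fermion form. First, by Lemma \ref{lemma: G vanishing} the coherent term $G$ vanishes for this choice of Majorana jump operators and equal real filter functions, so $\mathcal{L}_0^\dagger$ is purely dissipative: $\mathcal{L}_0^\dagger[\rho] = \sum_a (L_a \rho L_a^\dagger - \tfrac12\{L_a^\dagger L_a, \rho\})$ with $\mathbf{L} = M \cdot \hat f(-4h) \cdot \boldsymbol{\omega}$. Since $M$ is unitary, the bilinear structure $\sum_a L_a \otimes \overline{L_a}$ (and $\sum_a L_a^\dagger L_a$) depends on $M$ only through $M^\dagger M = \id$, so without loss of generality one may diagonalise $h$ by an orthogonal rotation of the Majoranas and work mode-by-mode. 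Concretely I would pass to the normal-mode basis in which $h = \bigoplus_i \epsilon_i \left(\begin{smallmatrix}0 & 1\\ -1 & 0\end{smallmatrix}\right)$, so the problem decouples into $n$ independent two-Majorana blocks, each governed by a single-mode dissipative Lindbladian with rates fixed by $\hat f(\pm 4\epsilon_i)$.

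Next I would translate each block into a-fermions using the dictionary from Section \ref{sec: third quantisation} (the five rewriting rules and the Example), obtaining $\mathcal{L}_0^\dagger|_+ \cong \sum_i \mathcal{K}_i$ where each $\mathcal{K}_i$ is a quadratic form in the pair $c_{2i-1}, c_{2i}, c_{2i-1}^\dagger, c_{2i}^\dagger$. The key computation is to evaluate the contributions of $L_a \rho L_a^\dagger$ and $\{L_a^\dagger L_a, \rho\}$ in terms of the a-fermions; using $\hat f(\nu) = q(\nu) e^{-\beta\nu/4}$ with $q$ real and even, the combinations that appear are $\hat f(4\epsilon_i)\hat f(-4\epsilon_i) = q(4\epsilon_i)^2$ and $\hat f(\pm 4\epsilon_i)^2 = q(4\epsilon_i)^2 e^{\mp 2\beta\epsilon_i}$, whose symmetric and antisymmetric parts produce $q(4\epsilon_i)^2\cosh(2\beta\epsilon_i)$ and $q(4\epsilon_i)^2\sinh(2\beta\epsilon_i)$ respectively (mirroring the split already performed in the proof of Lemma \ref{lemma: G vanishing}). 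After bringing $\mathcal{K}_i$ to normal form one finds a non-Hermitian quadratic a-fermion operator whose "single-particle" spectrum in the even sector consists of $0$ and $-2q(4\epsilon_i)^2\cosh(2\beta\epsilon_i)$ (the overall prefactor $2$ coming from $\{\omega_j,\omega_k\}=2\delta_{jk}$); the stationary eigenvalue $0$ corresponds to $\sqrt{\sigma_\beta}$, consistent with frustration-freeness, and the second eigenvalue within block $i$ is the per-mode gap.

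Finally, since $\mathcal{L}_0^\dagger|_+ \cong \bigoplus_i \mathcal{K}_i$ acts on a tensor product, its spectrum is the sumset of the per-block spectra, so the eigenvalue of $\mathcal{L}_0^\dagger$ closest to $0$ (other than $0$ itself) is obtained by exciting the single cheapest mode, giving $\Delta_0 = 2\min_i q(4\epsilon_i)^2\cosh(2\beta\epsilon_i)$. One should also check that restricting to the even (physical) parity sector $\mathcal{L}_0^\dagger|_+$ does not discard the relevant low-lying eigenvalue — this follows because the gap is realised by a two-Majorana (hence even-a-fermion-parity) excitation — and note, as the footnote anticipates, that by the symmetry of the quadratic spectrum the gap above the top eigenvalue equals the gap above the bottom one. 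I expect the main obstacle to be the bookkeeping in the a-fermion rewriting: carefully tracking the parity signs $(-1)^{|\alpha|+\alpha_j}$ and the cross terms $c_jc_k^\dagger + c_j^\dagger c_k$ so that the non-Hermitian block $\mathcal{K}_i$ is assembled correctly and its spectrum (which, being non-normal, is not simply read off from Hermitian eigenvalues) is computed without sign or factor-of-two errors.
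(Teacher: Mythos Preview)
Your proposal is correct and follows essentially the same route as the paper: third quantisation of the quadratic generator, reduction to the single-particle data of $h$, and identification of the per-mode rates as $q(4\epsilon_i)^2\cosh(2\beta\epsilon_i)$ and $q(4\epsilon_i)^2\sinh(2\beta\epsilon_i)$.

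There is one difference worth flagging. You apply third quantisation directly to the non-Hermitian $\mathcal{L}_0^\dagger$ and then diagonalise the resulting non-normal quadratic a-fermion form block by block. The paper instead first performs the similarity transformation $\mathcal{H}_0[\rho]=\sigma_\beta^{-1/4}\mathcal{L}_0^\dagger[\sigma_\beta^{1/4}\rho\,\sigma_\beta^{1/4}]\sigma_\beta^{-1/4}$ to the \emph{Hermitian} parent Hamiltonian (Hermiticity following from KMS detailed balance), and only then third-quantises. This yields a genuinely Hermitian quadratic a-fermion Hamiltonian with dynamical matrix $D=\bigl(\begin{smallmatrix}-S & A\\ A & S\end{smallmatrix}\bigr)$, $S=q(4h)^2$, $A=q(4h)^2\sinh(2\beta h)$, whose eigenvalues $\pm q(4\epsilon_i)^2\cosh(2\beta\epsilon_i)$ can be read off by ordinary Hermitian diagonalisation. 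That move eliminates precisely the obstacle you anticipate (spectrum of a non-normal block $\mathcal{K}_i$) and also makes the symmetry of the spectrum, hence the footnoted remark about top-gap equalling bottom-gap, immediate. Your pre-diagonalisation of $h$ into $2\times 2$ blocks is equivalent to the paper's observation that $S$ and $A$ commute with $h$ and are therefore simultaneously diagonalisable; either way one lands on the same per-mode rates and hence the same $\Delta_0$.
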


\begin{proof}
    For future convenience, let's consider the similarity transformation $\mathcal{H}_0[\rho] = \sigma_{\beta} ^{-1/4}\cdot \mathcal{L}_0^\dagger [\sigma_{\beta} ^{1/4} \cdot\rho\cdot \sigma_{\beta} ^{1/4}]\cdot \sigma_{\beta} ^{-1/4}$ into the parent Hamiltonian, which is Hermitian due to the QDB condition, i.e. self-adjoint w.r.t.~the Hilbert-Schmidt inner product. As this is a similarity transformation, the spectrum of this superoperator will be the same as of $\mathcal{L} ^\dagger _0$. We can calculate that \begin{equation}
        \sigma_\beta ^{-1/4} L_a \sigma_\beta ^{1/4} = \hat{f}(-4h)_a \cdot e^{-\beta h} \cdot \boldsymbol{\omega},
    \end{equation}
    and the QDB condition also ensures $\sigma_\beta ^{-1/4} L_a \sigma_\beta ^{1/4} = \sigma_\beta ^{1/4} L^\dagger _a \sigma_\beta ^{-1/4}$. Using the calculation from Lemma \ref{lemma: G vanishing}, we can also straightforwardly evaluate \begin{align}
        \sum\limits_{a \in \mathcal A} \sigma_\beta ^{-1/4} L^\dagger _a L_a \sigma_\beta ^{1/4} &= \boldsymbol{\omega}^T \cdot q(4h)^2 \cdot \sinh(2\beta h) \cdot \boldsymbol{\omega} + \Tr\left( \hat{f}(-4h)^2 \right) \\
        &=  \sum\limits_{a \in \mathcal A} \sigma_\beta ^{1/4} L^\dagger _a L_a \sigma_\beta ^{-1/4}\,,
    \end{align} and so the parent Hamiltonian simplifies to \begin{align}
        \mathcal{H}_0[\rho] = &\sum\limits_{a \in \mathcal A}  \boldsymbol{\omega}^T \cdot q(4h)^\dagger _a \cdot \rho \cdot q(4h)_a \cdot \boldsymbol{\omega}- \frac{1}{2} \boldsymbol{\omega}^T \cdot q(4h)^2 \cdot \sinh(2\beta h)\cdot \boldsymbol{\omega}  \cdot  \rho - \rho \cdot \boldsymbol{\omega}^T \cdot \frac{1}{2} q(4h)^2 \cdot \sinh(2\beta h)\cdot \boldsymbol{\omega}\\ &- \Tr\left(q(4h)^2\cdot \cosh(2\beta h)\right) \cdot \rho \,.
    \end{align}

    Now following Prosen's third quantisation \cite{Prosen_2008}, which we reviewed in Section \ref{sec: third quantisation}, we obtain the equivalent form \begin{align}
        \mathcal{H}_0 &\cong  - \mathbf{c}^\dagger \cdot S \cdot \mathbf{c} + \mathbf{c} \cdot S \cdot \mathbf{c}^\dagger + \mathbf{c}^\dagger \cdot A \cdot \mathbf{c}^\dagger + \mathbf{c} \cdot A \cdot \mathbf{c}-\Tr\left(\sqrt{S^2 + A^2}\right)\,,
    \end{align} where we have restricted the Hilbert space to that of physical states with even numbers of Majorana fermions; and $S = q(4h)^2$, $A = q(4h)^2 \sinh(2\beta h)$, and $\{c^\dagger _i,c _i\}_{i = 1}^{2n}$ is a set of $2n$ canonical fermionic creation and annihilation operators. This is just a quadratic a-fermionic system with dynamical matrix $D = \left(\begin{matrix}
        -S & A \\ A & S
    \end{matrix}\right)$. Since both $S$ and $A$ are just functions of $h$, they are simultaneously diagonalisable with the eigenbasis of $h$, and hence $D$ is also easily diagonalisable, with eigenvalues \begin{equation}\lambda^{\pm}_i = \pm q(4\epsilon_i)^2 \cosh(2\beta \epsilon_i)\,,\end{equation} where $\epsilon_i \in \operatorname{spec}(h)$. Finally, the complete spectrum of $\mathcal{H}_0$, which is the same as that of $\mathcal{L}^\dagger_0$, is then \begin{equation}
        \operatorname{spec}(\mathcal{L}^\dagger_0) = \left\{ \sum_{i=1}^{2n}(-1 + (-1)^{x_i}) \cdot  q(4\epsilon_i)^2 \cosh(2\beta \epsilon_i) \right\}_{x \in \{0,1\}^{2n}}\,,
    \end{equation}
    and the corresponding spectral gap is \begin{equation}
        \Delta_0 = 2 \cdot \min_i q(4\epsilon_i)^2 \cosh(2\beta \epsilon_i)\,.
    \end{equation}
    This argument also assures that the Gibbs state is the unique fixed point of the dynamics generated by $\mathcal{L} ^\dagger _0$.
\end{proof}

\begin{prop}\label{prop - rapid mixing of free fermions}
     For free fermionic Hamiltonians, which have a bounded single particle Hamiltonian\,---\,meaning $\| h\| \leq \mathcal{O}(1)$\,---\, when taking the initial state to be specifically the maximally mixed state $\rho = \frac{I}{2^n}$, the Lindbladian $\mathcal{L}^\dagger _0$ mixes rapidly, i.e. in logarithmic time, with an upper bound \begin{equation}
         t_\textup{mix} \leq \frac{1}{2\Delta_0} \log\left( \frac{\tanh(2\beta\|h\|)}{2} \cdot \frac{n}{\epsilon}\right) = \frac{1}{4 \min_i q(4\epsilon_i)^2 \cosh(2\beta \epsilon_i)} \log\left( \frac{\tanh(2\beta\|h\|)}{2} \cdot \frac{n}{\epsilon}\right)\,.
     \end{equation}
\end{prop}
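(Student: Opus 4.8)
The plan is to exploit the fact that, for free fermions with Majorana jump operators and equal real filter functions, the dynamics generated by $\mathcal{L}^\dagger_0$ preserves Gaussianity: if the initial state $\rho$ is Gaussian, so is $e^{t\mathcal{L}^\dagger_0}[\rho]$ for all $t$, and the whole evolution is captured by the $2n\times 2n$ covariance matrix $\Gamma(t)_{jk} = \tfrac{i}{2}\Tr(\rho(t)[\omega_j,\omega_k])$. First I would write down the linear equation of motion for $\Gamma(t)$. From the explicit form of the Lindblad operators $\mathbf{L} = M\cdot\hat f(-4h)\cdot\boldsymbol{\omega}$ derived in the proof of Lemma \ref{lemma: G vanishing} (together with $G=0$), one reads off that $\Gamma(t)$ obeys a Lyapunov-type equation $\dot\Gamma = X^T\Gamma + \Gamma X + Y$ for suitable real matrices $X,Y$ that are functions of $h$; crucially $X$ is (block-)diagonalised in the eigenbasis of $h$ with real parts controlled by $-q(4\epsilon_i)^2\cosh(2\beta\epsilon_i)$, i.e.\ by (half of) the spectral gap $\Delta_0$ from Proposition \ref{prop - spectrum of free fermionic Lindbladian}. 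The unique fixed point is the covariance matrix $\Gamma_\beta$ of the Gibbs state $\sigma_\beta$ (which is Gaussian, being thermal for a quadratic Hamiltonian), and the deviation decays as $\|\Gamma(t)-\Gamma_\beta\| \le e^{-\Delta_0 t}\,\|\Gamma(0)-\Gamma_\beta\|$ in operator norm.

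Next I would estimate the initial deviation for the specific choice $\rho = I/2^n$, whose covariance matrix is $\Gamma(0)=0$. Diagonalising $h$ with eigenvalues $\epsilon_i$, the Gibbs covariance matrix has corresponding entries $\tanh(2\beta\epsilon_i)$ (the standard fermionic thermal occupation written in the Majorana frame), so $\|\Gamma(0)-\Gamma_\beta\| = \max_i|\tanh(2\beta\epsilon_i)| = \tanh(2\beta\|h\|)$, using monotonicity of $\tanh$ and that the relevant eigenvalue is bounded by $\|h\|$. Hence
\begin{equation}
  \|\Gamma(t)-\Gamma_\beta\| \;\le\; e^{-\Delta_0 t}\,\tanh(2\beta\|h\|)\,.
\end{equation}

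To convert this covariance-matrix bound into a trace-distance bound I would invoke the optimal trace-distance estimates for fermionic Gaussian states of \cite{bittel2025optimaltracedistanceboundsfreefermionic}, which give something of the form $\tfrac12\|\rho(t)-\sigma_\beta\|_{\Tr} \le \tfrac{c}{4}\sqrt{n}\,\|\Gamma(t)-\Gamma_\beta\|$ (or an equivalent bound linear in $\sqrt n$ times the covariance-matrix distance), so that demanding the right-hand side be $\le\epsilon$ and solving for $t$ yields
\begin{equation}
  t_{\textup{mix}}(\epsilon) \;\le\; \frac{1}{\Delta_0}\log\!\left(\frac{\tanh(2\beta\|h\|)}{2}\cdot\frac{n}{\epsilon}\right)\,,
\end{equation}
up to tracking the precise constants and the square-root-versus-linear dependence on $n$ (the stated bound has a $1/(2\Delta_0)$ prefactor, which I expect comes from the fact that the covariance matrix decays at rate $\Delta_0$ but the relevant norm bound combined with the $\sqrt n$ factor and the precise constant in \cite{bittel2025optimaltracedistanceboundsfreefermionic} reorganises into this form; I would reconcile the bookkeeping against their Theorem at the end). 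Since $\|h\|\le\mathcal{O}(1)$, the logarithm is $\mathcal{O}(\log(n/\epsilon))$, giving logarithmic mixing.

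The main obstacle is getting the constants exactly right in the last step: one must use the sharp trace-distance bound of \cite{bittel2025optimaltracedistanceboundsfreefermionic} rather than a crude Fannes-type or fidelity-based estimate (which would introduce extra polynomial factors in $n$ and spoil the clean $\log(n/\epsilon)$ form), and carefully match the normalisation conventions for covariance matrices (factors of $2$, $i$, and whether entries lie in $[-1,1]$) between that reference and the Majorana conventions fixed in Section \ref{sec: third quantisation}. A secondary technical point is justifying that the operator-norm contraction rate is exactly $\Delta_0$ and not merely $\Delta_0/2$ or $2\,\min_i q(4\epsilon_i)^2\cosh(2\beta\epsilon_i)$ with a different combinatorial factor — this requires care because the covariance matrix is bilinear in the Majoranas while the parent-Hamiltonian spectrum of Proposition \ref{prop - spectrum of free fermionic Lindbladian} was computed in third-quantised (a-fermion) language, so one should double-check that the single-mode relaxation rate seen by $\Gamma(t)$ coincides with the gap and not twice it.
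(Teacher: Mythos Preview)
Your approach is essentially the same as the paper's, but two pieces of bookkeeping are off in ways you correctly suspected.

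First, the Bittel et al.\ bound is a \emph{trace-norm}-to-trace-norm estimate, $\|\rho_1-\rho_2\|_{\Tr}\le \tfrac12\|\Gamma_1-\Gamma_2\|_{\Tr}$, not an operator-norm bound with a $\sqrt{n}$ prefactor. The paper exploits that $\Gamma(t)-\Gamma_\beta = \tfrac{i}{2}\tanh(2\beta h)\,e^{-4q(4h)^2\cosh(2\beta h)t}$ is diagonal in the eigenbasis of $h$ and evaluates the trace norm directly as $\sum_j \tfrac14|\tanh(2\beta\epsilon_j)|\,e^{-4q(4\epsilon_j)^2\cosh(2\beta\epsilon_j)t}$; the linear factor of $n$ then comes simply from summing $2n$ terms, not from any norm conversion.

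Second, the covariance-matrix decay rate is $2\Delta_0$, not $\Delta_0$. The Lyapunov equation reads $\dot\Gamma = -2q(4h)^2\cosh(2\beta h)\,\Gamma - \Gamma\cdot 2q(4h)^2\cosh(2\beta h) + Y$, so each eigenmode of the deviation decays as $e^{-4q(4\epsilon_j)^2\cosh(2\beta\epsilon_j)t}$; since $\Delta_0 = 2\min_j q(4\epsilon_j)^2\cosh(2\beta\epsilon_j)$, the slowest rate is $2\Delta_0$. This is why the final bound carries the prefactor $1/(2\Delta_0)$ rather than $1/\Delta_0$. With these two corrections your argument reproduces the paper's exactly.
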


\begin{proof}
    First, we need to recognise that when we start with a Gaussian state $\rho = \frac{I}{2^n}$ and evolve it with a quadratic Lindbladian, we will stay within the subspace of Gaussian states. These can be uniquely characterised by their covariance matrices $\Gamma_{ij}=\frac{i}{2}\Tr([\omega_i,\omega_j]\rho)$. Denoting $\Gamma(t)$ the covariance matrix of $\rho(t) = e^{t\mathcal{L}^\dagger}[\frac{I}{2^n}]$, we can follow \cite{Barthel_2022} to obtain its equation of motion generated by our Lindbladian as \begin{equation}
        \frac{\dd}{\dd t}\Gamma(t) = -2q(4h)^2\cosh(2\beta h) \cdot \Gamma(t) - \Gamma(t) \cdot 2q(4h)^2\cosh(2\beta h) + 2iq(4h)^2\sinh(2\beta h)\,.
    \end{equation} Note that the initial covariance matrix is simply $\Gamma(0) = 0$, and since this commutes with $h$, and all the terms of the equation also commute with $h$, we can expect that $\Gamma(t)$ commutes with $h$ for any $t$, and hence we can straightforwardly solve this equation with \begin{equation}
        \Gamma(t) = \frac{i}{2}\tanh(2\beta h) \cdot \left( 1 - e^{-4q(4h)^2\cosh(2\beta h)t}\right)\,.
    \end{equation} We can also check that the covariance matrix of the Gibbs state $\sigma_\beta$ is $\frac{i}{2}\tanh(2\beta h ) = \Gamma(\infty)$, and so the evolution indeed converges to the Gibbs state.

    Finally, we can use optimal trace norm bounds obtained in \cite{bittel2025optimaltracedistanceboundsfreefermionic}, which tell us that \begin{align}
        \left\|e^{t\mathcal{L}^\dagger}\left[\frac{I}{2^n}\right] - \sigma_\beta\right\|_{\Tr} &\leq \frac{1}{2}\|\Gamma(t)-\Gamma_{\sigma_\beta}\|_{\Tr}\\
        &= \frac{1}{2}\left\| \frac{i}{2}\tanh(2\beta h) \cdot  e^{-4q(4h)^2\cosh(2\beta h)t}\right\|_{\Tr}\\
        &= \sum_{j|\epsilon_j \in \operatorname{spec}(h) } \frac{1}{4}|\tanh(2\beta\epsilon_j)| \cdot  e^{-4q(4\epsilon_j)^2\cosh(2\beta \epsilon_j)t}\\
        &\leq \frac{n}{2} \tanh(2\beta\|h\|)\cdot  e^{-4\min_j q(4 \epsilon_j)^2\cosh(2\beta \epsilon_j)t}\\
        &\overset{\text{set }}{\leq \epsilon}\,.
    \end{align}
    This final inequality can be then solved for $t$ like $
        t \geq \frac{1}{4 \min_i q(4\epsilon_i)^2 \cosh(2\beta \epsilon_i)} \log\left( \frac{\tanh(2\beta\|h\|)}{2} \cdot \frac{n}{\epsilon}\right)\,,
   $ and hence we deduce that \begin{equation}
        t_\textup{mix} \leq \frac{1}{4 \min_i q(4\epsilon_i)^2 \cosh(2\beta \epsilon_i)} \log\left( \frac{\tanh(2\beta\|h\|)}{2} \cdot \frac{n}{\epsilon}\right)\,.
    \end{equation}
\end{proof}

\begin{corollary}
    For free fermionic Hamiltonians, which have a bounded single particle Hamiltonian\,---\,meaning $\| h\| \leq \mathcal{O}(1)$\,---\, the Lindbladian $\mathcal{L}^\dagger _0$ has a constant spectral gap $\Delta_0$ and is efficiently simulable.
\end{corollary}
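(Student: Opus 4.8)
The statement bundles two claims that both follow by assembling results already established, so the plan is mostly to stitch them together. For the \emph{constant gap}, I would start from the exact formula $\Delta_0 = 2\min_i q(4\epsilon_i)^2\cosh(2\beta\epsilon_i)$ of Proposition~\ref{prop - spectrum of free fermionic Lindbladian}. Every eigenvalue of $h$ satisfies $|\epsilon_i|\le\|h\|=\mathcal{O}(1)$, so the minimum runs over points of the compact interval $[-\|h\|,\|h\|]$, on which $\nu\mapsto q(4\nu)^2\cosh(2\beta\nu)$ is continuous and strictly positive (for the Gaussian filter \eqref{eq:Gaussian filter} one computes $q(\nu)=e^{-\beta^2\nu^2/8}>0$ explicitly). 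Hence this function attains a strictly positive minimum that depends only on $\beta$ and $\|h\|$ and \emph{not} on $n$; for the Gaussian choice this yields the clean bound $\Delta_0\ge 2e^{-4\beta^2\|h\|^2}$. Either way $\Delta_0=\Theta(1)$ in the system size.

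For \emph{efficient simulability} I would combine the gap with a generic gap-to-mixing-time estimate and the per-step cost \eqref{eq:runtime_quantum}. The proof of Proposition~\ref{prop - spectrum of free fermionic Lindbladian} already shows $\mathcal{L}_0^\dagger$ is primitive with unique full-rank fixed point $\sigma_\beta$, which together with QDB yields $t_{\mathrm{mix}}(\epsilon)\le \Delta_0^{-1}\bigl(\tfrac12\log\|\sigma_\beta^{-1}\|_\infty+\log(1/\epsilon)\bigr)$. Since a free-fermion Hamiltonian has $\|H_0\|=\mathcal{O}(n\|h\|)=\mathcal{O}(n)$, we get $\log\|\sigma_\beta^{-1}\|_\infty=\mathcal{O}(\beta n)$ and hence $t_{\mathrm{mix}}=\mathcal{O}(n+\log(1/\epsilon))$ at constant $\beta$. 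Taking the $2n$ Majorana jump operators (with trivial block encodings) and the Gaussian filter (Gevrey order $s=1$, with the coherent term vanishing by Lemma~\ref{lemma: G vanishing}), all oracle hypotheses behind \eqref{eq:runtime_quantum} are satisfied because controlled time evolution under a free-fermion $H_0$ is efficiently implementable. Substituting $t=t_{\mathrm{mix}}$, $|\mathcal{A}|=2n$ and $s=1$ into \eqref{eq:runtime_quantum} gives gate complexity $\widetilde{\mathcal{O}}(n^3\operatorname{polylog}(1/\epsilon))$ on $\mathcal{O}(n)$ qubits, which is polynomial.

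Finally I would point out the sharper route via Proposition~\ref{prop - rapid mixing of free fermions}: initialising in the maximally mixed state gives $t_{\mathrm{mix}}=\mathcal{O}(\Delta_0^{-1}\log(n/\epsilon))=\mathcal{O}(\operatorname{polylog}(n/\epsilon))$ and hence improves the gate count to $\widetilde{\mathcal{O}}(n^2\operatorname{polylog}(1/\epsilon))$, though the cruder gap-based bound already suffices for the qualitative claim. There is no real obstacle here, since the bulk of the work lives in the two propositions; what remains is the bookkeeping check that no hidden $n$-dependence slips in through $\|h\|$ (ruled out by hypothesis), through the domain of the minimum (ruled out by $|\epsilon_i|\le\|h\|$), or through the simulation oracles (ruled out by efficiency of free-fermion Hamiltonian simulation). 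The one genuinely unfavourable feature, the exponential-in-$\beta^2\|h\|^2$ decay of $\Delta_0$ for the Gaussian filter, is temperature dependence rather than size dependence, and is therefore invisible to this statement.
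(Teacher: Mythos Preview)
Your proposal is correct and follows essentially the same approach as the paper: plug the Gaussian filter into the exact gap formula from Proposition~\ref{prop - spectrum of free fermionic Lindbladian} to get a system-size-independent lower bound, then convert to a mixing-time and runtime estimate. The only cosmetic differences are that the paper computes the gap exactly as $\Delta_0 = 2e^{-4\beta^2\|h\|^2}\cosh(2\beta\|h\|)$ (using monotonicity in $|\epsilon_i|$) rather than your slightly weaker compactness bound, and the paper goes directly to the rapid-mixing complexity $\widetilde{\mathcal{O}}(n^2)$ rather than first stating the cruder $\widetilde{\mathcal{O}}(n^3)$ gap-based route---but you cover both.
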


\begin{proof}
    Using for example the Gaussian filter function $\hat f(\nu) = e^{-(\beta\nu+1)^2 /8 + 1/8}$, which is efficiently implementable, the gap simplifies to \begin{equation}\label{eqn: free gap, Gaussian}
        \Delta_0 = 2 \cdot e^{-4\beta^2 \|h\|^2} \cosh(2\beta \|h\|),
    \end{equation} a monotonically decreasing function w.r.t.~$\|h\|$, which is hence bounded below when  $\| h\| \leq \mathcal{O}(1)$. Such a condition is assured when considering free fermionic Hamiltonians with hopping rates decaying at least polynomially, as then $\|h\|_\infty \leq \mathcal{O}(1)$. Here the induced infinity norm means the maximal absolute row sum of the matrix. The mixing time is then $t_\textup{mix} = \mathcal{O}(\log(n/\epsilon))$ and the total time complexity of the algorithm will be $\tilde{\mathcal{O}}(n^2 e^{4 \beta^2 \|h\|^2} \operatorname{polylog}(1/\epsilon))$, where $\epsilon$ is the required precision from the Gibbs state in the trace norm. The details about complexity will be discussed later in Section \ref{sec:efficiency}.
\end{proof}


\subsection{Stability of the Free Fermionic Gap Under Perturbations}\label{sec:stability}

In this section, we shall consider the Lindbladian $\mathcal{L}^\dagger _0$ corresponding to a quasi-local free fermionic Hamiltonian $H_0 = \sum_{i,j} \omega_i h_{ij} \omega_j$, and the Lindbladian $\mathcal{L}^\dagger$ corresponding to the perturbed quasi-local fermionic Hamiltonian $H = H_0 + \lambda V$. We will denote their (Hermitian) parent Hamiltonians, obtained via similarity transformations, by $\mathcal{H}_0$ and $\mathcal{H}$ respectively; and the perturbation of the parent Hamiltonians by $\mathcal{V} = \mathcal{H} - \mathcal{H}_0 $. We shall prove that $\mathcal{L}^\dagger$ remains gapped for perturbations with strength $|\lambda| \leq \lambda_\text{max}$ for some constant $\lambda_\text{max}$ by using theorems about stability of the gap for lattice fermions proved in \cite{hastings2017stabilityfreefermihamiltonians} and refined in \cite{De_Roeck_2018,koma2020stabilityspectralgaplattice}.
   
\begin{definition}[Definition 1 of \cite{hastings2017stabilityfreefermihamiltonians}]\label{def:1 Hastings}
    An operator $W$ is said to have $(K,\mu)$-decay if it can be decomposed as \begin{equation}
        W = \sum_{r \geq 1} \sum_{C \in \mathcal{C}(r)} W_C\,,
    \end{equation} where $\mathcal{C}(r)$ denotes the set of cubes with side length $r$, and $W_C$ are operators supported only on cubes $C$ such that \begin{equation}
        \max_{C \in \mathcal{C}(r)} \|W_C\| \leq K e^{-\mu r}
    \end{equation} with positive constants $K$ and $\mu$.
    \label{def: Hastings locality}
\end{definition}

\begin{definition}[Definition 2 of \cite{hastings2017stabilityfreefermihamiltonians}]
\label{def: Hastings locality for free fermions}
    An operator $B = \sum_{i,j \in \Lambda} \omega_i B_{ij} \omega_j$ is said to have a $[J,\nu]$-decay if \begin{equation}
        |B_{ij}| \leq J e^{-\nu \operatorname{dist}(i,j)}
    \end{equation} with positive constants $J$ and $\nu$, where $\operatorname{dist}(i,j)$ is the distance on $\Lambda$ in Manhattan metric.
\end{definition}

\begin{theorem}[Corollary 1 of \cite{hastings2017stabilityfreefermihamiltonians}]\label{thm: Hastings stability}
    If the Hermitian operator $\mathcal{H}_\textup{free} = \sum_{i,j \in \Lambda} \omega_i \mathfrak{H}_{ij} \omega_j$ has $[J,\nu]$-decay and a spectral gap $\Delta_0$, and the Hermitian operator $\mathcal{H}_\textup{int}$ has $(K,\mu)$-decay, then there exist positive constants $K_\textup{max}$ and $s$ independent of the system size, such that whenever $K \leq K_\textup{max}$, the gap of $\mathcal{H}_\textup{free} + \mathcal{H}_\textup{int}$ is lower bounded by $\Delta_0 - s K$.
\end{theorem}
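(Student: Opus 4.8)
Since this is stated as \emph{Corollary 1 of} \cite{hastings2017stabilityfreefermihamiltonians}, with refinements in \cite{De_Roeck_2018,koma2020stabilityspectralgaplattice}, for our purposes it suffices to invoke those references; here I outline how I would reconstruct the argument. The plan is to interpolate along the family $\mathcal{H}(\tau) = \mathcal{H}_\textup{free} + \tau\,\mathcal{H}_\textup{int}$ for $\tau \in [0,1]$ and to show that the gap above the (non-degenerate) ground state can close at most linearly, $\mathrm{gap}\big(\mathcal{H}(\tau)\big) \geq \Delta_0 - \tau\,sK$, with $s$ a finite constant independent of the system size; evaluating at $\tau=1$ then gives the statement. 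Two inputs drive this: first, the fact that $\mathcal{H}_\textup{free}$ is \emph{quadratic}, gapped, and has $[J,\nu]$-decay, so its ground state is the unique Gaussian state with exponentially clustering two-point functions; second, a quasi-adiabatic continuation (spectral flow) argument that turns the gap question into a norm bound on a dressed perturbation.

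For the first input I would diagonalise the single-particle matrix $\mathfrak{H}$ and use a Combes--Thomas estimate: the many-body gap $\Delta_0$ corresponds to a spectral gap of $\mathfrak{H}$ around the relevant single-particle energies, which forces the ground-state covariance matrix entries $\langle \omega_i \omega_j\rangle_0$ to decay like $e^{-c\,\mathrm{dist}(i,j)}$ with a rate $c$ governed by $\Delta_0$, $\nu$ and $\|\mathfrak{H}\|$. This exponential clustering is exactly what replaces the ``local topological quantum order'' hypothesis needed in the general \emph{interacting} stability theorems: because the gapped quadratic ground state is unique, there is no degenerate subspace whose local indistinguishability must be separately checked, so the perturbative machinery goes through under the weaker hypotheses stated here.

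Next I would set up the spectral flow. Writing $P_\tau$ for the ground-state projector of $\mathcal{H}(\tau)$, and using that $\mathcal{H}(\tau)$ is, uniformly in $\tau$, a sum of terms of exponentially decaying norm, Lieb--Robinson bounds give quasi-locality of the quasi-adiabatic generator $\mathcal{D}(\tau)$ obtained by filtering $\partial_\tau\mathcal{H}(\tau)=\mathcal{H}_\textup{int}$ with the standard superpolynomially decaying kernel. Conjugating by the unitary $V_\tau$ generated by $\mathcal{D}(\tau)$ maps $P_\tau \mapsto P_0$ and turns $\mathcal{H}_\textup{int}$ into a new quasi-local operator; expanding the resulting telescoping sum of local contributions, weighted by the shrinking local norms $Ke^{-\mu r}$ and by the decaying correlation functions from the previous step, I would bound $\big\|(\id - P_0)\,V_\tau^\dagger \mathcal{H}_\textup{int} V_\tau\, P_0\big\|$ by $\tau\,sK$, where $s$ is a sum of convergent geometric-type series and hence finite and independent of $|\Lambda|$. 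A first-order eigenvalue estimate (relative boundedness of the dressed perturbation with respect to the gap) then yields $\mathrm{gap}(\mathcal{H}(\tau)) \geq \Delta_0 - \tau sK$.

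The hard part will be the \emph{uniformity in system size}: one must verify that the accumulated effect of spreading local terms along the flow remains proportional to $K$ times a series that converges thanks to the exponential decay in both the perturbation and the ground-state correlations, rather than scaling with the number of lattice sites. Making the Lieb--Robinson light cone and the quasi-adiabatic kernel decay quantitative enough to close this bookkeeping is the technical core of \cite{hastings2017stabilityfreefermihamiltonians}, with \cite{De_Roeck_2018,koma2020stabilityspectralgaplattice} optimising exactly this step; we therefore take the statement as given.
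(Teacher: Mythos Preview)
The paper does not give its own proof of this statement: it is quoted verbatim as Corollary~1 of \cite{hastings2017stabilityfreefermihamiltonians} and used as a black box in the proof of Theorem~\ref{thm: main, gap}. Your proposal correctly recognises this and appropriately defers to the cited references, so in that sense there is nothing to compare and your write-up is acceptable as is.

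That said, your reconstruction of the argument differs somewhat from the route Hastings actually takes. You sketch a direct spectral-flow argument: interpolate, build the quasi-adiabatic generator from $\mathcal{H}_\textup{int}$, and bound the dressed off-diagonal part using exponential clustering of the free-fermion ground state obtained via Combes--Thomas. Hastings instead reduces the free-fermion case to his general frustration-free stability theorem (his Theorem~4, quoted later in this paper as Lemma~\ref{thm:hastings_4}): he shows that a gapped quadratic Hamiltonian with $[J,\nu]$-decay can be bounded below by a sum of commuting projectors satisfying the TQO conditions of \cite{Bravyi_2011}, and then invokes the Bravyi--Hastings--Michalakis machinery. The exponential decay of the free-fermion covariance is still the key physical input, but it enters through verifying TQO rather than through a direct bound on the spectral-flow kernel. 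Your approach is conceptually cleaner for the specific free-fermion case and would likely yield comparable constants; Hastings' route has the advantage of unifying with the general frustration-free stability framework, which is exactly how this paper later handles the atomic limit in Theorem~\ref{thm: gap stability atomic Lindbladian}.
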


\begin{lemma}\label{lemma: decay of free part}
    Assume that $H_0$ has $[J_0,\nu_0]$-decay, and that we are using the Gaussian filter function with Majorana jump operators. Then the parent Hamiltonian $\mathcal{H} _0$  of the Lindbladian $\mathcal{L}^\dagger _0$ corresponding to the free fermionic system simplifies to a free fermionic Hamiltonian with $[J,\nu]-$decay using the third quantisation.
\end{lemma}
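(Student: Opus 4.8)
The plan is to combine the explicit third-quantised form of $\mathcal{H}_0$ obtained in the proof of Proposition~\ref{prop - spectrum of free fermionic Lindbladian} with a Combes--Thomas-type estimate showing that functions of an exponentially decaying single-particle matrix again decay exponentially. The ``free fermionic'' part of the claim is already contained in Proposition~\ref{prop - spectrum of free fermionic Lindbladian}; the new content is the $[J,\nu]$-decay, which is what the bulk of the argument addresses.

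First I would recall that, in third quantisation, up to an additive constant
\[
\mathcal{H}_0 \cong -\mathbf{c}^\dagger S \mathbf{c} + \mathbf{c} S \mathbf{c}^\dagger + \mathbf{c}^\dagger A \mathbf{c}^\dagger + \mathbf{c} A \mathbf{c},\qquad S = q(4h)^2,\quad A = q(4h)^2 \sinh(2\beta h).
\]
For the Gaussian filter one computes $q(\nu)=e^{-\beta^2\nu^2/8}$, so $S = e^{-4\beta^2 h^2}$ and $A = e^{-4\beta^2 h^2}\sinh(2\beta h)$; crucially these are given by everywhere-convergent power series in $h$ — this is precisely where the Gaussian, as opposed to a compactly supported Metropolis-type, filter is used. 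Writing the quadratic form in the a-Majorana basis $\gamma_{2j-1}=c_j+c_j^\dagger$, $\gamma_{2j}=i(c_j-c_j^\dagger)$ recasts $\mathcal{H}_0$ as $\sum_{I,J}\gamma_I \mathfrak{H}_{IJ}\gamma_J$ with $\mathfrak{H}$ built blockwise from $S$ and $A$; since the a-fermion index set is a bounded-multiplicity inflation of $\Lambda$ with comparable metric, it then suffices to prove that $S$ and $A$ have exponentially decaying matrix elements over $\Lambda$, with constants independent of $n$.

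The heart of the argument is a weighting estimate. Fix $0<\mu<\nu_0$. Using the $[J_0,\nu_0]$-decay of $h$, the triangle inequality $|\operatorname{dist}(i,i_0)-\operatorname{dist}(j,i_0)|\le\operatorname{dist}(i,j)$, and the at most polynomial volume growth of $\Lambda$, Schur's test gives that the conjugated matrix $h^{(\mu)}_{ij}:=e^{\mu\operatorname{dist}(i,i_0)}\,h_{ij}\,e^{-\mu\operatorname{dist}(j,i_0)}$ has operator norm bounded by a constant $C(\mu)$ uniformly in the base point $i_0$ and in the system size (in particular $\|h\|$ is uniformly bounded). The exponential weights telescope through matrix products, $e^{\mu\operatorname{dist}(i,i_0)}(h^k)_{ij}e^{-\mu\operatorname{dist}(j,i_0)} = ((h^{(\mu)})^k)_{ij}$, hence for any $g$ with an everywhere-convergent Taylor series one gets $e^{\mu\operatorname{dist}(i,i_0)}(g(h))_{ij}e^{-\mu\operatorname{dist}(j,i_0)} = (g(h^{(\mu)}))_{ij}$. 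Choosing $i_0=j$ yields $|(g(h))_{ij}|\le e^{-\mu\operatorname{dist}(i,j)}\,\|g(h^{(\mu)})\| \le e^{-\mu\operatorname{dist}(i,j)}\,\tilde g(C(\mu))$, where $\tilde g(x)=\sum_k|g_k|x^k<\infty$ (for instance $\tilde g(x)=e^{4\beta^2 x^2}$ when $g(z)=e^{-4\beta^2 z^2}$). Applying this to $g(z)=e^{-4\beta^2 z^2}$ and $g(z)=e^{-4\beta^2 z^2}\sinh(2\beta z)$ gives $|S_{ij}|,|A_{ij}|\le J e^{-\nu\operatorname{dist}(i,j)}$ with $\nu=\mu$ and $J$ depending only on $\beta,\mu,\nu_0,J_0$; combined with the first step this shows $\mathcal{H}_0$ is a free-fermionic Hamiltonian with $[J,\nu]$-decay.

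I expect the main obstacle to be the Combes--Thomas step, namely ensuring that $\|h^{(\mu)}\|\le C(\mu)$ is genuinely \emph{uniform in $n$}; this is what forces the polynomial volume-growth property of $\Lambda$ to enter, and it is also the reason one needs the filter-induced functions to be entire rather than merely smooth. A secondary bookkeeping task is tracking the $n$-independent constants picked up when passing to the a-Majorana lattice and when re-expressing the blocks of $\mathfrak{H}$ in terms of $S$ and $A$; and noting that the scalar term $-\Tr(\sqrt{S^2+A^2})$ plays no role in locality since it merely shifts the whole spectrum.
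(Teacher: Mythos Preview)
Your proof is correct. Structurally it matches the paper's: both invoke Proposition~\ref{prop - spectrum of free fermionic Lindbladian} for the quadratic third-quantised form and then reduce the $[J,\nu]$-decay claim to showing that $S=q(4h)^2$ and $A=q(4h)^2\sinh(2\beta h)$ have exponentially decaying matrix elements. The difference lies only in how that decay is established. The paper first appeals to the general Lieb--Robinson-based quasi-locality of Proposition~\ref{prop: locality}, and then offers as an alternative a citation to an external matrix-function decay bound of Schweitzer, applied to $e^{-4\beta^2 h^2}$ and to the rewriting $A=\tfrac{e^{1/4}}{2}\bigl(e^{-4(\beta h-1/4)^2}-e^{-4(\beta h+1/4)^2}\bigr)$; that alternative, as written, is stated only for strictly short-range $h$. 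Your Combes--Thomas weighting argument is self-contained, works under the full $[J_0,\nu_0]$-decay hypothesis, and delivers the single-particle $[J,\nu]$-decay of Definition~\ref{def: Hastings locality for free fermions} directly rather than routing through many-body quasi-locality --- at the modest cost of carrying out the exponential-weight Schur-test estimate by hand rather than quoting a black-box theorem.
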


\begin{proof}
    Here we are considering the transformed operator $\mathcal{H}_0[\rho] = \sigma_{\beta,0} ^{-1/4} \cdot\mathcal{L}_0^\dagger [\sigma_{\beta,0} ^{1/4} \cdot\rho\cdot \sigma_{\beta,0} ^{1/4}] \cdot \sigma_{\beta,0} ^{-1/4}$, which is Hermitian due to the QDB condition.
    The simplification to free fermions was shown in Proposition \ref{prop - spectrum of free fermionic Lindbladian}, from which we may further define the new Majorana modes to match the Definition \ref{def: Hastings locality for free fermions}.
    The quasi-locality of the parent Hamiltonians for systems with exponentially decaying interactions was shown in Proposition \ref{prop: locality}.
    
    Alternatively, we can study the locality of $\mathcal{H}_0$ directly by considering the decay of elements of the matrices $S = q(4h)^2$ and $A = q(4h)^2 \sinh(2\beta h)$, which give its a-fermionic description, where $q(\nu) = e^{-\beta ^2 \nu^2/8}$, i.e. $q(4h)^2 = e^{-4\beta^2 h^2}$. Note that $h^2$ is a positive semi-definite, real, Hermitian matrix; which has bounded eigenvalues, as $\|h\|_\infty \leq \mathcal{O}(1)$. Hence we can use \cite[Theorem 3.1]{schweitzer2021decayboundsbernsteinfunctions} to say that \begin{equation}|S_{ij}| = \left|[q(4h)^2]_{ij}\right| \leq \exp(-\mathcal{O}(\operatorname{dist}(i,j)))\,,\end{equation} where $\operatorname{dist}(i,j)$ represents the distance on the adjacency graph of the matrix $h^2$. For short-range Hamiltonians, this then shows explicitly that $S$ is quasi-local. $A$ follows similarly as $A = \frac{e^{1/4}}{2} (e^{-4(\beta h - 1/4)^2} - e^{-4(\beta h + 1/4)^2})$, and we can apply the same theorem to these two parts separately. This shows directly that $\mathcal{H}_0$ is in the a-fermionic picture quasi-local for $(k,l)-$local Hamiltonians $H_0$.
\end{proof} 

\begin{lemma}\label{lemma: decay of interaction part}
    Assume further that the interaction term $V$ consists only of terms with even number of Majorana fermions, and that the interactions decay at least exponentially as described in Section \ref{sec: locality}.
    Then the perturbation of the parent Hamiltonian, $\mathcal{V} = \mathcal{H} - \mathcal{H}_0$, has $(K,\mu)$-decay for some constants $K$ and $\mu$, where $K$ is upper bounded by $c |\lambda|$, with $c$ being a constant independent of the system size.
\end{lemma}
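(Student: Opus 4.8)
The plan is to control the perturbation $\mathcal{V}=\mathcal{H}-\mathcal{H}_0$ of the parent Hamiltonian by tracking how the fermionic perturbation $\lambda V$ propagates through each building block of $\mathcal{H}$ in \eqref{eq:calH}: the Lindblad operators $L_a$, their "tilde" dressed versions $\tilde L_a$, the products $\tilde M_a$, and the coherent term $G$ (equivalently $\tilde G$). First I would write $\mathcal{V}=\mathcal{H}-\mathcal{H}_0$ explicitly in terms of the differences $\delta L_a=L_a-L_a^{(0)}$, $\delta \tilde L_a=\tilde L_a-\tilde L_a^{(0)}$, $\delta\tilde G=\tilde G-\tilde G^{(0)}$, where the superscript $(0)$ denotes the quantities built from $H_0$. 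Since $\mathcal{H}[\rho]=-i(\tilde G\rho-\rho\tilde G)+\sum_a(\tilde L_a\rho\tilde L_a-\tfrac12(\tilde M_a\rho+\rho\tilde M_a))$ depends bilinearly (at worst) on these operators, each term of $\mathcal{V}$ is a sum of products in which at least one factor is a difference $\delta(\cdot)$; so it suffices to (i) bound $\|\delta L_a\|$, $\|\delta\tilde L_a\|$, $\|\delta\tilde G_a\|$ by $c|\lambda|$ with a system-size-independent constant, and (ii) show that the resulting superoperator, viewed via third quantisation as an operator on the doubled space, inherits a $(K,\mu)$-decay in the sense of Definition \ref{def: Hastings locality}, with $K\le c|\lambda|$.

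For step (i), I would use the integral representations \eqref{eq:jump_op_int_def}, \eqref{eq:tilde_L}, \eqref{eq:tilde_G} together with a Duhamel/Lieb–Robinson argument: writing $e^{iHt}A^ae^{-iHt}-e^{iH_0t}A^ae^{-iH_0t}=i\lambda\int_0^t e^{iHs}[V_{\mathrm{loc}},e^{iH_0(t-s)}A^ae^{-iH_0(t-s)}]e^{-iHs}\,\dd s$ and invoking the Lieb–Robinson bound \eqref{eqn:LR} for the exponentially decaying interactions, the commutator picks up a factor $|\lambda|$ times the familiar $\min\{2,Je^{-\mu r}(e^{\mu v|t|}-1)\}$-type envelope; the Gaussian (or Gevrey) decay of $f(t)$ and of $g(t+i\beta/4)$ — the latter nonsingular as already observed in the proof of Proposition \ref{prop: locality} — makes the $t$-integral converge and produces $\|\delta L_a\|,\|\delta\tilde L_a\|,\|\delta\tilde G_a\|\le c|\lambda|$. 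Actually the linear-in-$\lambda$ dependence is cleanest if one instead bounds the $\lambda$-derivative of these operators uniformly and integrates; either route gives the constant $c$ independent of $n$ because the Lieb–Robinson constants $J,v,\mu$ and the filter-function integrals are $n$-independent. One must be a little careful that $V$ itself may be a sum of $\Omega(n)$ terms, but the even-Majorana assumption and quasi-locality mean that, localized around any fixed cube $C$, only $\mathcal{O}(1)$ of them contribute, so the per-cube norm stays $\mathcal{O}(|\lambda|)$.

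For step (ii), the point is to organize $\mathcal{V}$ as $\sum_{r\ge1}\sum_{C\in\mathcal{C}(r)}\mathcal{V}_C$ with $\max_C\|\mathcal{V}_C\|\le Ke^{-\mu r}$. Using the local approximations $\tilde L_a^{(r)}$, $\tilde G_a^{(r)}$ from the proof of Proposition \ref{prop: locality}, one telescopes $\mathcal{V}=\sum_a\big(\mathcal{V}_a^{(0)}+\sum_{r\ge1}(\mathcal{V}_a^{(r)}-\mathcal{V}_a^{(r-1)})\big)$, where $\mathcal{V}_a^{(r)}$ is the piece of $\mathcal{V}$ built from the radius-$r$ truncations; each increment is supported on a cube of side $\sim r$ around the support of $A^a$, has norm $\lesssim |\lambda|e^{-\mu r}$ by combining the single-operator bounds of step (i) with the truncation estimates of Proposition \ref{prop: locality}, and summing over the $\mathcal{O}(1)$ jump operators anchored near a given cube keeps the per-cube bound at $c|\lambda|e^{-\mu r}$. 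Finally one transports this locality statement through Prosen's third quantisation: since the map from a quadratic fermionic superoperator to an a-fermion operator is local (a Majorana on site $i$ becomes $c_i^\dagger\pm c_i$, cf.~Section \ref{sec: third quantisation}), a superoperator term supported on a cube $C$ maps to an a-fermion operator supported on the corresponding cube, so $(K,\mu)$-decay is preserved with the same $\mu$ (up to the fixed geometric factor relating the two lattices) and $K\le c|\lambda|$.

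The main obstacle I anticipate is not any single estimate but bookkeeping the quadratic (rather than linear) dependence of $\mathcal{H}$ on the dressed operators — the $\tilde L_a\rho\tilde L_a$ and $\tilde M_a=\tilde L_a^\dagger\tilde L_a$-type terms mean that $\mathcal{V}$ contains cross terms like $\delta\tilde L_a\,\rho\,\tilde L_a^{(0)}$ and $\delta\tilde M_a$, so one needs a priori $n$-independent bounds on the \emph{unperturbed} $\|\tilde L_a^{(0)}\|$, $\|\tilde G^{(0)}_a\|$ as well (these follow from $\|A^a\|\le1$, the filter-integral convergence, and boundedness of $\|h\|$), and one must check that the truncation radius needed to resolve a product $\delta\tilde L_a\cdot\tilde L^{(0)}_b$ into a single cube is still $\mathcal{O}(r)$ — which it is, because both factors are anchored near $\mathrm{supp}(A^a)$, resp.\ $\mathrm{supp}(A^b)$, and only $\mathcal{O}(1)$ pairs $(a,b)$ are mutually within a cube of side $r$. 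With that accounting in place, assembling $K\le c|\lambda|$ is routine.
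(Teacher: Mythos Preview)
Your overall strategy---bound $\delta L_a,\delta\tilde L_a,\delta\tilde G_a$ by $c|\lambda|$ via Duhamel, then telescope radius-$r$ truncations to extract $(K,\mu)$-decay---matches the paper's. The substantive difference is in how step (i) is executed. Where you invoke the Lieb--Robinson envelope \eqref{eqn:LR}, the paper instead uses Lemma~\ref{lemma:bound on perturbed evolution} (a Duhamel bound valid for complex $\alpha=\beta/4+it$) together with the \emph{explicit free-fermion solution} $e^{s\alpha H_0}\omega_a e^{-s\alpha H_0}=\sum_j(e^{-4s\alpha h})_{aj}\omega_j$, bounding the commutator by $\|e^{-4s\alpha h}\|_\infty\cdot\max_j\|[V,\omega_j]\|$; the second factor is $O(1)$ precisely because $V$ is even and quasi-local (each local even term commutes with any single Majorana outside its support), and the first is controlled via $\|h\|_\infty=O(1)$ and submultiplicativity. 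Your sentence ``the commutator picks up \ldots the familiar $\min\{2,Je^{-\mu r}(e^{\mu v|t|}-1)\}$-type envelope'' conflates two distinct errors: that envelope is the \emph{truncation} estimate used for step~(ii), not a bound on $[V,e^{iH_0t}A^ae^{-iH_0t}]$. A correct generic substitute would argue that the evolved operator has effective radius $\sim v|t|$, so only $O(|t|^D)$ terms of $V$ survive in the commutator; integrated against the Gaussian filter this converges, and it is essentially what the paper isolates as Lemma~\ref{lemma:general bound decay}---but there it only delivers $K\le c|\lambda|^\alpha$ for $\alpha<1$ after combining with the truncation tails. The linear-in-$|\lambda|$ bound claimed in Lemma~\ref{lemma: decay of interaction part} rests on the free-fermion-specific $\ell_\infty$ matrix estimates, which your proposal does not retain.
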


\begin{proof}
    Note that $\mathcal{V}$ is Hermitian, as both $\mathcal{H}$ and $\mathcal{H}_0$ are due to their respective detailed-balance conditions. Its explicit form together with detailed calculations for this proof are in Appendix \ref{appendix:detailed bounds}.

    To prove the $(c|\lambda|,\mu)$-decay of $\mathcal{V}$, we shall start by considering $\| \sigma_{\beta} ^{-1/4} L_a \sigma_{\beta} ^{1/4}  - \sigma_{\beta,0} ^{-1/4} L^0_a \sigma_{\beta,0} ^{1/4} \|$ and show it is upper bounded by $c_1|\lambda|$:
    \begin{align}
        \| \sigma_{\beta} ^{-1/4} L_a \sigma_{\beta} ^{1/4}  - \sigma_{\beta,0} ^{-1/4} L^0_a \sigma_{\beta,0} ^{1/4} \| 
        &\leq   \int_{-\infty}^\infty |f^a(t)| \left\|  e^{H(\beta/4 + it)} A^a e^{-H(\beta/4 + it)} - e^{H_0(\beta/4 + it)} A^a e^{H_0(\beta/4 + it)}  \right\|\ \dd t
    \end{align}
    Now we can use Lemma \ref{lemma:bound on perturbed evolution} to say that \begin{align}
        \left\|  e^{H(\beta/4 + it)} A^a e^{-H(\beta/4 + it)} - e^{H_0(\beta/4 + it)} A^a e^{H_0(\beta/4 + it)}  \right\|
    &\leq  |\lambda| |\beta/4 + it| e^{4|\beta/4 + it| \cdot \|h\|_\infty} \max_i \|[V,\omega_i]\|\\
    &\leq c_2 |\lambda| \cdot |\beta/4 + it|\cdot e^{c_3|\beta/4 + it|}\,,\label{eqn:bound on L - L}
    \end{align} which is independent of the system size when we assume that $V$ contains only terms with even numbers of Majorana fermions, as is required for physical Hamiltonians, and has exponentially decaying interactions. Here the second to last inequality follows from submultiplicativity of $\ell_\infty$ norm (which is given by the maximal absolute row sum of the matrix). Finally, we get that \begin{align}
        \| \sigma_{\beta} ^{-1/4} L_a \sigma_{\beta} ^{1/4}  - \sigma_{\beta,0} ^{-1/4} L^0_a \sigma_{\beta,0} ^{1/4} \| &\leq   \int_{-\infty}^\infty |f^a(t)| c_2 |\lambda| \cdot |\beta/4 + it|\cdot e^{c_3|\beta/4 + it|} \dd t = c_1 |\lambda|
    \end{align} for some constant $c_1$ independent of the system size. Hence we can bound the dissipative parts of the parent Hamiltonian like \begin{align}
        \left\| \sigma_{\beta} ^{-1/4} L_a \sigma_{\beta} ^{1/4} \otimes\overline{ \sigma_{\beta} ^{-1/4} L_a \sigma_{\beta} ^{1/4}} - \sigma_{\beta,0} ^{-1/4} L^0_a \sigma_{\beta,0} ^{1/4} \otimes \overline{\sigma_{\beta,0} ^{-1/4} L_a ^{0} \sigma_{\beta,0} ^{1/4}}\right\| &\leq 2c_1 |\lambda|\,,\\
        \left\| \sigma_{\beta} ^{-1/4} L_a ^\dagger L_a \sigma_{\beta} ^{1/4} \otimes I  - \sigma_{\beta,0} ^{-1/4} L_a ^{0\dagger} L^0_a \sigma_{\beta,0} ^{1/4} \otimes I  \right\| &\leq 2c_1 |\lambda|\,,\\
        \left\| I\otimes \overline{\sigma_{\beta} ^{1/4} L_a ^\dagger L_a \sigma_{\beta} ^{-1/4}} - I \otimes \overline{\sigma_{\beta,0} ^{1/4} L_a ^{0\dagger} L^0_a \sigma_{\beta,0} ^{-1/4}} \right\| &\leq 2c_1 |\lambda|\,.
    \end{align}

  Now looking at the coherent term, we shall split it up into quasi-local contributions $G =\sum_a G_a$ with $G_a = \int_{-\infty}^\infty g(t) e^{iHt} (L_a ^\dagger L_a) e^{-iHt}\ \dd t$. Then we similarly need to bound $\| \sigma_{\beta} ^{-1/4} G_a \sigma_{\beta} ^{1/4} -  \sigma_{\beta,0} ^{-1/4} G^0_a \sigma_{\beta,0} ^{1/4}\|$:
    \begin{align}
        \| \sigma_{\beta} ^{-1/4} &G_a \sigma_{\beta} ^{1/4} - \sigma_{\beta,0} ^{-1/4} G^0_a \sigma_{\beta,0} ^{1/4}\| \\
        &\leq \int_{-\infty}^\infty |g(t)| \left(\left\| L^\dagger_a L_a - L^{0\dagger}_a L^0 _a \right\| + \left\| e^{H(\beta/4 + it)} L^{0\dagger}_a L^0_a e^{-H(\beta/4 + it)} -  e^{H_0(\beta/4 + it)} L^{0\dagger}_a L^0_a e^{-H_0(\beta/4 + it)}  \right\|\right) \dd t\,.
    \end{align}
    Here we can again use Lemma \ref{lemma:bound on perturbed evolution} to bound
    \begin{multline} 
        \left\| e^{H(\beta/4 + it)} L^{0\dagger}_a L^0_a e^{-H(\beta/4 + it)} - e^{H_0(\beta/4 + it)} L^{0\dagger}_a L^0_a e^{-H_0(\beta/4 + it)}  \right\| \\ 
        \leq |\lambda| |\beta/4 + it| \max_{s\in [0,1]}
    \left\|\left[V, e^{s(\beta/4 + it) H_0} L^{0\dagger}_a L^0_a e^{-s(\beta/4 + it) H_0}\right]\right\|\,.
    \end{multline}
    Using the exact solution $L^0_a = \sum_i \hat{f}(-4h)_{ai} \omega_i$, we can upper bound this further like \begin{align}
    \max_{s\in [0,1]}\|[V, e^{s(\beta/4 + it) H_0} L^{0\dagger}_a L^0_a e^{-s(\beta/4 + it) H_0}]\| 
    &\leq 2  \| \hat{f}(-4h) \|_\infty ^2 \cdot e^{\beta \|h\|_\infty} \cdot w_h(t) \cdot \max_k \left\|\left[V,\omega_k\right]\right\|\\
    &\leq 2c_2 e^{c_3 \beta/4}  \cdot w_h(t) \cdot \| \hat{f}(-4h) \|_\infty ^2\,,
    \end{align} where $w_h(t)$ is system-size-independent function growing subexponentially in $t$ (as discussed in Appendix \ref{appendix:detailed bounds}).
     Note that we have $\|\hat f(-4h)\|_\infty \leq e^{2\beta^2 \|h\|_\infty ^2 + \beta \|h\|_\infty}$ due to submultiplicativity of the $\ell_\infty$ norm, and so $\|h\|_\infty = \mathcal{O}(1)$ ensures that $\|\hat f(-4h)\|_\infty = \mathcal{O}(1)$. Observe that the previous argument for bounding the conjugated expression $\left\| \sigma_{\beta} ^{-1/4} L_a ^\dagger L_a \sigma_{\beta} ^{1/4} - \sigma_{\beta,0} ^{-1/4} L_a ^{0\dagger} L^0_a \sigma_{\beta,0} ^{1/4} \right\|$ also shows that $\left\|  L_a ^\dagger L_a - L_a ^{0\dagger} L^0_a \right\| \leq c_4 |\lambda|$.
     Finally, this means that \begin{align}
         \| \sigma_{\beta} ^{-1/4} G_a \sigma_{\beta} ^{1/4} -  \sigma_{\beta,0} ^{-1/4} G^0_a \sigma_{\beta,0} ^{1/4}\| &\leq \int_{-\infty}^\infty |g(t)| \cdot \left(  c_4|\lambda| +  |\lambda| |\beta/4 + it| c_5 w_h(t) \right)\ \dd t = c_6 |\lambda|,
    \end{align} where the convergence is ensured by the decay bounds of $g(t)$ obtained in \cite[Lemma 30]{ding2024efficient}.
    
    This proves that the strength of the perturbation of the parent Hamiltonian (in the vectorised picture) is upper bounded by a constant multiple of the strength of the perturbation of the system's Hamiltonian, uniformly in system size, i.e. that $\mathcal{V}_a$, where $\mathcal{V} = \sum_{a \in \mathcal{A}} \mathcal{V}_a$, is upper bounded like $\|\mathcal{V}_a\| \leq c|\lambda|$. To match the formulation of Definition \ref{def: Hastings locality}, we need to express $\mathcal{V}_a$ as a telescoping sum like $\mathcal{V}_a = \mathcal{V}_a ^{(0)} + \sum_{r = 1}^\infty \mathcal{V}_a^{(r)} - \mathcal{V}_a^{(r-1)}$, where $\mathcal{V}_a^{(r)}$ is a truncation of $\mathcal{V}_a$ to the ball $B_r(a)$ of radius $r$ centred at $a$. This truncation then amounts to replacing $H$ by a truncated version $H_{B_r(a)}$ in all the time-evolved formulae of the involved operators. The argument for bounding $\mathcal{V}_a$ directly translates to a bound on $ \mathcal{V}_a^{(r)}$ and hence on $\varepsilon_a^{(r)} = \mathcal{V}_a^{(r)} - \mathcal{V}_a^{(r-1)}$. The quasi-locality of the parent Hamiltonians $\mathcal{H}$ and $\mathcal{H} _0$, and hence that of $\mathcal{V}$, was shown in Proposition \ref{prop: locality}, and stems from the Lieb-Robinson argument in \cite[Proposition 20]{ding2024efficient} proving quasi-locality at any temperature. As these properties are independent, they show together that $\|\varepsilon_a^{(r)}\| \leq c|\lambda| e^{-\mu r}$, and so $\mathcal{V} = \sum_{a \in \mathcal{A}} \sum_{r \geq 0} \varepsilon_a^{(r)}$ has $(c|\lambda|,\mu)$-decay (where $\varepsilon_a^{(0)} \equiv \mathcal{V}_a^{(0)}$).
\end{proof}
  
In Lemma \ref{lemma:general bound decay}, we also present a slightly weaker notion of this result, with the strength bounded by $|\lambda|^\alpha$ for an arbitrary constant $\alpha < 1$ for small enough $|\lambda|$, which works for general Hamiltonians.


\begin{theorem}\label{thm: main, gap}
    Under the assumptions of Lemmas \ref{lemma: decay of free part} and \ref{lemma: decay of interaction part},
    at any inverse temperature $\beta$, there exist positive constants $\lambda_\textup{max}$ and $d$, such that the Lindbladian $\mathcal{L}^\dagger$ corresponding to the perturbed fermionic Hamiltonian $H = H_0 + \lambda V$ has a spectral gap $\Delta$ lower bounded by $\Delta_0 - d| \lambda|$ for any $|\lambda| \leq \lambda_\textup{max}$, where the unperturbed gap $\Delta_0$ is specified in \eqref{eqn: free gap, Gaussian}; independent of system size.
\end{theorem}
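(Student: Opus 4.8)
The plan is to obtain Theorem~\ref{thm: main, gap} as a direct application of Hastings' stability result (Theorem~\ref{thm: Hastings stability}) in the parent-Hamiltonian picture, feeding in the two structural lemmas already established. The unperturbed input will be $\mathcal{H}_0$ and the perturbation $\mathcal{V} = \mathcal{H} - \mathcal{H}_0$; once we verify that $\mathcal{H}_0$ (after passing to third quantisation) is a gapped free-fermion Hamiltonian with $[J,\nu]$-decay and that $\mathcal{V}$ is Hermitian with $(K,\mu)$-decay and $K$ controlled linearly by $|\lambda|$, Theorem~\ref{thm: Hastings stability} delivers a system-size-independent threshold and slope, from which $\lambda_\textup{max}$ and $d$ are read off.

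For the free part, I would invoke Lemma~\ref{lemma: decay of free part}: in the a-fermion representation $\mathcal{H}_0$ becomes quadratic with matrices $S = q(4h)^2$ and $A = q(4h)^2\sinh(2\beta h)$, and re-expressing this quadratic form in a suitable Majorana basis puts it in the class of Definition~\ref{def: Hastings locality for free fermions}, with constants $[J,\nu]$ uniform in $n$ because $\|h\|_\infty = \mathcal{O}(1)$. Its gap is the one computed in Proposition~\ref{prop - spectrum of free fermionic Lindbladian}, namely $\Delta_0$ of \eqref{eqn: free gap, Gaussian}, and the extremal eigenvector (the vectorised $\sqrt{\sigma_{\beta,0}}$) is non-degenerate, so the ``gap'' used by Hastings coincides with ours. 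One bookkeeping point to settle first: since $\sqrt{\sigma}$ is the \emph{highest}-energy eigenvector of $\mathcal{H}$, I would apply the stability theorem to $-\mathcal{H}_0$ and $-\mathcal{V}$ (negation preserves both decay classes and the gap value), so the relevant state is a ground state as in the statement of Theorem~\ref{thm: Hastings stability}.

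For the perturbation, Lemma~\ref{lemma: decay of interaction part} already gives that $\mathcal{V}$ is Hermitian with $(K,\mu)$-decay and $K \le c|\lambda|$ for a constant $c$ independent of $n$. Plugging the two inputs into Theorem~\ref{thm: Hastings stability} yields $n$-independent constants $K_\textup{max}$ and $s$ such that for $K \le K_\textup{max}$ the gap of $\mathcal{H}_0 + \mathcal{V}$ is at least $\Delta_0 - sK$. Taking $\lambda_\textup{max} \le K_\textup{max}/c$ (and, if one wants a strictly positive gap, also $\lambda_\textup{max} < \Delta_0/(sc)$) guarantees $c|\lambda| \le K_\textup{max}$ for all $|\lambda| \le \lambda_\textup{max}$, hence the gap of $\mathcal{H}$ is at least $\Delta_0 - sc|\lambda|$; set $d = sc$. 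Finally, since $\mathcal{H}$ and $\mathcal{L}^\dagger$ are related by the invertible similarity transformation $\Gamma_{\sigma_\beta}^{\pm 1/2}$ of \eqref{eq:calH} and therefore share their spectrum, the spectral gap of the interacting Lindbladian $\mathcal{L}^\dagger$ is bounded below by $\Delta_0 - d|\lambda|$, uniformly in $n$.

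I expect the only genuinely delicate step to be checking that the a-fermion-to-Majorana relabelling of $\mathcal{H}_0$ lands it in Definition~\ref{def: Hastings locality for free fermions} with constants \emph{uniform} in the system size, and that the hypotheses of Theorem~\ref{thm: Hastings stability} (Hermiticity, non-degeneracy of the extremal eigenvalue, the precise notion of locality) are met without loss; but these are exactly what Lemma~\ref{lemma: decay of free part}, Lemma~\ref{lemma: decay of interaction part} and Proposition~\ref{prop - spectrum of free fermionic Lindbladian} were set up to supply, and the remainder is just tracking the constants $c$, $s$, $K_\textup{max}$ through the chain.
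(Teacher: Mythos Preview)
Your proposal is correct and follows essentially the same approach as the paper: pass to the parent Hamiltonian via similarity, feed Lemmas~\ref{lemma: decay of free part} and \ref{lemma: decay of interaction part} into Hastings' stability theorem (Theorem~\ref{thm: Hastings stability}), and read off $\lambda_\textup{max} = K_\textup{max}/c$ and $d = sc$. Your additional remarks on negating to turn the extremal state into a ground state and on optionally enforcing $\lambda_\textup{max} < \Delta_0/(sc)$ are sensible bookkeeping that the paper leaves implicit.
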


\begin{proof}
    We want to bound the gap of $\mathcal{L}^\dagger$, and we wish to use the results about stability of gaps of free fermionic systems under perturbation. Hence we will consider the similarity transformation that will take $\mathcal{L}^\dagger$ to $\mathcal{H} = \mathcal{H}_0 + \mathcal{V}$, where $\mathcal{H}_0$ is a Hermitian free fermionic Hamiltonian. As we have already shown, $\mathcal{H}_0$ has a gap $\Delta_0$ and $[J,\nu]$-decay, while $\mathcal{V}$ has $(K,\mu)$-decay, and hence by the stability Theorem \ref{thm: Hastings stability} there exist constants $K_\text{max}$ and $d_1$, s.t. for all $K \leq K_\text{max}$, the gap of the perturbed parent Hamiltonian, and hence the gap of the Lindbladian, is lower bounded like $\Delta \geq \Delta_0 - d_1 K$. But we also know that $K \leq c |\lambda|$, and hence there exists $\lambda_\text{max} = \frac{K_\text{max}}{c}$ such that whenever $|\lambda| \leq \lambda_\text{max}$, we also have $K \leq K_\text{max}$ and $\Delta \geq \Delta_0 - d_1 c |\lambda|$.
\end{proof}

While other filter functions (potentially in combination with other jump operators) might work significantly better in practice (see Section \ref{sec:simulations}), here we required superexponential decay of the filter function $f(t)$ in the time domain, ensuring the locality of the parent Hamiltonians for systems with exponentially decaying correlations, and the convergence of the integrals appearing in the particular bounds of the strength of the Lindbladian perturbation we use here\,---\,this lead us to use the Gaussian filter.


\subsection{Stability of the Lindbladian Gap Under Perturbations of the Atomic Limit}
\label{sec:stability atomic}

In this section, we investigate the so-called atomic limit\,---\,where interactions among different sites are absent\,---\,and its perturbations. The atomic limit of the spinful Fermi-Hubbard Hamiltonian corresponds to setting $t=0$ in the Hamiltonian of Eq.~\eqref{eq:Hamiltonian FH}:
\begin{align}    
    H_{\mathrm{atomic}} = 
    U\sum_{i=1}^n N_{i,\uparrow} N_{i,\downarrow}
    \,.
\end{align}
The following discussion can be easily generalised to any Hamiltonians that are separable in the lattice sites but for simplicity of exposition we will discuss here only
the Fermi-Hubbard model at $t=0$.
$H_{\mathrm{atomic}}$ is trivially solvable and its eigenstates
are given by electrons localized at the lattice sites.
We will now show that if we choose local fermionic jump operators, the Lindbladian and parent Hamiltonian associated to $H_{\mathrm{atomic}}$
are also separable and 
we can compute exactly their spectrum.

\begin{prop}\label{prop - spectrum of atomic Lindbladian}
    The Lindbladian $\mathcal{L}^\dagger_{\mathrm{atomic}}$ corresponding to the Hamiltonian $H_{\mathrm{atomic}}$ with the set of jump operators $\{\omega_a\}_{a=1}^{2n}$ and Gaussian filter function is gapped for any $\beta\ge 0$ and $U\in\mathbb{R}$.
\end{prop}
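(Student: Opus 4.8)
The plan is to exploit the fact that $H_{\mathrm{atomic}} = U\sum_{i=1}^n N_{i,\uparrow}N_{i,\downarrow}$ is a sum of mutually commuting, single-site terms, each acting on the two fermionic modes $(i,\uparrow)$ and $(i,\downarrow)$. First I would observe that with local Majorana jump operators $\{\omega_a\}$, each $\omega_a$ is supported on a single site $i$, so the full Lindbladian $\mathcal{L}^\dagger_{\mathrm{atomic}}$ decomposes as a sum $\mathcal{L}^\dagger_{\mathrm{atomic}} = \sum_{i=1}^n \mathcal{L}^\dagger_i$ of single-site Lindbladians that act on disjoint pairs of modes and mutually commute (after the appropriate $\mathbb{Z}_2$-graded tensor-product bookkeeping, exactly as in Section \ref{sec: third quantisation}). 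The coherent term factorizes in the same way, since $L_a^\dagger L_a$ for a site-$i$ jump operator is an even operator supported on site $i$, and the Bohr frequencies of $H_{\mathrm{atomic}}$ decompose across sites. Passing to the parent Hamiltonian via the similarity transformation $\Gamma_{\sigma_\beta}^{-1/2}$ preserves this decomposition, because $\sigma_\beta = e^{-\beta H_{\mathrm{atomic}}}/Z$ is itself a tensor product $\bigotimes_i \sigma_\beta^{(i)}$ over sites. Hence $\mathcal{H}_{\mathrm{atomic}} = \sum_i \mathcal{H}_i$ with $[\mathcal{H}_i,\mathcal{H}_j]=0$, so its spectrum is the sumset of the spectra of the $\mathcal{H}_i$, and its gap is the minimum over $i$ of the gaps of $\mathcal{H}_i$.

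Next I would reduce to a finite, $\beta$- and $U$-dependent but system-size-independent computation: each $\mathcal{H}_i$ acts on the operator space of a four-dimensional single-site Hilbert space (two modes), i.e.\ a $16$-dimensional vectorised space, and it is frustration-free with the (unique, full-rank on this site) fixed point $\sqrt{\sigma_\beta^{(i)}}$. One can either diagonalise $\mathcal{H}_i$ directly\,---\,it is block-diagonal in the eigenbasis of the commuting operators $N_{i,\uparrow}$, $N_{i,\downarrow}$ and their Bohr-frequency sectors\,---\,or invoke third quantisation restricted to these four Majorana modes as in Proposition \ref{prop - spectrum of free fermionic Lindbladian}, treating the quartic term $U N_{i,\uparrow}N_{i,\downarrow}$ as part of the data. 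The key point is that for each fixed site the gap $\Delta_i(\beta,U)$ is a strictly positive, continuous function of $(\beta,U)$: strict positivity for every $(\beta,U)$ follows from irreducibility, i.e.\ the full set of single-site jump operators $\{\omega_{2i-1},\omega_{2i},\omega_{2i+1}',\dots\}$ together with the (here vanishing or at least trivial) coherent term generates the whole single-site operator algebra, so by the criterion recalled in Section \ref{sec:Quantum Gibbs sampling} the kernel of $\mathcal{L}^\dagger_i$ is one-dimensional and spanned by $\sigma_\beta^{(i)}$. Since all sites are identical, $\Delta_i$ is in fact the same function $\Delta_{\mathrm{site}}(\beta,U)$ for all $i$, so $\mathrm{gap}(\mathcal{L}^\dagger_{\mathrm{atomic}}) = \Delta_{\mathrm{site}}(\beta,U) > 0$ for every $\beta\ge 0$ and $U\in\mathbb{R}$, independently of $n$.

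The main obstacle I anticipate is twofold. First, the fermionic $\mathbb{Z}_2$-graded tensor-product structure must be handled carefully when claiming that the single-site Lindbladians commute and that the full spectrum is the sumset: odd operators on different sites anticommute, so one has to check that the parity-preserving character of the Lindbladian (noted in Section \ref{sec: third quantisation}) and the fact that each $\mathcal{H}_i$ is \emph{even} make the graded tensor product behave like an ordinary one at the level of spectra; this is routine but needs to be stated. Second, one must rule out that $\Delta_{\mathrm{site}}(\beta,U)$ degenerates in some limit\,---\,e.g.\ as $|U|\to\infty$ with $\beta$ fixed, or $\beta\to 0$. For $\beta\to 0$ the Gibbs state tends to the maximally mixed state and the gap tends to that of the depolarising-type Lindbladian generated by the Majorana jumps, which is a fixed positive number. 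For $|U|\to\infty$ at fixed $\beta>0$, the relevant filter-function weights $q(4\epsilon)^2\cosh(2\beta\epsilon)$-type factors (now with the quartic term's Bohr frequencies $0,\pm U$) are controlled because the Gaussian filter $\hat f(\nu)=e^{-(\beta\nu+1)^2/8+1/8}$ guarantees $\hat f(\nu)\hat f(-\nu)=e^{-\beta^2\nu^2/4}>0$, so the effective dissipative rates at frequency $\pm U$ decay smoothly but never vanish at finite $U$; hence $\Delta_{\mathrm{site}}(\beta,U)>0$ for all finite $U$, as claimed. Making the $U$-dependence fully explicit (as is done for $H_0$ in Proposition \ref{prop - spectrum of free fermionic Lindbladian}) would strengthen the statement, but for the qualitative claim ``gapped for all $\beta\ge 0$, $U\in\mathbb{R}$'' the continuity-plus-irreducibility argument on the fixed four-dimensional site suffices.
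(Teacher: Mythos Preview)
Your reduction is the same as the paper's: exploit that $H_{\mathrm{atomic}}$, the Gibbs state, and the Majorana jump operators are all site-separable, so the parent Hamiltonian splits as $\mathcal{H}_{\mathrm{atomic}}=\sum_i\mathcal{H}_i$ with commuting, identical single-site pieces, and the Lindbladian gap equals the gap of one $16\times 16$ single-site problem independently of $n$. The paper carries out the single-site computation explicitly\,---\,writing down $\tilde{L}_{i,\alpha,z}$, showing $G=0$ because each $L_a^\dagger L_a$ commutes with $H_{\mathrm{atomic}}$, and then simply diagonalising the $16\times 16$ matrix numerically to exhibit the gap as a positive function of $U$ (Figure~\ref{fig:gap_spinful_t0}). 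You instead close the argument abstractly via the irreducibility criterion of Section~\ref{sec:Quantum Gibbs sampling}: since the Gaussian filter is strictly positive, the four filtered Majoranas at a site still generate the full single-site algebra, so the single-site kernel is one-dimensional and finite dimensionality forces a strictly positive gap for every finite $(\beta,U)$. This is a cleaner qualitative proof and avoids the numerical step; the price is that you do not get the explicit $U$-dependence the paper plots and uses downstream. One small caveat: your alternative suggestion to ``invoke third quantisation as in Proposition~\ref{prop - spectrum of free fermionic Lindbladian}'' does not work here, because $H_{\mathrm{atomic}}$ is quartic and the single-site Lindbladian is not quadratic in a-fermions\,---\,but this is harmless, since your actual argument rests on irreducibility, not on third quantisation.
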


\begin{proof}
For each $i\in \{1,\dots,n\}$, $\alpha\in\{\uparrow,\downarrow\}$ we define the Lindblad operators and the operators
$\tilde{L}$
associated to creation and annihilation operators:
\begin{align}
    L_{i,\alpha,-}
    &=
    \int_{-\infty}^{+\infty}
    f(t)
    e^{itH}
    a_{i,\alpha}
    e^{-itH}
    =
    \int_{-\infty}^{+\infty}
    f(t)
    e^{-it UN_{i,\bar{\alpha}}}
    a_{i,\alpha}
    =
    \hat{f}(-UN_{i,\bar{\alpha}})
    a_{i,\alpha}
    \\
    L_{i,\alpha,+}
    &=
    \int_{-\infty}^{+\infty}
    f(t)
    e^{itH}
    a^\dagger_{i,\alpha}
    e^{-itH}
    =
    \hat{f}(+UN_{i,\bar{\alpha}})
    a^\dagger_{i,\alpha}
    \,,
    \\
    \tilde{L}_{i,\alpha,-}
    &=
    \int_{-\infty}^{+\infty}
    f(t)
    e^{(\beta/4+it)H}
    a_{i,\alpha}
    e^{-(\beta/4+it)H}
    =
    q_-(N_{i,\bar{\alpha}})a_{i,\alpha}
    \,,\\ 
    \tilde{L}_{i,\alpha,+}
    &=
    \int_{-\infty}^{+\infty}
    f(t)
    e^{(\beta/4+it)H}
    a_{i,\alpha}^\dagger
    e^{-(\beta/4+it)H}
    =
    q_+(N_{i,\bar{\alpha}})a^\dagger_{i,\alpha}
    \,,
\end{align}
where $\bar{\alpha}$ is the opposite direction of the spin $\alpha$, we used $[N,a]=-a, [N,a^\dagger]=a^\dagger$ to compute the time evolution of the oscillators, and defined 
\begin{align}
    q_{\pm}(x)\equiv q(\pm U x)
    \,.
\end{align}
Recall that the function $q$ is related to the filter function as in \eqref{eq:q_fcn} and $\overline{q_+(\nu)}=q_-(\nu)$.
Note that 
\begin{align}
\tilde{L}^\dagger_{i,\alpha,+}
=
\overline{q_+(N_{i,\bar{\alpha}})}a_{i,\alpha}
=\tilde{L}_{i,\alpha,-}
\,.
\end{align}
Now we take the self-adjoint Majorana operators as jump operators:
\begin{align}
    \omega_{i,\alpha,0}
    =
    \frac{1}{\sqrt{2}}
    (a_{i,\alpha}+a_{i,\alpha}^\dagger)
    \,,\quad 
    \omega_{i,\alpha,1}
    =
    \frac{-i}{\sqrt{2}}
    (a_{i,\alpha}-a_{i,\alpha}^\dagger)
    \,,
\end{align}
and the corresponding self-adjoint operators:
\begin{align}
    \tilde{L}_{i,\alpha,0}=\frac{1}{\sqrt{2}}
    (\tilde{L}_{i,\alpha,-}+\tilde{L}_{i,\alpha,+})
    =
    \tilde{L}_{i,\alpha,0}^\dagger
    \,,\quad
    \tilde{L}_{i,\alpha,1}=
    \frac{-i}{\sqrt{2}}
    (\tilde{L}_{i,\alpha,-}-\tilde{L}_{i,\alpha,+})
    =
    \tilde{L}_{i,\alpha,1}^\dagger
\,.
\end{align}
For the operators $L_a^\dagger L_a$, we have
\begin{align}
    L_{i,\alpha,-}^\dagger L_{i,\alpha,-}
    &=
    |\hat{f}(-UN_{i,\bar{\alpha}})|^2 N_{i,\alpha}
    \,,\quad
    L_{i,\alpha,+}^\dagger L_{i,\alpha,+}
    =
    |\hat{f}(+UN_{i,\bar{\alpha}})|^2 (1-N_{i,\alpha})
    \,,\quad 
    L_{i,\alpha,+}^\dagger L_{i,\alpha,-}
    =
    L_{i,\alpha,-}^\dagger L_{i,\alpha,+}
    =
    0\,.
\end{align}
Note that $[H, L_{i,\alpha,z}^\dagger L_{i,\alpha,z}]=0$ for all indices so that only the $\nu=0$ component of $(L_{i,\alpha,z}^\dagger L_{i,\alpha,z})_\nu$ is non-zero and thus $G=0$. Further, denoted
\begin{align}
    \label{eq:F atomic}
    F(N_{i,\alpha}, N_{i,\bar{\alpha}})
    :=
    \frac{1}{2}
    (L_{i,\alpha,+}^\dagger L_{i,\alpha,+} + 
    L_{i,\alpha,-}^\dagger L_{i,\alpha,-})
    =
    \frac{1}{2}
    (
    |\hat{f}(+UN_{i,\bar{\alpha}})|^2 (1-N_{i,\alpha})
    +
    |\hat{f}(-UN_{i,\bar{\alpha}})|^2 N_{i,\alpha}
    )
    \,,
\end{align}
we have
\begin{align}
    \tilde{M}_{i,\alpha,0}
    &=
    L_{i,\alpha,0}^\dagger L_{i,\alpha,0}
    =
    \frac{1}{2}
    (L_{i,\alpha,+}^\dagger + L_{i,\alpha,-}^\dagger)
    (L_{i,\alpha,+} + L_{i,\alpha,-})
    =
    F(N_{i,\alpha}, N_{i,\bar{\alpha}})
    \\
    \tilde{M}_{i,\alpha,1}
    &=
    L_{i,\alpha,1}^\dagger L_{i,\alpha,1}
    =
    \frac{1}{2}
    (L_{i,\alpha,+}^\dagger - L_{i,\alpha,-}^\dagger)
    (L_{i,\alpha,+} - L_{i,\alpha,-})
    =
    F(N_{i,\alpha}, N_{i,\bar{\alpha}})
    \,.
\end{align}

Then the Lindbladian and parent Hamiltonian are
\begin{align}
    \mathcal{L}&=
    \sum_{i=1}^n\sum_{\alpha\in\{\uparrow,\downarrow\}}
    \sum_{z=0}^1
    \mathcal{L}_{i,\alpha,z}
    \,,
\quad 
    \mathcal{L}_{i,\alpha,z}(\rho)
    =
    L_{i,\alpha,z}\rho L_{i,\alpha,z}^\dagger
    -
    \frac{1}{2}\{ F(N_{i,\alpha}, N_{i,\bar{\alpha}}), \rho \}
    \\
    \mathcal{H}&=
    \sum_{i=1}^n\sum_{\alpha\in\{\uparrow,\downarrow\}}
    \sum_{z=0}^1
    \mathcal{H}_{i,\alpha,z}
    \,,
    \quad
    \mathcal{H}_{i,\alpha,z}(\rho)
    =
    \widetilde{L}_{i,\alpha,z}\rho \widetilde{L}_{i,\alpha,z}
    -
    \frac{1}{2}\{ F(N_{i,\alpha}, N_{i,\bar{\alpha}}), \rho \}
    \,.
\end{align}
They are separable and therefore
exactly solvable. 
The eigenstates 
of $\mathcal{H}$ are of the form 
\begin{align}
    \rho = \bigotimes_{i=1}^n 
    \rho_i \,,
\end{align}
where $\rho_i$ is a solution to the reduced eigenproblem:
\begin{align}
\label{eq:eigenval_single_site_t0}
    \sum_{z=0}^1
    (\tilde{L}_{\uparrow,z}\sigma \tilde{L}_{\uparrow,z}
    +
    \tilde{L}_{\downarrow,z}\sigma \tilde{L}_{\downarrow,z}
    )
    -
    \{ 
    F(N_{\uparrow}, N_{\downarrow})
    +
    F(N_{\downarrow}, N_{\uparrow})
    , \sigma \}
    =
    e \sigma\,,
\end{align}
where we suppress the $i$ index for notational simplicity.
To diagonalise this, we use a Jordan-Wigner transformation where we identify $\uparrow\,\equiv 1$, $\downarrow\,\equiv 2$:
\begin{align}
    N_\alpha = \frac{1}{2}(\id + Z_\alpha)
    \,,\quad 
    a_1 = \sigma_1^-
    \,,\quad 
    a_1^\dagger = \sigma_1^+
    \,,\quad 
    a_2 = Z_1 \sigma_2^-
    \,,\quad 
    a_2^\dagger = Z_1 \sigma_1^+
    \,,
\end{align}
so that
\begin{align}
    \widetilde{L}_{\uparrow,0}
    &=
    \frac{1}{\sqrt{2}}
    (
    q_-(N_{2})a_{1}
    +
    q_+(N_{2})a^\dagger_{1})
    =
    \frac{1}{\sqrt{2}}
    (
    q_-(N_{2})\sigma^-_{1}
    +
    q_+(N_{2})\sigma^{+}_{1})
    \\
    \widetilde{L}_{\downarrow,0}
    &=
    \frac{1}{\sqrt{2}}
    (
    q_-(N_{1})a_{2}
    +
    q_+(N_{1})a^\dagger_{2})
    =
    \frac{1}{\sqrt{2}}
    (
    q_-(N_{1})Z_1\sigma^-_{2}
    +
    q_+(N_{1})Z_1\sigma^{+}_{2})
    \\
    \widetilde{L}_{\uparrow,1}
    &=
    \frac{-i}{\sqrt{2}}
    (
    q_-(N_{2})a_{1}
    -
    q_+(N_{2})a^\dagger_{1})
    =
    \frac{-i}{\sqrt{2}}
    (
    q_-(N_{2})\sigma^{-}_{1}
    -
    q_+(N_{2})\sigma^+_{1})
    \\
    \widetilde{L}_{\downarrow,1}
    &=
    \frac{-i}{\sqrt{2}}
    (
    q_-(N_{1})a_{2}
    -
    q_+(N_{1})a^\dagger_{2})
    =
    \frac{-i}{\sqrt{2}}
    (
    q_-(N_{1})Z_1\sigma^{-}_{2}
    -
    q_+(N_{1})Z_1\sigma^+_{2})
    \,.
\end{align}
The stationary state of the Lindbladian 
is $\sigma_\beta$ and its eigenvalue is $0$.
The gap then corresponds to the highest non-zero eigenvalue of 
\eqref{eq:eigenval_single_site_t0}\,---\,recall that the spectrum is non-positive.
It can be easily computed by diagonalising numerically a $16\times 16$ matrix corresponding to the vectorised reduced Hamiltonian.
We plot the result in Figure \ref{fig:gap_spinful_t0}.
We see that the gap is non-zero and goes to $0$ as $|U|\to\infty$.
This result holds for any system size.

\begin{figure}[h]
    \centering
    \begin{tikzpicture}

\definecolor{zero}{RGB}{180, 180, 180}
\definecolor{one}{RGB}{206, 104, 104}
\definecolor{two}{RGB}{231, 195, 146}
\definecolor{three}{RGB}{231, 217, 146}
\definecolor{four}{RGB}{209, 223, 141}
\definecolor{five}{RGB}{116, 185, 116}
\definecolor{six}{RGB}{126, 147, 165}
\definecolor{seven}{RGB}{146, 136, 176}
\definecolor{eight}{RGB}{166, 124, 166}

\begin{axis}[
    height=.43\columnwidth,
    width=.55\columnwidth,
    xmin=-5.1, xmax=5.1,
    ymin=0,
    xlabel={$U$}, ylabel={$\Delta$},
    axis on top,
    ]

    \addplot[mark=none, color=one, thick] table[x=U, y=Delta_nqb4_beta1] {data/usweep_t0.dat};

\end{axis}

\end{tikzpicture}
    \caption{Gap of the Lindbladian for the spinful Fermi-Hubbard model at $t=0$.
    We set $\beta=1$, fix the Gaussian filter and vary $U$.
    }
    \label{fig:gap_spinful_t0}
\end{figure}
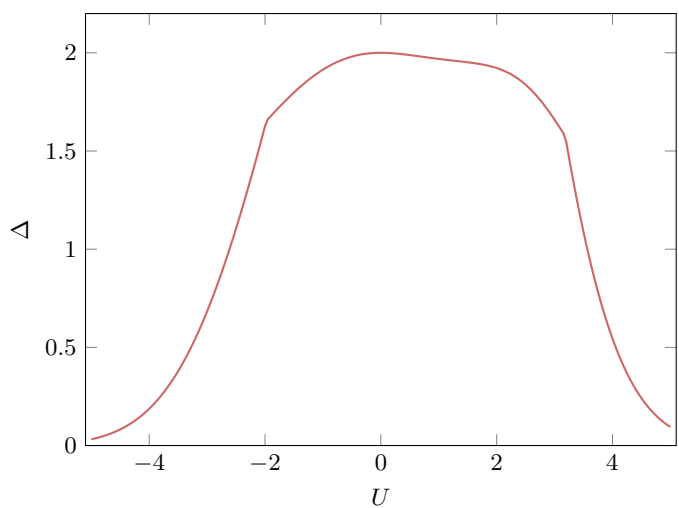

\end{proof}

The gap, however, closes for zero temperature as the next proposition shows.

\begin{prop}
    The gap of the Lindbladian $\mathcal{L}^\dagger_{\mathrm{atomic}}$ corresponding to the atomic Hamiltonian $H_{\mathrm{atomic}}$ with the set of jump operators $\{\omega_a\}_{a=1}^{2n}$ and Gaussian filter function closes as $\beta\to\infty$.
\end{prop}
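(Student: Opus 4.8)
The plan is to exploit the separability established in Proposition~\ref{prop - spectrum of atomic Lindbladian}: the parent Hamiltonian is a sum $\mathcal{H}=\sum_i\mathcal{H}^{(i)}$ of mutually commuting superoperators, each $\mathcal{H}^{(i)}$ acting only on the two spin modes at site $i$, negative semidefinite, with top eigenvalue $0$ attained at $|\sqrt{\sigma_\beta^{(i)}}\rangle$ (this is exactly the single-site reduced problem solved there). Since all the site blocks are isospectral, the full gap $\Delta$ equals the single-site gap, which by Rayleigh--Ritz is
\[
\Delta=\min_{0\neq v\,\perp\,\sqrt{\sigma_\beta^{(1)}}}\frac{\langle v,(-\mathcal{H}^{(1)})v\rangle}{\langle v,v\rangle}\,,
\]
so it is enough to exhibit a single-site trial operator whose Rayleigh quotient vanishes as $\beta\to\infty$. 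I would build it from a \emph{metastable} density matrix $\rho_\star$ on the four-dimensional single-site Fock space.

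For $U>0$, where $\sigma_\beta$ puts vanishing weight on the doubly occupied configuration, take $\rho_\star=|11\rangle\langle11|$. From the jump operators $L_{\alpha,-}=\hat{f}(-UN_{\bar{\alpha}})a_\alpha$ and $L_{\alpha,+}=\hat{f}(UN_{\bar{\alpha}})a_\alpha^\dagger$ of Proposition~\ref{prop - spectrum of atomic Lindbladian} one finds directly (and exactly, since the $+$ operators annihilate $|11\rangle$)
\[
\mathcal{L}^{\dagger(1)}\!\left[|11\rangle\langle11|\right]=\bigl|\hat{f}(-U)\bigr|^2\bigl(|01\rangle\langle01|+|10\rangle\langle10|-2\,|11\rangle\langle11|\bigr)\,,
\]
because escaping $|11\rangle$ requires annihilating a fermion whose partner mode is occupied, carrying for the Gaussian filter the amplitude $\hat{f}(-U)=e^{-\beta^2U^2/8+\beta U/4}$. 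For $U<0$ the bottleneck is instead \emph{entering} $|11\rangle$ (now the ground configuration), and I would take $\rho_\star=\tfrac13(|00\rangle\langle00|+|10\rangle\langle10|+|01\rangle\langle01|)$, the maximally mixed state on the no-double-occupancy sector: the amplitude-one transitions internal to that sector (note $\hat f(0)=1$ for all $\beta$) annihilate $\mathcal{L}^{\dagger(1)}[\rho_\star]$ exactly, leaving only the leakage $\propto|\hat{f}(U)|^2$ into $|11\rangle$. In both cases $\|\mathcal{L}^{\dagger(1)}[\rho_\star]\|_2=\mathcal{O}\!\left(e^{-\beta^2U^2/4+\beta|U|/2}\right)$, with $\|\cdot\|_2$ the Hilbert--Schmidt norm.

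To carry this smallness to the parent Hamiltonian, let $v$ be the component of $w:=\sigma_\beta^{-1/4}\rho_\star\,\sigma_\beta^{-1/4}$ orthogonal to $\sqrt{\sigma_\beta^{(1)}}$; because $\Tr(\rho_\star)=1$ one has $\langle\sqrt{\sigma_\beta^{(1)}},w\rangle=1=\|\sqrt{\sigma_\beta^{(1)}}\|_2^2$, hence $v=w-\sqrt{\sigma_\beta^{(1)}}$ and $\mathcal{H}^{(1)}[v]=\mathcal{H}^{(1)}[w]=\sigma_\beta^{-1/4}\,\mathcal{L}^{\dagger(1)}[\rho_\star]\,\sigma_\beta^{-1/4}$. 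By Cauchy--Schwarz and positivity of $-\mathcal{H}^{(1)}$ on $\{\sqrt{\sigma_\beta^{(1)}}\}^{\perp}$,
\[
\Delta\le\frac{\langle v,(-\mathcal{H}^{(1)})v\rangle}{\langle v,v\rangle}\le\frac{\|\mathcal{H}^{(1)}[v]\|_2}{\|v\|_2}\le\frac{\|\sigma_\beta^{-1/4}\|^2\,\|\mathcal{L}^{\dagger(1)}[\rho_\star]\|_2}{\|v\|_2}\,.
\]
On a single site $\|\sigma_\beta^{-1/4}\|^2=\lambda_{\min}(\sigma_\beta^{(1)})^{-1/2}=\mathcal{O}(e^{\beta|U|/2})$, and a short computation gives $\|v\|_2^2=\|w\|_2^2-1=e^{\Theta(\beta|U|)}$ (it actually \emph{grows}, since $\rho_\star$ lies on the far side of the low-temperature bottleneck). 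Hence $\Delta=\mathcal{O}\!\left(e^{\beta|U|/2-\beta^2U^2/4}\right)\to0$ for every $U\neq0$; for $U=0$ the model is free and the gap is constant, consistent with Proposition~\ref{prop - spectrum of free fermionic Lindbladian}.

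The step I expect to be the main obstacle is precisely this last bookkeeping: the dissipative matrix elements coupling the metastable sector to its complement are only \emph{Gaussian}-small in $\beta$, whereas the similarity transformation to the Hermitian parent Hamiltonian inserts conjugating factors $\sigma_\beta^{\pm1/4}$ of norm \emph{exponentially} large in $\beta$. The argument closes only because $e^{\beta^2}$ dominates $e^{\beta}$, and because choosing $\rho_\star$ on the ``wrong'' side of the bottleneck keeps its overlap with $\sqrt{\sigma_\beta}$ bounded away from $1$, so $v$ does not collapse after orthogonalisation --- a naive choice such as $|00\rangle\langle00|$ for $U<0$, not being stationary under the fast intra-sector dynamics, would fail. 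One should also confirm that $\mathcal{L}^{\dagger(1)}$ is irreducible for $U\neq0$ so that $\ker(-\mathcal{H}^{(1)})$ is one-dimensional, as is already implicit in Proposition~\ref{prop - spectrum of atomic Lindbladian}.
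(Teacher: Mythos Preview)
Your argument is correct. The single-site reduction, the explicit computation of $\mathcal{L}^{\dagger(1)}[\rho_\star]$ in both regimes, and the conversion to a Rayleigh-quotient bound for the Hermitian $\mathcal{H}^{(1)}$ all check out; in particular your observation that $\|v\|_2$ grows like $e^{\Theta(\beta|U|)}$ is exactly what keeps the bound from being vacuous after conjugation by $\sigma_\beta^{\pm1/4}$.

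The paper's route is closely related but more economical: it uses a \emph{single} trial operator, namely $\rho_\star=\id$, valid for both signs of $U$. With $G=0$ one has $\mathcal{L}^\dagger_{\mathrm{atomic}}[\id]=\sum_{i,\alpha,z}[L_{i,\alpha,z},L_{i,\alpha,z}^\dagger]$, and a direct computation gives a single-site contribution $(|\hat f(-U)|^2-|\hat f(U)|^2)\bigl(P_{10}+P_{01}-2P_{11}\bigr)$, which is $\mathcal{O}(e^{-\beta^2U^2/4+\beta|U|/2})$ because $|\hat f(\nu)|^2-|\hat f(-\nu)|^2=2e^{-\beta^2\nu^2/4}\sinh(\beta\nu/2)$ is Gaussian-small (this is what the paper means by ``the Gaussian filter is even for large $\beta$''). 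Your trial states are linear combinations of $\id$ and $P_{11}$, and the resulting $\mathcal{L}^\dagger[\rho_\star]$ is the same operator up to prefactors, so the two proofs ultimately identify the same bottleneck. What you add is an explicit variational passage from ``$\mathcal{L}^\dagger[\rho_\star]$ is small'' to ``$\Delta$ is small'' via the parent Hamiltonian---a step the paper leaves implicit---together with a cleaner physical picture (metastable sectors separated by a Gaussian-suppressed transition rate). The paper's choice buys a one-line, sign-agnostic argument; yours buys transparency and would extend more readily to models where the slow sector is not simply the complement of the equilibrium support.
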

\begin{proof}
We shall show that $\id$ becomes degenerate with $\sigma_\beta$ as $\beta\to\infty$ for any system size.
We shall use here results  and notations from the proof of Prop.~\ref{prop - spectrum of atomic Lindbladian}.
Since $G=0$, we have
\begin{align}
\mathcal{L}^\dagger_{\mathrm{atomic}}(\id)
    =    \sum_{i=1}^n
    \sum_{\alpha\in\{\uparrow,\downarrow\}}
    \sum_{z=0}^1 
    [L_{i,\alpha,z},L_{i,\alpha,z}^\dagger]
\end{align}
We know from equation  \eqref{eq:F atomic}
that
\begin{align}
    L_{i,\alpha,+}^\dagger L_{i,\alpha,+} + 
    L_{i,\alpha,-}^\dagger L_{i,\alpha,-}
    =
    |\hat{f}(+UN_{i,\bar{\alpha}})|^2 (1-N_{i,\alpha})
    +
    |\hat{f}(-UN_{i,\bar{\alpha}})|^2 N_{i,\alpha}    
\end{align}
and we also have
\begin{align}
    L_{i,\alpha,+}
    L_{i,\alpha,+}^\dagger 
    + 
    L_{i,\alpha,-}  
    L_{i,\alpha,-}^\dagger
    =
    |\hat{f}(+UN_{i,\bar{\alpha}})|^2 N_{i,\alpha}
    +
    |\hat{f}(-UN_{i,\bar{\alpha}})|^2 (1-N_{i,\alpha})   
\end{align}
These two expressions cancel so that $\mathcal{L}^\dagger_{\mathrm{atomic}}(\id)=0$ if $\hat{f}$ is even.
From \eqref{eq:Gaussian filter} we see that for large $\beta$ the Gaussian filter is indeed even, which proves the proposition.
\end{proof}

Next we discuss the robustness of the gap of $\mathcal{L}^\dagger_{\mathrm{atomic}}$
under weak interactions among sites.
This follows by realising that the associated parent Hamiltonian is geometrically local and frustration-free\,---\,that is, it is a sum of local terms and the ground state is an eigenstate of each of them\,---\,and that the perturbation to the parent Hamiltonian is quasi-local.
We can then use results about the stability of the gap for frustration-free Hamiltonians, such as \cite{Bravyi_2011,
Michalakis_2013,hastings2017stabilityfreefermihamiltonians} and also the older works on quantum perturbations of classical systems \cite{datta1996low,borgs2000low}, to prove the stability of the gap of the perturbed Lindbladian.
In particular, we can prove the following.

\begin{theorem}\label{thm: gap stability atomic Lindbladian}
Consider the interacting fermionic Hamiltonian $H=H_{\mathrm{atomic}}+\lambda V$ and assume that $V$ has interactions that decay at least exponentially.
Then at any inverse temperature $\beta$, there exist positive constants $\lambda_\textup{max}$ and $d$, such that the Lindbladian $\mathcal{L}^\dagger$ corresponding to the perturbed Hamiltonian $H$ has a spectral gap $\Delta$ lower bounded by $\Delta_\mathrm{atomic} - d| \lambda|^\alpha$ for any $|\lambda| \leq \lambda_\textup{max}$ and arbitrary positive constant $\alpha<1$, where $\Delta_\mathrm{atomic}$ is 
the gap of the Lindbladian $\mathcal{L}^\dagger_{\mathrm{atomic}}$ 
of Prop.~\ref{prop - spectrum of atomic Lindbladian}. 
\end{theorem}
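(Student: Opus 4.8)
The plan is to follow the same architecture as Theorem \ref{thm: main, gap}, replacing the free-fermion stability input with a stability result for frustration-free, geometrically local gapped Hamiltonians. First I would pass from $\mathcal{L}^\dagger$ to its Hermitian parent Hamiltonian via the usual similarity transformation, writing $\mathcal{H} = \mathcal{H}_{\mathrm{atomic}} + \mathcal{V}$ with $\mathcal{V} = \mathcal{H} - \mathcal{H}_{\mathrm{atomic}}$. Since the similarity transformation preserves the spectrum, and the relevant gap (the one bounding the mixing time) is the gap above the zero eigenvalue, it suffices to lower bound the spectral gap of $\mathcal{H}$; after flipping an overall sign this is a genuine ground-state gap problem.

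Next I would collect the structural facts about $\mathcal{H}_{\mathrm{atomic}}$ established in Proposition \ref{prop - spectrum of atomic Lindbladian} and the surrounding discussion. In the vectorised picture (two copies of the lattice) it is a sum of terms each supported on the two copies of a single site; these terms mutually commute; its extremal eigenstate is the product purified Gibbs state $\sqrt{\sigma_\beta}=\bigotimes_i\sqrt{\sigma_{\beta,i}}$, which is annihilated by every term; and by the explicit single-site diagonalisation its gap equals the system-size-independent constant $\Delta_{\mathrm{atomic}}>0$ for any finite $\beta$. Hence $\mathcal{H}_{\mathrm{atomic}}$ is a frustration-free, geometrically local, uniformly gapped Hamiltonian whose unique product ground state trivially satisfies the local topological quantum order (LTQO) hypothesis required by stability theorems.

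Then I would control the perturbation. By Proposition \ref{prop: locality}, since $V$ has at least exponentially decaying interactions, $\mathcal{V}$ is a sum of quasi-local terms, so it has $(K,\mu)$-decay in the sense of Definition \ref{def: Hastings locality}; the quantitative estimate $K \le c\,|\lambda|^\alpha$ for any fixed $\alpha<1$, all sufficiently small $|\lambda|$, and $c$ independent of $n$, is exactly the content of the general perturbation-strength bound (Lemma \ref{lemma:general bound decay}), whose proof estimates differences such as $\|\sigma_\beta^{-1/4}L_a\sigma_\beta^{1/4} - \sigma_{\beta,0}^{-1/4}L^0_a\sigma_{\beta,0}^{1/4}\|$ together with the analogous coherent-term differences, optimising over a truncation radius. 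Finally I would invoke a stability theorem for frustration-free gapped local Hamiltonians\,---\,the Bravyi--Hastings--Michalakis bound \cite{Bravyi_2011,Michalakis_2013}, or, using that $\mathcal{H}_{\mathrm{atomic}}$ is commuting with a product ground state, the older quantum-perturbation-of-classical-systems results \cite{datta1996low,borgs2000low}\,---\,in the quantitative form: there are constants $s$ and $K_{\max}$, independent of $n$, such that whenever $K \le K_{\max}$ the perturbed gap is at least $\Delta_{\mathrm{atomic}} - sK$. Combining with $K \le c|\lambda|^\alpha$ and setting $\lambda_{\max} = (K_{\max}/c)^{1/\alpha}$ and $d = sc$ yields $\Delta \ge \Delta_{\mathrm{atomic}} - d|\lambda|^\alpha$ for $|\lambda|\le\lambda_{\max}$.

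I expect the main obstacle to be the perturbation-strength estimate, not the stability theorem. Because conjugation by $\sigma_\beta^{\pm 1/4}$ is imaginary-time evolution, for which no Lieb--Robinson bound is available, one cannot compare the two parent Hamiltonians term by term in a naive way, and the trade-off between the truncation error in $H$ and the growth of the imaginary-time-evolved jump and coherent operators only closes at the cost of the weaker exponent $|\lambda|^\alpha$ rather than $|\lambda|$ (this is precisely why the linear-in-$\lambda$ statement of Theorem \ref{thm: main, gap} needed the special free-fermion structure and is unavailable here). A secondary point requiring care is verifying that the cited stability theorem really outputs the explicit degradation $\Delta_{\mathrm{atomic}} - sK$ rather than merely asserting that the gap stays open, and that its hypotheses genuinely apply to the parent Hamiltonian viewed as a local Hamiltonian on the doubled lattice.
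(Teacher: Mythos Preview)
Your proposal is correct and follows essentially the same approach as the paper: pass to the parent Hamiltonian, note that $\mathcal{H}_{\mathrm{atomic}}$ is separable, frustration-free with a unique product ground state satisfying the TQO conditions, invoke Lemma \ref{lemma:general bound decay} for the $(c|\lambda|^\alpha,\mu)$-decay of $\mathcal{V}$, and apply a Bravyi--Hastings--Michalakis-type stability theorem. The paper differs only in implementation details: it uses specifically Theorem 4 of \cite{hastings2017stabilityfreefermihamiltonians} (which requires $H_0\ge H_{\mathrm{proj}}$ for a sum of commuting projectors obeying TQO), performs an explicit rescaling so the unperturbed gap is normalised to $1$ before applying the lemma, and writes out the verification of TQO-1 and TQO-2 for the product projector $P=\prod_i P_i$ rather than asserting it.
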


\begin{proof}
Let us denote by $e_{\ell}\le \cdots \le e_1 <e_0\equiv 0$ the set of eigenvalues of the single site parent Hamiltonian in the atomic limit, equation \eqref{eq:eigenval_single_site_t0}.
Then $|e_1|>0$ is the Lindbladian gap in the atomic limit.
Now we consider the Hamiltonian
\begin{align}
    \mathcal{H}_0
    =
    \sum_{i=1}^n
    \sum_{a=0}^{\ell}
    E_a
    (\mathcal{P}_{e_a})_i
    \,,\quad 
    E_a = -\frac{e_a}{e_1}\,,
\end{align}
where $(\mathcal{P}_e)_i$ is the projector onto the eigenspace of $e$ acting at site $i$.
$\mathcal{H}_0$ is a rescaled and negated version of the parent Hamiltonian at $t=0$ such that 
$\mathcal{H}_0\ge 0$. It has a non-degenerate ground state with eigenvalue $0$ given by the Fermi-Hubbard thermal state at $t=0$ and gap $1$.
Now we denote by $\tilde{\mathcal{H}}(\lambda)$ minus the parent Hamiltonian associated to $H$, 
so that the spectrum is positive and the thermal state is the ground state. 
Then $\tilde{\mathcal{H}}(0)=e_1
    \mathcal{H}_0$ and we define
\begin{align}
    \mathcal{V} = 
    \tilde{\mathcal{H}}(\lambda)
    -
    \tilde{\mathcal{H}}(0)
    =
    \tilde{\mathcal{H}}(\lambda)
    -
    e_1
    \mathcal{H}_0
    \,.
\end{align}
Next we will prove a gap for $\mathcal{H}=\mathcal{H}_0+\mathcal{V}$ using the following result.

\begin{lemma}[Theorem 4 of \cite{hastings2017stabilityfreefermihamiltonians}]\label{thm:hastings_4}
There exist constant $J_0, c_1$ depending only on 
$\tilde{J},\tilde{\mu}, \mu, D$ such that the following holds for all $J\le J_0$.
Let $H_0$ have
$(\tilde{J}, \tilde{\mu})$ decay and let $V$ have $(J, \mu)$ decay, according to the notion of decay in Definition \ref{def:1 Hastings}.
Assume  $H_0 \ge H_{\text{proj}}$ for some $H_{\text{proj}}$ which is a sum of commuting projectors and which obeys the topological quantum order conditions of \cite{Bravyi_2011}. Assume
also $P H_0 = 0$ where $P$ is the projector onto the ground state of $H_{\text{proj}}$.
Then the spectral gap of $H_0 + V$ 
is at least
$1-c_1J-\delta$,
for some $\delta$ bounded by $J$ times a quantity decaying faster than any power of $L$.
\end{lemma}

We are going to use this result with $H_0$ identified with $\mathcal{H}_0$.
This means that $\tilde{\mu}=\infty$.
Then we define a Hamiltonian $\mathcal{H}_\text{proj}$ which we identify with $H_{\text{proj}}$ by:
\begin{align}
    \mathcal{H}_\text{proj}
    =
    \sum_{i=1}^n
    Q_i
    \,,\quad 
    Q_i=
    1-P_i
    \,,\quad 
    P_i \equiv (\mathcal{P}_0)_i
    \,.
\end{align}
Let us denote 
$P_A = \prod_{i\in A}P_i$ for a set of sites $A$, 
and the projector onto the ground state of 
$\mathcal{H}_\text{proj}$ by
\begin{align}
    P = \prod_{i=1}^n 
    (1-Q_i)
    =
    \prod_{i=1}^n 
    P_i
    \,.
\end{align}
Note that the ground state of $\mathcal{H}_\text{proj}$ is non-degenerate and $P_A$ is a rank one projector for any set of sites $A$.
We have the following properties:
\begin{enumerate}
    \item $\mathcal{H}_0-\mathcal{H}_\text{proj}\ge 0$. Indeed $P\mathcal{H}_0=P\mathcal{H}_\text{proj}=0$ so that
    $\mathcal{H}_0,\mathcal{H}_\text{proj}$ have the same ground state.
    Also they have the same gap $1$ and the other eigenvalues of $\mathcal{H}_0$ are greater or equal to those of $\mathcal{H}_\text{proj}$ since $[\mathcal{H}_0, \mathcal{H}_\text{proj}] = 0$ and $E_a \ge 1$ for $a>1$.
    \item $\mathcal{H}_\text{proj}$ is the sum of local commuting projectors.
    \item $\mathcal{H}_\text{proj}$ satisfies the topological quantum order conditions \cite{Bravyi_2011}
    \begin{enumerate}
        \item TQO-1: if $O_A$ is supported on $A$ then 
        \begin{align}
            PO_AP = 
            P_AO_AP_A P_{A^\perp}
            =\Tr(P_AO_A)P=cP
        \end{align}
        where $A^\perp = \Lambda \setminus A$. This happens for any $A$, so $L^* = L$, the system size.
        \item TQO-2: If $PO_A=P_AO_A P_{A^\perp}=0$, then $P_AO_A=0$, and so $P_BO_A=0$ if $B$ includes $A$.
    \end{enumerate}
\end{enumerate}

Finally we identify $V$ with $\mathcal{V}$ and 
the last hypothesis of Lemma \ref{thm:hastings_4} to verify the $(J,\mu)$ decay of $\mathcal{V}$.
This follows from Lemma \ref{lemma:general bound decay} which shows that $J\le c|\lambda|^\alpha$ for any $0<\alpha<1$.
Lemma \ref{thm:hastings_4} then implies that
$\mathcal{H}$ is gapped for all $|\lambda|^\alpha\le J_0/c$
and that the gap is $1-\mathcal{O}(|\lambda|^\alpha)$.
This in turns implies a gap 
$|e_1|-\mathcal{O}(|\lambda|^\alpha)$
for 
$\tilde{\mathcal{H}}(\lambda)
=
e_1(\mathcal{H}_0+e_1^{-1}\mathcal{V})$ 
and thus for the Lindbladian $\mathcal{L}^\dagger$
 for all 
$|\lambda|\le \lambda_{\text{max}}$. Here $\lambda_{\text{max}}=(|e_1|J_0/c)^{1/\alpha}$ since the strength of the perturbation in 
$\tilde{\mathcal{H}}(\lambda)/e_1$ 
is $J\le c |e_1^{-1}| |\lambda|^\alpha$.
\end{proof}

Note that Theorem \ref{thm: gap stability atomic Lindbladian} in particular implies a gap for the Lindbladian associated with 
the spinful Fermi-Hubbard model for weak interactions around the atomic limit by identifying $\lambda$ with $t$.
Note also that similar conclusions about perturbations of the atomic limit can be drawn with Pauli jump operators that we discuss in Section \ref{sec:beyond}. This is because the atomic Lindbladian and parent Hamiltonian remain separable also in that case.\\

We end this section with the remark that the spinless Fermi-Hubbard model at $t=0$ is not separable, see Eq.~\eqref{eq:Hamiltonian pFH}. Thus it does not reduce to an atomic limit and the results of this section do not apply to perturbations of the $t=0$ limit in the spinless case. However, commutativity of the Hamiltonian implies that the Lindbladian and the parent Hamiltonians for the $t=0$ spinless Fermi-Hubbard model are strictly local. We leave investigations of this model for future work.


\subsection{Efficient Quantum Gibbs Sampler}\label{sec:efficiency}

In this section, we shall finally explain how a constant lower bound on the spectral gap of the Lindbladian translates to the bound on the mixing time and hence the overall algorithmic complexity of the Gibbs state preparation, proving its efficiency.

\begin{corollary}
    The mixing time $t_\textup{mix}$ of the Lindbladian $\mathcal{L}^\dagger$ can be then bounded like \begin{equation}
    t_\textup{mix} \leq  \frac{\log(\frac{2}{\epsilon}\| \sigma_\beta^{-1/2}\|)}{\Delta} = \frac{\mathcal{O}(\beta \| H\| + \log(1/\epsilon))}{\Delta} = \mathcal{O}(n + \log(1/\epsilon)).
\end{equation}
\end{corollary}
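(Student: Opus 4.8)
\emph{Approach.} The plan is to transfer the problem to the parent-Hamiltonian picture, where the semigroup $e^{t\mathcal{H}}=\Gamma_{\sigma_\beta}^{-1/2}\circ e^{t\mathcal{L}^\dagger}\circ\Gamma_{\sigma_\beta}^{1/2}$ is generated by a \emph{Hermitian} superoperator with spectral gap $\Delta$, run the standard Hilbert--Schmidt contraction argument there, and map the estimate back to the physical picture paying only the factor $\|\sigma_\beta^{-1/2}\|$ coming from the similarity transformation $\Gamma_{\sigma_\beta}^{\pm1/2}$ (rather than a dimension factor $2^{n/2}$).

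\emph{Contraction in the parent-Hamiltonian picture.} Fix an arbitrary state $\rho$ and set $X=\rho-\sigma_\beta$, which is traceless since $e^{t\mathcal{L}^\dagger}$ is trace preserving, together with $Y=\Gamma_{\sigma_\beta}^{-1/2}[X]=\sigma_\beta^{-1/4}X\sigma_\beta^{-1/4}$; the identity \eqref{eq:calH} then gives $e^{t\mathcal{L}^\dagger}[X]=\sigma_\beta^{1/4}\,e^{t\mathcal{H}}[Y]\,\sigma_\beta^{1/4}$. Since $\mathcal{H}$ is Hermitian with respect to the Hilbert--Schmidt inner product, with top eigenvalue $0$ attained on $\sqrt{\sigma_\beta}$ and, by definition of the gap, all other eigenvalues $\le-\Delta$, the map $e^{t\mathcal{H}}$ contracts the orthogonal complement of $\sqrt{\sigma_\beta}$ with rate $e^{-t\Delta}$. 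Crucially $Y$ lies in that complement, as $\langle\sqrt{\sigma_\beta},Y\rangle=\Tr(\sigma_\beta^{-1/4}\sqrt{\sigma_\beta}\sigma_\beta^{-1/4}X)=\Tr X=0$, so $\|e^{t\mathcal{H}}[Y]\|_2\le e^{-t\Delta}\|Y\|_2$.

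\emph{Norm conversion.} For the output I would use a Schatten--Hölder inequality with exponents $(4,2,4)$ together with $\|\sigma_\beta^{1/4}\|_4=(\Tr\sigma_\beta)^{1/4}=1$ to get $\|\sigma_\beta^{1/4}Z\sigma_\beta^{1/4}\|_{\Tr}\le\|Z\|_2$ for any operator $Z$, hence $\|e^{t\mathcal{L}^\dagger}[\rho]-\sigma_\beta\|_{\Tr}\le\|e^{t\mathcal{H}}[Y]\|_2$; for the input, submultiplicativity gives $\|Y\|_2\le\|\sigma_\beta^{-1/4}\|^2\|X\|_2=\|\sigma_\beta^{-1/2}\|\,\|X\|_2$ and $\|X\|_2\le\|X\|_{\Tr}\le2$. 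Chaining these bounds yields $\|e^{t\mathcal{L}^\dagger}[\rho]-\sigma_\beta\|_{\Tr}\le2\|\sigma_\beta^{-1/2}\|e^{-t\Delta}$ uniformly in $\rho$; demanding this be $\le\epsilon$ and solving for $t$ gives $t_\textup{mix}(\epsilon)\le\Delta^{-1}\log\!\big(\tfrac2\epsilon\|\sigma_\beta^{-1/2}\|\big)$.

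\emph{Asymptotics and the main obstacle.} The remaining equalities follow from $\log\|\sigma_\beta^{-1/2}\|=\tfrac12\log\!\big(\Tr(e^{-\beta H})\,\|e^{\beta H}\|\big)\le\beta\|H\|+\tfrac n2\log2=\mathcal{O}(\beta\|H\|+\log(1/\epsilon))$ (using $\Tr(e^{-\beta H})\le2^n e^{\beta\|H\|}$), and then from $\|H\|=\mathcal{O}(n)$ for a (quasi-)local Hamiltonian together with the constant gap $\Delta=\Theta(1)$ of Theorem~\ref{thm: main, gap}, giving $\mathcal{O}(n+\log(1/\epsilon))$. No step here is deep; the only thing that needs care is the bookkeeping of which norm ($\|\cdot\|_{\Tr}$ versus $\|\cdot\|_2$) sits on which side of $\Gamma_{\sigma_\beta}^{\pm1/2}$, so that the overhead is the benign $\|\sigma_\beta^{-1/2}\|$ and not $2^{n/2}$ — which is precisely what the $(4,2,4)$ Hölder split buys — together with the preliminary fact (already established via the irreducibility criterion) that $\sqrt{\sigma_\beta}$ is the \emph{unique} top eigenvector of $\mathcal{H}$, so that $\Delta$ genuinely controls the full trace-norm relaxation.
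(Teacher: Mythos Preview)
Your proof is correct and follows essentially the same approach as the paper: both pass to the parent Hamiltonian picture, use the spectral gap of the Hermitian $\mathcal{H}$ to get exponential contraction in Hilbert--Schmidt norm, and convert back via H\"older's inequality, picking up exactly the $\|\sigma_\beta^{-1/2}\|$ factor. The paper simply cites this contraction bound from \cite[Proposition~E.4]{chen2023efficient}, whereas you spell out the $(4,2,4)$ H\"older split and the orthogonality $\langle\sqrt{\sigma_\beta},Y\rangle=\Tr X=0$ explicitly; the content is the same.
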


\begin{proof}
    As per \cite[Proposition E.4]{chen2023efficient}, we can bound \begin{equation}
        \left\| e^{\mathcal{L}^\dagger t} [\rho_1 - \rho_2] \right\|_{\Tr} \leq e^{-\Delta(\mathcal{L}^\dagger) t} \left\| \sigma^{-1/2}_\beta \right\| \|\rho_1 -\rho_2\|_{\Tr}
    \end{equation} using the Hölder's inequality; hence by taking $\rho_2 = \sigma_\beta$ to be the fixed point of the evolution, we get that \begin{equation}
        \left\| e^{\mathcal{L}^\dagger t} [\rho_1] - \sigma_\beta \right\|_{\Tr} \leq e^{-\Delta(\mathcal{L}^\dagger) t} \left\| \sigma^{-1/2}_\beta \right\| \|\rho_1 -\sigma_\beta\|_{\Tr} \leq 2 e^{-\Delta(\mathcal{L}^\dagger) t} \left\| \sigma^{-1/2}_\beta \right\| \overset{\text{set}}{\leq} \epsilon\,.
    \end{equation} The last inequality is then guaranteed whenever $t  \geq \frac{\log \left(\frac{2}{\epsilon}\left\| \sigma^{-1/2}_\beta \right\| \right)}{\Delta(\mathcal{L}^\dagger)}$, from which we can deduce \begin{equation}
          t_\text{mix}  \leq \frac{\log \left(\frac{2}{\epsilon}\left\| \sigma^{-1/2}_\beta \right\| \right)}{\Delta(\mathcal{L}^\dagger)}\,.
    \end{equation} The rest of the bound follows from the spectral gap being lower bounded by a constant and $\|H\| = \mathcal{O}(n)$.
\end{proof}

\begin{corollary}\label{cor:main-result}
    The purified Gibbs state can be prepared on a quantum computer at any constant temperature via Hamiltonian simulation of the parent Hamiltonian in \begin{equation}\widetilde{\mathcal{O}}(n^3 \operatorname{polylog}(1/\epsilon))\end{equation} time complexity using $\mathcal{O}(n)$ qubits, where $\epsilon$ is the desired precision in trace norm and $\widetilde{\mathcal{O}}$ notation absorbs subdominant polylogarithmic terms.
\end{corollary}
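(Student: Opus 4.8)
The plan is to combine three facts already established in the excerpt. First, by Theorem \ref{thm: main, gap} (or Theorem \ref{thm: gap stability atomic Lindbladian} around the atomic limit) the spectral gap $\Delta$ of the Lindbladian $\mathcal{L}^\dagger$ is bounded below by a positive constant independent of the system size $n$. Second, by the preceding corollary this yields $t_\textup{mix}(\epsilon) \le \log\!\big(\tfrac{2}{\epsilon}\,\|\sigma_\beta^{-1/2}\|\big)/\Delta = \mathcal{O}(n + \log(1/\epsilon))$, where one uses $\|H\| = \mathcal{O}(n)$ for a (quasi-)local Hamiltonian, $\beta = \mathcal{O}(1)$, and $\log\|\sigma_\beta^{-1/2}\| = \mathcal{O}(n + \beta\|H\|)$. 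Third, by \eqref{eq:runtime_quantum} the construction of \cite{ding2024efficient} simulates the channel $e^{t\mathcal{L}^\dagger}$ up to time $t$ with gate cost $\widetilde{\mathcal{O}}(t(\beta+1)|\mathcal{A}|^2\log^{1+s}(1/\epsilon))$, and --- as recalled in Section \ref{sec:Quantum Gibbs sampling} --- the quantum implementation of this evolution prepares the purified Gibbs state on the doubled Hilbert space, the relevant dynamics mixing at the rate fixed by the same gap $\Delta$ (the parent Hamiltonian of \eqref{eq:calH} and $\mathcal{L}^\dagger$ share their spectrum).

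Given these, the argument is essentially arithmetic. I would split the error budget as $\epsilon/2$ for convergence and $\epsilon/2$ for circuit approximation: run the simulated dynamics for time $T := t_\textup{mix}(\epsilon/2) = \mathcal{O}(n + \log(1/\epsilon))$ and implement $e^{T\mathcal{L}^\dagger}$ to diamond-norm precision $\epsilon/2$; since the diamond-norm error upper-bounds the trace-distance error on every input, the triangle inequality gives a final state within $\epsilon$ in trace norm of the purified Gibbs state. For the gate count, local Majorana jump operators together with the irreducibility requirement of Section \ref{sec: locality} force $|\mathcal{A}| = \Theta(n)$, hence $|\mathcal{A}|^2 = \mathcal{O}(n^2)$; the Gevrey order of the Gaussian filter is $s = 1$ and $(\beta+1) = \mathcal{O}(1)$. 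Substituting into \eqref{eq:runtime_quantum},
\begin{equation}
\widetilde{\mathcal{O}}\!\big(T\,(\beta+1)\,|\mathcal{A}|^2\,\log^{2}(1/\epsilon)\big) = \widetilde{\mathcal{O}}\!\big((n+\log(1/\epsilon))\,n^2\,\operatorname{polylog}(1/\epsilon)\big) = \widetilde{\mathcal{O}}\!\big(n^3\,\operatorname{polylog}(1/\epsilon)\big).
\end{equation}
The qubit count is $\mathcal{O}(n)$: $n$ system qubits, $n$ ancillas carrying the purification, and a further $\mathcal{O}(n)$ workspace qubits for the block encodings of the ($\mathcal{O}(1)$-local) jump operators, for controlled Hamiltonian simulation of the (quasi-)local $H$, and for the state-preparation oracles of the efficiently implementable Gaussian filter $f(t)$ and coherent function $g(t)$.

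The only genuinely non-bookkeeping step --- and hence the main thing to be careful about --- is verifying that every $n$- and $\epsilon$-dependence hidden inside the oracle implementations and inside the $\|H\| = \mathcal{O}(n)$-sensitive quantities (the support width $2\|H\|$ of the smooth cutoff $\kappa$, the factor $\|\sigma_\beta^{-1/2}\| = e^{\mathcal{O}(n)}$ entering $t_\textup{mix}$, the imaginary-time shifts $\beta/4$ in $\tilde L_a$ and $\tilde G$) is at most polynomial in $n$ and polylogarithmic in $1/\epsilon$, so that these factors are legitimately swept into $\widetilde{\mathcal{O}}(\cdot)$. The constancy of $\beta$ is essential here --- the prefactor $e^{4\beta^2\|h\|^2}$ appearing in $\Delta_0$ via \eqref{eqn: free gap, Gaussian} would otherwise blow up --- as is the superexponential decay of the Gaussian filter, which secures the quasi-locality of the parent Hamiltonian (Proposition \ref{prop: locality}) and hence that Hamiltonian simulation of the local $H$ needs only $\operatorname{poly}(n)$ gates per unit time. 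With these checks done, the corollary follows directly from \eqref{eq:runtime_quantum}.
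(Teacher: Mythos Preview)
Your proposal is correct and follows essentially the same approach as the paper: combine the constant gap (Theorem \ref{thm: main, gap}), the resulting $\mathcal{O}(n+\log(1/\epsilon))$ mixing-time bound from the preceding corollary, and the runtime estimate \eqref{eq:runtime_quantum} with $|\mathcal{A}|=\mathcal{O}(n)$. Your version is in fact more careful than the paper's two-line proof, explicitly splitting the error budget and justifying the $\mathcal{O}(n)$ qubit count.
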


\begin{proof}
    This follows from equation \eqref{eq:runtime_quantum} by using 
    the upper bound on the mixing time, and the fact that $|\mathcal{A}| = \mathcal{O}(n)$ using the Majorana jump operators.
\end{proof}


\subsection{Calculating Partition Functions}\label{sec: partition function}

As a possible application of the efficient Gibbs state preparation, we adapt the strategy from \cite{rouze2024optimalquantumalgorithmgibbs} for calculating partition functions $Z_\beta (\lambda_i) = \Tr(e^{-\beta H(\lambda_i)})$ to the case of interacting fermionic systems, where we have denoted $H(\lambda_i) = H_0 + \lambda_i V$. We remark that since we are assuming $\beta$ to be constant (although arbitrarily large), the method in \cite{rouze2024optimalquantumalgorithmgibbs} based on cooling the partition function from infinite temperature is directly applicable, as their restriction to high temperatures stems only from the Gibbs state preparation. However, since we can consider $\beta$ to be large and the coupling strength $\lambda$ to be small, it might be more efficient in practice to consider systematically increasing the coupling strength rather than decreasing the temperature. Note that we can calculate the non-interacting partition function explicitly as \begin{equation}
    Z_\beta (0) = \prod_{i=1}^n 2 \cosh(2\beta \epsilon_i)\,,
\end{equation} where the product is taken only over one $\epsilon_i \in \operatorname{spec}(h)$ from each symplectic pair $\pm \epsilon_i$. By measuring the observable $e^{\beta H(\lambda_i)} e^{-\beta H(\lambda_{i+1})}$ in the state $\sigma_\beta (\lambda_i) = \frac{e^{-\beta H(\lambda_i)}}{Z_\beta(\lambda_i)}$, we would obtain the ratio $\frac{Z_\beta(\lambda_{i+1})}{Z_\beta (\lambda_i)}$. Preparing this observable and the Gibbs state will require access to block encodings of $H_0$ and $V$, from which we get a block encoding for $H(\lambda_i)$ via LCU, and hence block encoding for the observable and the Hamiltonian simulation via QSVT. By choosing a schedule $0 = t_1 \leq t_2 \leq \dots \leq t_{l-1} \leq t_l = |\lambda|$ and denoting $\lambda_i = t_i \frac{\lambda}{|\lambda|} $, we can calculate $Z_\beta(\lambda)$ as a telescoping product \begin{equation}
    Z_\beta(\lambda) = Z_\beta(0) \prod_{i=1}^{l-1} \frac{Z_\beta(\lambda_{i+1})}{Z_\beta (\lambda_i)} = Z_\beta(0) \prod_{i=1}^{l-1} \Tr(e^{\beta H(\lambda_i)} e^{-\beta H(\lambda_{i+1})} \cdot \sigma_\beta(\lambda_i) )\,.
\end{equation}

Hence we can show the following adaptation of \cite[Theorem 8]{rouze2024optimalquantumalgorithmgibbs}:
\begin{corollary}
    For quasi-local interacting fermionic Hamiltonians $H(\lambda) = H_0 + \lambda V$, at any inverse temperature $\beta$, there exists a positive constant $\lambda_\textup{max}$ such that we can calculate an estimate to the partition function $Z_\beta(\lambda)$ up to a relative error $\epsilon$ with success probability at least $3/4$ for any $|\lambda| \leq \lambda_\textup{max}$ in time complexity $\widetilde{\mathcal{O}}(n^5 \epsilon^{-2})$.
\end{corollary}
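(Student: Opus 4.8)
The plan is to specialise the simulated-annealing (``cooling'') argument of \cite[Theorem 8]{rouze2024optimalquantumalgorithmgibbs}, replacing its high-temperature Gibbs sampler by the fermionic Gibbs sampler of Corollary~\ref{cor:main-result}, which is efficient at \emph{every} constant temperature. Fix $|\lambda|\le\lambda_{\max}$ and $H=H(\lambda)=H_0+\lambda V$, and interpolate from a regime where the partition function is known exactly to the target one through a schedule $s_1<\dots<s_l$, writing $Z_\beta(\lambda)=Z_{s_1}\prod_{i=1}^{l-1}r_i$ with $r_i=Z_{s_{i+1}}/Z_{s_i}=\Tr\!\big(O_i\,\sigma_{s_i}\big)$ for a ratio operator $O_i$. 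The cleanest choice is to cool in inverse temperature from $s_1=0$, where $Z_{s_1}=\Tr(\id)=2^n$, to $s_l=\beta$ at fixed $\lambda$, so that $O_i=e^{-(s_{i+1}-s_i)H}$; choosing increments with $(s_{i+1}-s_i)\|H\|=\mathcal O(1)$ makes each factor $\Theta(1)$, namely $e^{-\mathcal O(1)}\le r_i\le e^{\mathcal O(1)}$, and since $\|H\|=\mathcal O(n)$ (a sum of $\mathcal O(n)$ quasi-local terms of bounded norm) this needs $l=\mathcal O(\beta\|H\|)=\mathcal O(n)$ steps at constant $\beta$. Alternatively one may increase $\lambda$ from $0$ at fixed $\beta$, anchored at the exactly computable $Z_\beta(0)=\prod_i 2\cosh(2\beta\epsilon_i)$ as in the telescoping product displayed above; since $\partial_\lambda\log Z_\beta(\lambda)=-\beta\Tr(V\sigma_\beta(\lambda))=\mathcal O(\beta n)$ this again gives $l=\mathcal O(n)$, but see the obstacle below.

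Each factor $r_i$ I would estimate by a Hadamard test on the doubled Hilbert space: prepare the purified Gibbs state $|\sqrt{\sigma_{s_i}}\rangle$ via Corollary~\ref{cor:main-result} in time $\widetilde{\mathcal O}(n^3\operatorname{polylog}(1/\epsilon))$; build a block encoding of $H=H_0+\lambda V$ by LCU from block encodings of $H_0$ and $V$, then a block encoding of $O_i=e^{-(s_{i+1}-s_i)H}$ by QSVT\,---\,a bounded positive operator whose normalisation is $e^{\mathcal O(1)}$ precisely because $(s_{i+1}-s_i)\|H\|=\mathcal O(1)$ and whose polynomial degree is $\widetilde{\mathcal O}(\operatorname{polylog}(1/\epsilon))$\,---\,and run the test to obtain an unbiased sample with mean $r_i$ and $\Theta(1)$ variance. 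One must check that the hypotheses of Corollary~\ref{cor:main-result} hold uniformly along the path: take $\lambda_{\max}=\inf_{0<\beta'\le\beta}\lambda_{\max}(\beta')$, a positive constant because the free gap $\Delta_0(\beta')$ of \eqref{eqn: free gap, Gaussian} is monotone and bounded below by $\Delta_0(\beta)>0$ on $[0,\beta]$ while the stability constants of Theorem~\ref{thm: main, gap} depend continuously on $\beta'$ there (the high-temperature end is only easier, with $\Delta_0\to2$).

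For the error budget, the $l-1$ ratio estimates are independent unbiased empirical means of a bounded random variable, so $\log\widehat Z=n\log 2+\sum_i\log\widehat r_i$ has variance $\sum_i\operatorname{Var}(\log\widehat r_i)$ (the logarithm introduces only a correctable $\mathcal O(\epsilon^2)$ bias); taking each $\widehat r_i$ to additive, hence relative, accuracy $\mathcal O(\epsilon/\sqrt l)$ makes $\operatorname{Var}(\log\widehat Z)=\mathcal O(\epsilon^2)$, so by Chebyshev's inequality an $\mathcal O(1)$-fold median estimate of $Z_\beta(\lambda)$ has relative error $\epsilon$ with probability $\ge3/4$. Reaching accuracy $\mathcal O(\epsilon/\sqrt l)$ on a $\Theta(1)$ quantity costs $\mathcal O(l/\epsilon^2)$ Hadamard-test repetitions per factor (an extra $\mathcal O(\log l)$ per factor to union-bound the failures is absorbed into $\widetilde{\mathcal O}$), each repetition costing one Gibbs preparation $\widetilde{\mathcal O}(n^3)$ plus one block-encoded-observable application $\widetilde{\mathcal O}(\operatorname{poly}n)$, the former dominating. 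The total is $(l-1)\cdot\mathcal O(l/\epsilon^2)\cdot\widetilde{\mathcal O}(n^3)=\widetilde{\mathcal O}(l^2n^3\epsilon^{-2})=\widetilde{\mathcal O}(n^5\epsilon^{-2})$ at constant $\beta$. The main obstacle is two-fold: verifying the uniform-in-$\beta'$ applicability of Corollary~\ref{cor:main-result} along the path as above, and keeping the per-step observable's block-encoding normalisation polynomially bounded. Cooling handles the latter trivially since $e^{-(s_{i+1}-s_i)H}$ is a bounded function of $H$ itself; the coupling-increase route's observable $e^{\beta H_i}e^{-\beta H_{i+1}}$ instead effectively compares $V$ conjugated by the Euclidean-time evolution $e^{\tau\beta H}$, for which there is no Lieb--Robinson bound (cf.\ Section~\ref{sec: locality}), so bounding its norm polynomially in $n$ would require separately exploiting that the free part rotates the fermionic modes only by $\mathcal O(1)$ factors and that the interaction is a weak perturbation\,---\,which is why the cooling route is the one that cleanly inherits the complexity of \cite[Theorem 8]{rouze2024optimalquantumalgorithmgibbs}. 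Everything else\,---\,the explicit LCU/QSVT constructions, the bias correction, the median boosting\,---\,is routine.
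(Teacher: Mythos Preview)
Your proposal is correct and arrives at the stated complexity, but it follows the \emph{cooling} route (anneal $\beta'$ from $0$ to $\beta$ at fixed $\lambda$), whereas the paper's sketch follows the \emph{coupling-increase} route (anneal $\lambda'$ from $0$ to $\lambda$ at fixed $\beta$), anchored at the exactly computable free-fermion partition function $Z_\beta(0)=\prod_i 2\cosh(2\beta\epsilon_i)$ and using the observable $e^{\beta H(\lambda_i)}e^{-\beta H(\lambda_{i+1})}$ with step $\lambda_{i+1}-\lambda_i=\Theta(n^{-1})$. The paper in fact acknowledges that the cooling route of \cite{rouze2024optimalquantumalgorithmgibbs} is ``directly applicable'' here and only proposes coupling-increase as potentially more practical, so your choice is entirely consistent with what the authors say.

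The trade-offs you identify are real. Your cooling route keeps the per-step observable $e^{-(s_{i+1}-s_i)H}$ with operator norm $e^{\mathcal O(1)}$ by construction, so the block-encoding normalisation is clean; the price is that you must invoke Corollary~\ref{cor:main-result} uniformly in $\beta'\in(0,\beta]$, which you handle by taking the infimum of $\lambda_{\max}(\beta')$ and arguing positivity via monotonicity of $\Delta_0$ and continuity of the stability constants. The paper's coupling-increase route avoids this uniformity issue entirely (all intermediate states are at the same fixed $\beta$ with $|\lambda_i|\le\lambda_{\max}(\beta)$), but, as you correctly flag, controlling $\|e^{\beta H(\lambda_i)}e^{-\beta H(\lambda_{i+1})}\|$ polynomially is not immediate in the absence of imaginary-time Lieb--Robinson bounds; the paper does not address this and simply defers to \cite[Appendix~C]{rouze2024optimalquantumalgorithmgibbs}. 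Both routes give $l=\Theta(n)$ steps, $\Theta(n\epsilon^{-2})$ samples per step, and $\widetilde{\mathcal O}(n^3)$ per sample, hence $\widetilde{\mathcal O}(n^5\epsilon^{-2})$.
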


We refer to \cite[Appendix C]{rouze2024optimalquantumalgorithmgibbs} for the details of these calculations, the gist of which lies in choosing the schedule such that $t_{i+1}-t_i = \Theta(n^{-1})$, and so $l = \Theta(n)$ as $\lambda = \Theta(1)$. Then we would prepare the Gibbs states $\sigma_\beta(\lambda_i)$ and measure the expectation values of the observables $e^{\beta H(\lambda_i)} e^{-\beta H(\lambda_{i+1})}$ for each $i \in [l-1]$ at least $\Theta(n\epsilon^{-2})$ times. Calculating the estimate for each $\frac{Z_\beta(\lambda_{i+1})}{Z_\beta (\lambda_i)}$ as the average over these measurements, and evaluating the estimate $\hat{Z}_\beta(\lambda)$ to the partition function using the telescoping product would hence ensure \begin{equation}
    \mathbb{P}\left((1-\epsilon)Z_\beta(\lambda) \leq \hat{Z}_\beta(\lambda) \leq  (1+\epsilon)Z_\beta(\lambda)\right) \geq 3/4\,.
\end{equation}


\section{Numerical simulations}\label{sec:simulations}

\subsection{Analytically Bounded Regime}
\label{sec:analytically-bounded}

In this section, we investigate the Fermi-Hubbard model in the parameter range where Theorem~\ref{thm: main, gap} holds for the gap of the Lindbladian, i.e. where $U$ is sufficiently small. Without loss of generality, we set $t = 1$ for all calculations where nothing to the contrary is explicitly stated.

Even though, as mentioned before, the one-dimensional Fermi-Hubbard model can be solved exactly using a Bethe-ansatz~\cite{essler2005one}, we limit our numerical analysis to this 1D setting. The main motivation for this is that we want to show finite-size scaling properties, which would demand too many classical resources for higher-dimensional settings.

Therefore the relevant Hamiltonian is that of a non-periodic fermionic chain, which in the spinless case has the free free fermionic part
\begin{equation}\label{eqn: free 1D Hamiltonian}
    H_0 = -t \sum_{i=1}^{n-1} (a_i ^ \dagger a_{i+1} + a_{i+1}^\dagger a_i) \eqqcolon  \sum_{i,j} \omega_i h_{ij} \omega_j\,.
\end{equation}
The spectrum of its single-particle Hamiltonian can be solved explicitly:
\begin{equation}
    \operatorname{spec}(h) = 2\times \left\{ \frac{t}{2}\cos\left( \pi \cdot \frac{k}{n+1} \right) \right\}_{k=1}^{n}\,.
\end{equation}
Recall that the main Theorem~\ref{thm: main, gap} holds for Gaussian filter functions and Majorana jump operators. Choosing $\hat f(\nu) = e^{-(\beta\nu+1)^2/8 + 1/8}$ as the filter, we can use Eq.~\eqref{eqn: free gap, Gaussian} to determine the gap of the Lindbladian as
\begin{align}
    \Delta_0 &= 2 e^{-4\beta ^2 \lVert h\rVert^2} \cosh(2\beta \lVert h\rVert)\nonumber\\
    &= 2 e^{-\beta ^2 t^2 \cos\left( \frac{\pi}{n+1} \right)^2} \cosh(\beta t\cos\left(  \frac{\pi}{n+1} \right))\label{eq:gauss-delta0}\\ 
    &\geq 2 e^{-\beta ^2 t^2 } \cosh(\beta t) \eqqcolon \underline{\Delta}_0\nonumber
\end{align}

Figure~\ref{fig:spinless-fermi-hubbard} shows the gap $\Delta$ of the Lindbladian for the spinless Fermi-Hubbard model at $\beta = 3$ across various system sizes up to 11 sites. The analytical result from Eq.~\eqref{eq:gauss-delta0}, drawn in dashed grey (\ref{line:spinless_analytical}), matches the markers of the numerical simulations at $U = 0$ (\ref{line:spinless_U0}). As $U$ increases, $\Delta$ continuously deviates from this analytical result; initially shrinking the gap across the whole observed range, but for larger $U$ only decreasing the gap for small system sizes, while seemingly saturating earlier and at larger values as the system size increases. This can be seen as a pointer that the actual asymptotic behaviour of this system might be even better than the analytical results suggest.

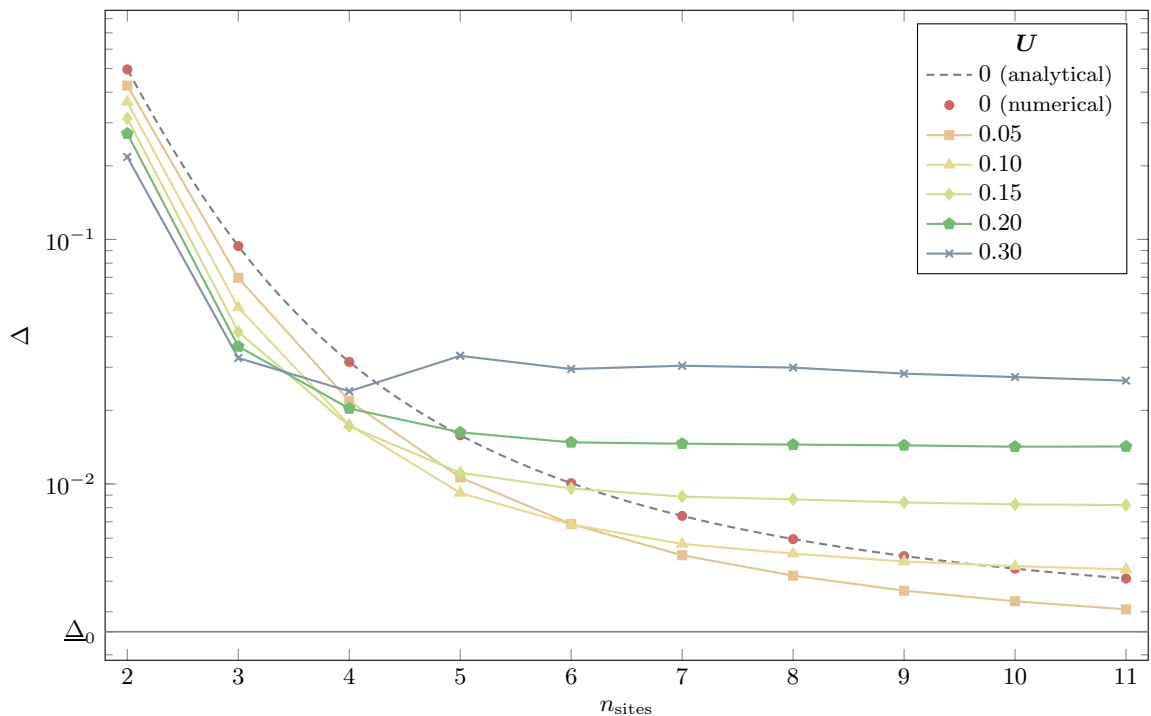
\begin{figure}[htb]
    \centering
    \begin{tikzpicture}

\definecolor{zero}{RGB}{180, 180, 180}
\definecolor{one}{RGB}{206, 104, 104}
\definecolor{two}{RGB}{231, 195, 146}
\definecolor{three}{RGB}{231, 217, 146}
\definecolor{four}{RGB}{209, 223, 141}
\definecolor{five}{RGB}{116, 185, 116}
\definecolor{six}{RGB}{126, 147, 165}
\definecolor{seven}{RGB}{146, 136, 176}
\definecolor{eight}{RGB}{166, 124, 166}

\begin{axis}[
    height=.6\columnwidth,
    width=.9\columnwidth,
    xmin=1.8, xmax=11.2,
    ymin=0.0019,
    unbounded coords=jump,
    ymode=log,
    xlabel={$n_\mathrm{sites}$}, ylabel={$\Delta$},
    axis on top,
    xtick={2, 3, ..., 11},
    extra y ticks={0.002484896389},
    extra y tick labels={$\underline{\Delta}_0$},
    extra y tick style={grid=major,grid style={semithick,color=gray}},
    legend cell align=left,
    ]
    \addlegendimage{empty legend}
    \addlegendentry{\hspace{1.5em}$\boldsymbol{U}$}
    \addplot[mark=none, color=gray, thick, densely dashed, domain=2:11,samples=150] {2 * exp(-3^2 * 1^2 * cos(deg(pi/(x + 1)))^2) * cosh(3 * 1 * cos(deg(pi / (x + 1))))};
    \addlegendentry{$0$ \footnotesize{(analytical)}} \label{line:spinless_analytical}
    \addplot[mark=*, only marks, mark size=1.5, color=one, thick] table[x=n_qb, y=Delta_beta3_U0_spinless] {data/gap_vs_nqb.dat};
    \addlegendentry{$0$ \footnotesize{(numerical)}} \label{line:spinless_U0}
    \addplot[mark=square*, mark size=1.5, color=two, thick] table[x=n_qb, y=Delta_beta3_U0.05_spinless] {data/gap_vs_nqb.dat};
    \addlegendentry{$0.05$}
    \addplot[mark=triangle*, mark size=2, color=three, thick] table[x=n_qb, y=Delta_beta3_U0.1_spinless] {data/gap_vs_nqb.dat};
    \addlegendentry{$0.10$}
    \addplot[mark=diamond*, mark size=2, color=four, thick] table[x=n_qb, y=Delta_beta3_U0.15_spinless] {data/gap_vs_nqb.dat};
    \addlegendentry{$0.15$}
    \addplot[mark=pentagon*, mark size=2, color=five, thick] table[x=n_qb, y=Delta_beta3_U0.2_spinless] {data/gap_vs_nqb.dat};
    \addlegendentry{$0.20$}
    \addplot[mark=x, mark size=2, color=six, thick] table[x=n_qb, y=Delta_beta3_U0.3_spinless] {data/gap_vs_nqb.dat};
    \addlegendentry{$0.30$}
\end{axis}

\end{tikzpicture}
    \caption{Lindbladian gap $\Delta$ of the spinless Fermi-Hubbard model at inverse temperature $\beta=3$ for small interaction strengths $U$. The dashed grey line and red dots are analytically and numerically, respectively, derived results for $U=0$, with the horizontal line at $\underline{\Delta}_0 \coloneqq 2e^{-\beta^2 t^2} \cosh(\beta t)$ being the lower bound for this line for $n_\mathrm{sites}\rightarrow\infty$.}
    \label{fig:spinless-fermi-hubbard}
\end{figure}

For weak interactions, we can examine the analytical results more closely. For this particular fermionic system, where the strength of the perturbation is $U$, Theorem~\ref{thm: main, gap} takes the form $\Delta \geq \Delta_0 - d \lvert U \rvert$ as long as $\lvert U \rvert \leq U_\mathrm{max}$ with $\Delta$ being the gap of the perturbed system, $\Delta_0$ the gap of the unperturbed system, some constant $d$, and a critical perturbation $U_\mathrm{max}$. It is important to stress that this inequality holds \emph{independent of the system size}.

While it is difficult to numerically determine $U_\mathrm{max}$, we can get some idea about what $d$ might be by looking at the derivative\footnote{The gap $\Delta$ is not differentiable at $U=0$, so we take the derivative in the positive direction $0^+$ (repulsive Fermi-Hubbard model) and the negative direction $0^-$ (attractive FH-model) separately.} of $\Delta$ at $U = 0$. Figure~\ref{fig:spinless-fermi-hubbard-slope} shows a few such instances for different parameters of temperature, spinfulness, and sign of $U$. Because, as stated above, for a given $\beta$ the constant $d$ is independent of the system size, the theorem states that each line is bounded \emph{from above} (lower is better). Indeed, for the considered temperatures the lines seem either decrease or tend to (almost) saturate even at the small system sizes shown in the plot.

\begin{figure}[htbp]
    \centering
    \begin{tikzpicture}

\definecolor{zero}{RGB}{180, 180, 180}
\definecolor{one}{RGB}{206, 104, 104}
\definecolor{two}{RGB}{231, 195, 146}
\definecolor{three}{RGB}{231, 217, 146}
\definecolor{four}{RGB}{209, 223, 141}
\definecolor{five}{RGB}{116, 185, 116}
\definecolor{six}{RGB}{126, 147, 165}
\definecolor{seven}{RGB}{146, 136, 176}
\definecolor{eight}{RGB}{166, 124, 166}

\begin{axis}[
    height=.6\columnwidth,
    width=.9\columnwidth,
    xmin=1.8, xmax=11.2,
    ymin=-.05, ymax=1.5,
    xlabel={$n_\mathrm{qubits}$}, ylabel={$\tilde{d}^\pm$},
    axis on top,
    legend cell align={left},
    legend style={at={(0.5,1.02)}, anchor=south},
    legend columns=2
    ]
    \addplot[mark=*, mark options={fill=white}, mark size=1.5, color=one, thick] table[x=n_qb, y=d_1d_beta1_spinless_pos] {data/gap_initial_slope.dat};
    \addlegendentry{$\beta = 1$, $U > 0$, spinless} \label{line:slope_demo}
    \addplot[mark=square*, mark options={fill=white}, mark size=1.5, color=two, thick] table[x=n_qb, y=d_1d_beta3_spinless_pos] {data/gap_initial_slope.dat};
    \addlegendentry{$\beta = 3$, $U > 0$, spinless}
    \addplot[mark=*, mark options={solid}, mark size=1.5, color=one, thick, densely dashed] table[x=n_qb, y=d_1d_beta1_spinless_neg] {data/gap_initial_slope.dat};
    \addlegendentry{$\beta = 1$, $U < 0$, spinless}
    \addplot[mark=square*, mark options={solid}, mark size=1.5, color=two, thick, densely dashed] table[x=n_qb, y=d_1d_beta3_spinless_neg] {data/gap_initial_slope.dat};
    \addlegendentry{$\beta = 3$, $U < 0$, spinless}
    \addplot[mark=triangle*, mark options={fill=white}, mark size=2, color=five, thick] table[x=n_qb, y=d_1d_beta1_spinful_pos] {data/gap_initial_slope.dat};
    \addlegendentry{$\beta = 1$, $U > 0$, spinful}
    \addplot[mark=diamond*, mark options={fill=white}, mark size=2, color=six, thick] table[x=n_qb, y=d_1d_beta3_spinful_pos] {data/gap_initial_slope.dat};
    \addlegendentry{$\beta = 3$, $U > 0$, spinful}
    \addplot[mark=triangle*, mark options={solid}, mark size=2, color=five, thick, densely dashed] table[x=n_qb, y=d_1d_beta1_spinful_neg] {data/gap_initial_slope.dat};
    \addlegendentry{$\beta = 1$, $U < 0$, spinful}
    \addplot[mark=diamond*, mark options={solid}, mark size=2, color=six, thick, densely dashed] table[x=n_qb, y=d_1d_beta3_spinful_neg] {data/gap_initial_slope.dat};
    \addlegendentry{$\beta = 3$, $U < 0$, spinful}
\end{axis}

\begin{axis}[
    width=.3\columnwidth,
    height=.2\columnwidth,
    at={(.56\columnwidth,.37\columnwidth)},
    xmin=0, xmax=0.03,
    ymax=0,
    axis background/.style={fill=white},
    axis on top,
    scaled ticks=false,
    tick label style={/pgf/number format/fixed},
    xtick={0, 0.03},
    ytick={0},
    xlabel=$U$,
    ylabel=$\delta\Delta(U)$,
    xlabel style={yshift=1.21em},
    ylabel style={yshift=-1.21em},
    ]
    \addplot[mark=none, color=black!19, thin] table[x=U, y=d_nqb2_beta1_spinless] {data/delta_gap_vs_u.dat};
    \addplot[mark=none, color=black!28, thin] table[x=U, y=d_nqb3_beta1_spinless] {data/delta_gap_vs_u.dat};
    \addplot[mark=none, color=black!37, thin] table[x=U, y=d_nqb4_beta1_spinless] {data/delta_gap_vs_u.dat};
    \addplot[mark=none, color=black!46, thin] table[x=U, y=d_nqb5_beta1_spinless] {data/delta_gap_vs_u.dat};
    \addplot[mark=none, color=black!55, thin] table[x=U, y=d_nqb6_beta1_spinless] {data/delta_gap_vs_u.dat};
    \addplot[mark=none, color=black!64, thin] table[x=U, y=d_nqb7_beta1_spinless] {data/delta_gap_vs_u.dat};
    \addplot[mark=none, color=black!73, thin] table[x=U, y=d_nqb8_beta1_spinless] {data/delta_gap_vs_u.dat};
    \addplot[mark=none, color=black!82, thin] table[x=U, y=d_nqb9_beta1_spinless] {data/delta_gap_vs_u.dat};
    \addplot[mark=none, color=black!91, thin] table[x=U, y=d_nqb10_beta1_spinless] {data/delta_gap_vs_u.dat};
    \addplot[mark=none, color=black!100, thin] table[x=U, y=d_nqb11_beta1_spinless] {data/delta_gap_vs_u.dat};
\end{axis}

\end{tikzpicture}
    \caption{Numerically evaluated slope of the gap $\tilde{d}^+ = -\frac{\partial\Delta}{\partial  U}\big\rvert_{U=0^+}$ at positive $U$ for the repulsive and $\tilde{d}^- = \frac{\partial\Delta}{\partial  U}\big\rvert_{U=0^-}$ at negative $U$ for the attractive Fermi-Hubbard model at different inverse temperatures $\beta$ depending on the system size. Note that for the model with spin, the $x$-axis is the number of total sites when counting different spins separately, which matches the number of qubits required to represent the system. \textbf{Inset:} The deviation of the gap $\Delta(U)$ from the gap at $U = 0$, i.e. $\delta \Delta(U) \coloneqq \Delta(U) - \Delta(0)$ for the spinless Fermi-Hubbard model at $\beta = 1$. Darker colours correspond to more sites. The apparent clustering of lines towards some ``slope bound'' as they become darker is equivalent to the corresponding line in the main plot (\ref{line:slope_demo}) approaching some upper bound.}
    \label{fig:spinless-fermi-hubbard-slope}
\end{figure}
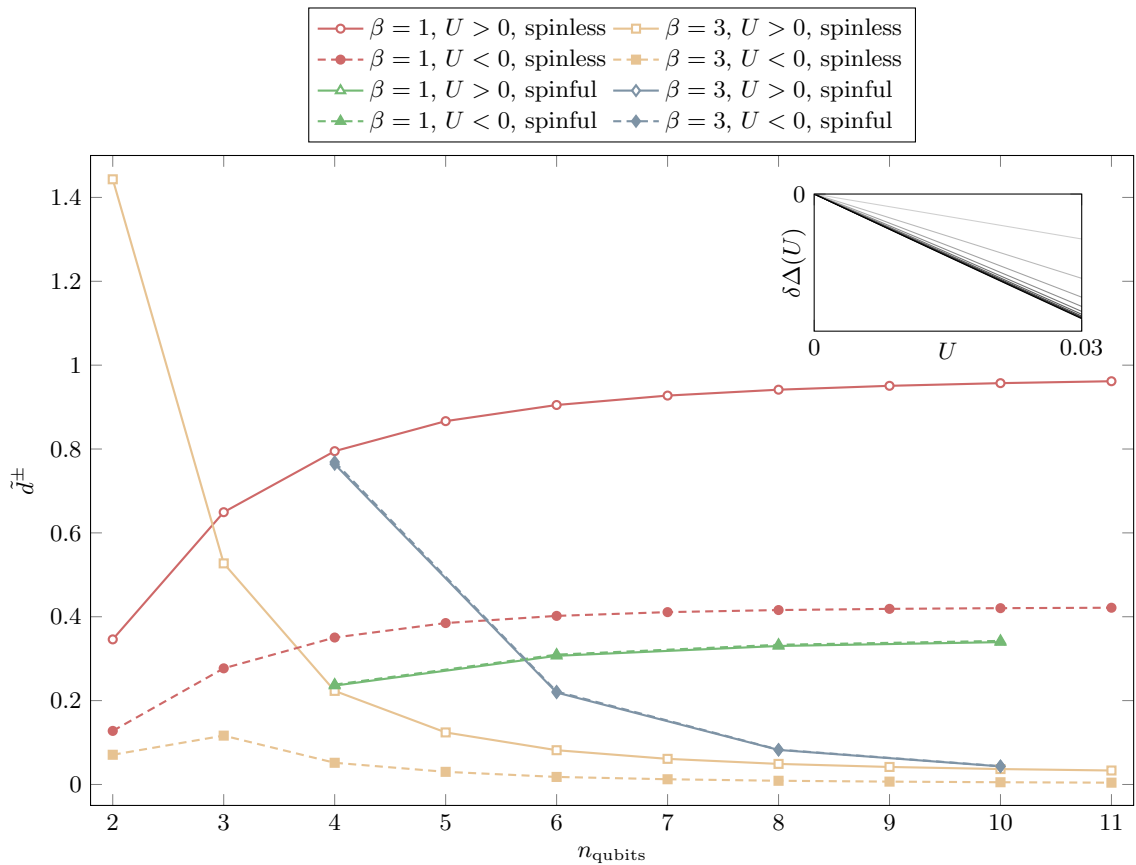


\subsection{Beyond Analytically Bounded Regime}
\label{sec:beyond}

So far\,---\,in the analytical as well as the numerical considerations\,---\,we have only used Gaussian filter functions and Majorana jump operators, as well as small perturbations. In this subsection, we extend our numerical results into regimes where the analytical guarantees of Theorem~\ref{thm: main, gap} may no longer hold. This will give some indication whether it seems reasonable that Gibbs states of the Fermi-Hubbard model can be prepared efficiently using the discussed algorithm. These small-scale insights might also inspire heuristics for which hyperparameters of the algorithm (filter functions, jump operators) can perform well even for larger system sizes. As before, we set $t = 1$ everywhere unless otherwise stated.\\

\paragraph{Strong coupling regime}
First, we maintain the Gaussian filter function and Majorana jump operators, but increase the interaction strength to much higher levels than before. Figure~\ref{fig:strong_interaction} shows the gap $\Delta$ of the Lindbladian for several values of $U$. It seems that with this setup\,---\,at least for the 1D case and the specific temperatures shown\,---\,increasing the interaction strength tends to \emph{shrink} the gap for high temperatures, but \emph{grow} it for lower temperatures.

Overall, it seems that the strong coupling does not change the characteristics of Lindbladian gap for the worse. Even in the cases where the coupling shrinks the gap, the (limited) asymptotic behaviour look very similar, saturating at a comparable rate as the (provably bounded) $U=0$ case. In many other cases, the saturation of the bound seemingly arrives much earlier, and at a larger $\Delta$ than in the non-interacting case. Of course, these small-scale results must be treated with caution and in no way guarantee that the gap will stay open for arbitrarily large systems. But they nonetheless give some confidence that the Gibbs state preparation may also be efficient for larger systems with the investigated setup.

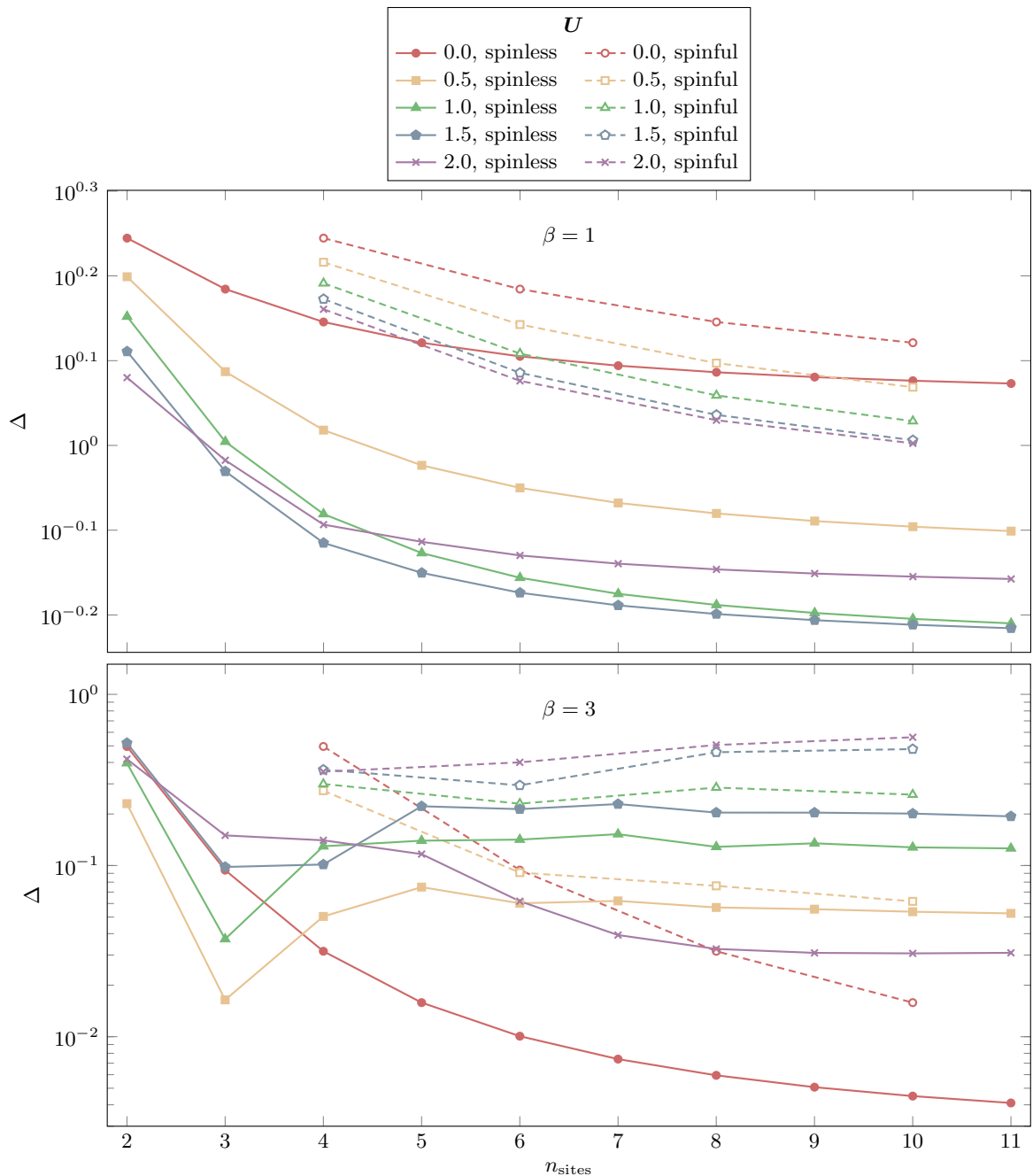
\begin{figure}[tbp]
    \centering
    \begin{tikzpicture}

\definecolor{zero}{RGB}{180, 180, 180}
\definecolor{one}{RGB}{206, 104, 104}
\definecolor{two}{RGB}{231, 195, 146}
\definecolor{three}{RGB}{231, 217, 146}
\definecolor{four}{RGB}{209, 223, 141}
\definecolor{five}{RGB}{116, 185, 116}
\definecolor{six}{RGB}{126, 147, 165}
\definecolor{seven}{RGB}{146, 136, 176}
\definecolor{eight}{RGB}{166, 124, 166}

\begin{groupplot}[
    height=.5\columnwidth,
    width=.9\columnwidth,
    xmin=1.8, xmax=11.2,
    ymode=log,
    xlabel={$n_\mathrm{sites}$}, ylabel={$\Delta$},
    axis on top,
    xtick={2, 3, ..., 11},
    legend cell align={left},
    legend style={
        at={(0.5,1.02)},
        anchor=south,
        cells={anchor=west},
        /tikz/every even column/.append style={column sep=1em},
        /tikz/every odd column/.append style={column sep=.1em}},
    legend columns=2,
    group style={
        group size=1 by 2,
        vertical sep=5pt,
      },
    ]
    \nextgroupplot[ymin=0.57, ymax=2, xlabel=\empty, xticklabels={}]
    \node at (rel axis cs:.5,.9) [anchor=center] {$\beta=1$};
    \addlegendimage{empty legend}
    \addlegendimage{empty legend}
    \addlegendentry{}
    \addlegendentry{\hspace{-5.1em}$\boldsymbol{U}$}
    \addplot[mark=*, mark size=1.5, color=one, thick] table[x=n_qb, y=Delta_beta1_U0_spinless] {data/gap_vs_nqb.dat};
    \addlegendentry{$0.0$, spinless}
    \addplot[mark=*, mark size=1.5, mark options={fill=white,solid}, color=one, thick, densely dashed] table[x=n_qb, y=Delta_beta1_U0_spinful] {data/gap_vs_nqb.dat};
    \addlegendentry{$0.0$, spinful}

    \addplot[mark=square*, mark size=1.5, color=two, thick] table[x=n_qb, y=Delta_beta1_U0.5_spinless] {data/gap_vs_nqb.dat};
    \addlegendentry{$0.5$, spinless}
    \addplot[mark=square*, mark size=1.5, mark options={fill=white,solid}, color=two, thick, densely dashed] table[x=n_qb, y=Delta_beta1_U0.5_spinful] {data/gap_vs_nqb.dat};
    \addlegendentry{$0.5$, spinful}

    \addplot[mark=triangle*, mark size=2, color=five, thick] table[x=n_qb, y=Delta_beta1_U1_spinless] {data/gap_vs_nqb.dat};
    \addlegendentry{$1.0$, spinless}
    \addplot[mark=triangle*, mark size=2, mark options={fill=white,solid}, color=five, thick, densely dashed] table[x=n_qb, y=Delta_beta1_U1_spinful] {data/gap_vs_nqb.dat};
    \addlegendentry{$1.0$, spinful}

    \addplot[mark=pentagon*, mark size=2, color=six, thick] table[x=n_qb, y=Delta_beta1_U1.5_spinless] {data/gap_vs_nqb.dat};
    \addlegendentry{$1.5$, spinless}
    \addplot[mark=pentagon*, mark size=2, mark options={fill=white,solid}, color=six, thick, densely dashed] table[x=n_qb, y=Delta_beta1_U1.5_spinful] {data/gap_vs_nqb.dat};
    \addlegendentry{$1.5$, spinful}

    \addplot[mark=x, mark size=2, color=eight, thick] table[x=n_qb, y=Delta_beta1_U2_spinless] {data/gap_vs_nqb.dat};
    \addlegendentry{$2.0$, spinless}
        \addplot[mark=x, mark size=2, mark options={fill=white,solid}, color=eight, thick, densely dashed] table[x=n_qb, y=Delta_beta1_U2_spinful] {data/gap_vs_nqb.dat};
    \addlegendentry{$2.0$, spinful}

    \nextgroupplot[ymin=0.003, ymax=1.5]
    \node at (rel axis cs:.5,.9) [anchor=center] {$\beta=3$};
    \addplot[mark=*, mark size=1.5, color=one, thick] table[x=n_qb, y=Delta_beta3_U0_spinless] {data/gap_vs_nqb.dat};
    \addplot[mark=*, mark size=1.5, mark options={fill=white,solid}, color=one, thick, densely dashed] table[x=n_qb, y=Delta_beta3_U0_spinful] {data/gap_vs_nqb.dat};

    \addplot[mark=square*, mark size=1.5, color=two, thick] table[x=n_qb, y=Delta_beta3_U0.5_spinless] {data/gap_vs_nqb.dat};
    \addplot[mark=square*, mark size=1.5, mark options={fill=white,solid}, color=two, thick, densely dashed] table[x=n_qb, y=Delta_beta3_U0.5_spinful] {data/gap_vs_nqb.dat};

    \addplot[mark=triangle*, mark size=2, color=five, thick] table[x=n_qb, y=Delta_beta3_U1_spinless] {data/gap_vs_nqb.dat};
    \addplot[mark=triangle*, mark size=2, mark options={fill=white,solid}, color=five, thick, densely dashed] table[x=n_qb, y=Delta_beta3_U1_spinful] {data/gap_vs_nqb.dat};

    \addplot[mark=pentagon*, mark size=2, color=six, thick] table[x=n_qb, y=Delta_beta3_U1.5_spinless] {data/gap_vs_nqb.dat};
    \addplot[mark=pentagon*, mark size=2, mark options={fill=white,solid}, color=six, thick, densely dashed] table[x=n_qb, y=Delta_beta3_U1.5_spinful] {data/gap_vs_nqb.dat};
    \addplot[mark=x, mark size=2, color=eight, thick] table[x=n_qb, y=Delta_beta3_U2_spinless] {data/gap_vs_nqb.dat};
        \addplot[mark=x, mark size=2, mark options={fill=white,solid}, color=eight, thick, densely dashed] table[x=n_qb, y=Delta_beta3_U2_spinful] {data/gap_vs_nqb.dat};
\end{groupplot}

\end{tikzpicture}
    \caption{Gap $\Delta$ of the Lindbladian for the spinless and spinful Fermi-Hubbard model in the regime of strong interactions. The filter function remains Gaussian, the jump operators are single-site Majorana operators. In the spinful model, spin-up and spin-down are counted as separate sites, resulting in only even-numbered $n_\mathrm{sites}$.}
    \label{fig:strong_interaction}
\end{figure}

\paragraph{Metropolis filter function}
One problem that comes with using a Gaussian filter function is apparent in Eqs.~\eqref{eqn: free gap, Gaussian} and \eqref{eq:gauss-delta0}, which is the dependence of the (unperturbed) Lindbladian gap $\Delta_0$ on the inverse temperature $\beta$ as $\Delta_0 \sim e^{-\beta^2}$. This means that while for high temperatures the gap might be substantial, it closes relatively rapidly upon cooling of the model. The authors of Reference \cite{ding2024efficient} also note that using a Gaussian filter function can cause inefficiencies because the size of its support shrinks with $\beta^{-1}$, limiting the available transitions between energy eigenstates.

It therefore seems reasonable to investigate the behaviour of the gap $\Delta$ when using a different kind of filter function. We opt for a Metropolis-type filter mentioned in Eq.~\eqref{eq:metropolis}, which is
\begin{equation}
    \hat{f}^a(\nu) = e^{-\sqrt{1 + \beta^2 \nu^2}} \, w(\nu / S) \, e^{-\beta\nu/4}
\end{equation}
where we now use $S = 10$.

The gap of the unperturbed Lindbladian is then 
\begin{align}
        \Delta_0 &= 2 \cdot \min_i q(4\epsilon_i)^2 \cosh(2\beta \epsilon_i)\nonumber\\
        &= 2 \cdot \min_i e^{-2\sqrt{1 + 16 \beta^2 \epsilon_i^2}}  w(4\epsilon_i / S)^2 \cosh(2\beta \epsilon_i)\nonumber\\
         &= 2 e^{-2\sqrt{1 + 16 \beta^2 \|h\|^2}}  w(4\|h\| / S)^2 \cosh(2\beta \|h\|)\,,
\end{align}
which does not close as long as $S > 4\|h\|$ and $\|h\| = \mathcal{O}(1)$. In the case of the 1D spinless chain \eqref{eqn: free 1D Hamiltonian}, we have that $\|h\| = \frac{t}{2}\cos\left(\frac{\pi}{n+1}\right)$, and so the size of the gap
\begin{align}\label{eq:metropolis-unperturbed-gap}
        \Delta_0 \geq 2 e^{-2\sqrt{1 + 4 \beta^2 t^2}}  w(2t / S)^2 \cosh(\beta t)
\end{align}
scales better with $\beta$ than the Gaussian filter function.

Figure~\ref{fig:metropolis} shows the gap in the same setup as Fig.~\ref{fig:strong_interaction}, but using a Metropolis-type filter instead of the Gaussian filter function. Qualitatively, Fig.~\ref{fig:metropolis} has similar features to Fig.~\ref{fig:strong_interaction}, and much of the comments from above still hold. However, notice the relative difference of $\Delta$ between the $\beta=1$ and $\beta=3$ plots. With a Gaussian filter function, already at $\beta=3$, the magnitude of the gap shrinks quite considerably versus the $\beta=1$ case. The Metropolis-type filter causes similar behaviour, but to a much lesser degree, as could be expected from the analysis of $\Delta_0$.

\begin{figure}[tbp]
    \centering
    \begin{tikzpicture}

\definecolor{zero}{RGB}{180, 180, 180}
\definecolor{one}{RGB}{206, 104, 104}
\definecolor{two}{RGB}{231, 195, 146}
\definecolor{three}{RGB}{231, 217, 146}
\definecolor{four}{RGB}{209, 223, 141}
\definecolor{five}{RGB}{116, 185, 116}
\definecolor{six}{RGB}{126, 147, 165}
\definecolor{seven}{RGB}{146, 136, 176}
\definecolor{eight}{RGB}{166, 124, 166}

\begin{groupplot}[
    height=.57\columnwidth,
    width=.9\columnwidth,
    xmin=1.8, xmax=11.2,
    ymode=log,
    xlabel={$n_\mathrm{sites}$}, ylabel={$\Delta$},
    axis on top,
    xtick={2, 3, ..., 11},
    legend cell align={left},
    legend style={
        at={(0.5,1.02)},
        anchor=south,
        cells={anchor=west},
        /tikz/every even column/.append style={column sep=1em},
        /tikz/every odd column/.append style={column sep=.1em}},
    legend columns=2,
    group style={
        group size=1 by 2,
        vertical sep=5pt,
      },
    ]
    \nextgroupplot[
      ymin=0.645, ymax=0.788,
      xlabel=\empty,
      xticklabels={}
    ]
    \node at (rel axis cs:.5,.9) [anchor=center] {$\beta=1$};
    \addlegendimage{empty legend}
    \addlegendimage{empty legend}
    \addlegendentry{}
    \addlegendentry{\hspace{-5.1em}$\boldsymbol{U}$}
    \addplot[mark=*, mark size=1.5, color=one, thick] table[x=n_qb, y=Delta_beta1_U0_spinless] {data/gap_vs_nqb_metropolis.dat};
    \addlegendentry{$0.0$, spinless}
    \addplot[mark=*, mark size=1.5, mark options={fill=white,solid}, color=one, thick, densely dashed] table[x=n_qb, y=Delta_beta1_U0_spinful] {data/gap_vs_nqb_metropolis.dat};
    \addlegendentry{$0.0$, spinful}

    \addplot[mark=square*, mark size=1.5, color=two, thick] table[x=n_qb, y=Delta_beta1_U0.5_spinless] {data/gap_vs_nqb_metropolis.dat};
    \addlegendentry{$0.5$, spinless}
    \addplot[mark=square*, mark size=1.5, mark options={fill=white,solid}, color=two, thick, densely dashed] table[x=n_qb, y=Delta_beta1_U0.5_spinful] {data/gap_vs_nqb_metropolis.dat};
    \addlegendentry{$0.5$, spinful}

    \addplot[mark=triangle*, mark size=2, color=five, thick] table[x=n_qb, y=Delta_beta1_U1_spinless] {data/gap_vs_nqb_metropolis.dat};
    \addlegendentry{$1.0$, spinless}
    \addplot[mark=triangle*, mark size=2, mark options={fill=white,solid}, color=five, thick, densely dashed] table[x=n_qb, y=Delta_beta1_U1_spinful] {data/gap_vs_nqb_metropolis.dat};
    \addlegendentry{$1.0$, spinful}

    \addplot[mark=pentagon*, mark size=2, color=six, thick] table[x=n_qb, y=Delta_beta1_U1.5_spinless] {data/gap_vs_nqb_metropolis.dat};
    \addlegendentry{$1.5$, spinless}
    \addplot[mark=pentagon*, mark size=2, mark options={fill=white,solid}, color=six, thick, densely dashed] table[x=n_qb, y=Delta_beta1_U1.5_spinful] {data/gap_vs_nqb_metropolis.dat};
    \addlegendentry{$1.5$, spinful}

    \addplot[mark=x, mark size=2, color=eight, thick] table[x=n_qb, y=Delta_beta1_U2_spinless] {data/gap_vs_nqb_metropolis.dat};
    \addlegendentry{$2.0$, spinless}
        \addplot[mark=x, mark size=2, mark options={fill=white,solid}, color=eight, thick, densely dashed] table[x=n_qb, y=Delta_beta1_U2_spinful] {data/gap_vs_nqb_metropolis.dat};
    \addlegendentry{$2.0$, spinful}

    \nextgroupplot[
      ymin=0.62, ymax=0.86,
    ]
    \node at (rel axis cs:.5,.9) [anchor=center] {$\beta=3$};
    \addplot[mark=*, mark size=1.5, color=one, thick] table[x=n_qb, y=Delta_beta3_U0_spinless] {data/gap_vs_nqb_metropolis.dat};
    \addplot[mark=*, mark size=1.5, mark options={fill=white,solid}, color=one, thick, densely dashed] table[x=n_qb, y=Delta_beta3_U0_spinful] {data/gap_vs_nqb_metropolis.dat};

    \addplot[mark=square*, mark size=1.5, color=two, thick] table[x=n_qb, y=Delta_beta3_U0.5_spinless] {data/gap_vs_nqb_metropolis.dat};
    \addplot[mark=square*, mark size=1.5, mark options={fill=white,solid}, color=two, thick, densely dashed] table[x=n_qb, y=Delta_beta3_U0.5_spinful] {data/gap_vs_nqb_metropolis.dat};

    \addplot[mark=triangle*, mark size=2, color=five, thick] table[x=n_qb, y=Delta_beta3_U1_spinless] {data/gap_vs_nqb_metropolis.dat};
    \addplot[mark=triangle*, mark size=2, mark options={fill=white,solid}, color=five, thick, densely dashed] table[x=n_qb, y=Delta_beta3_U1_spinful] {data/gap_vs_nqb_metropolis.dat};

    \addplot[mark=pentagon*, mark size=2, color=six, thick] table[x=n_qb, y=Delta_beta3_U1.5_spinless] {data/gap_vs_nqb_metropolis.dat};
    \addplot[mark=pentagon*, mark size=2, mark options={fill=white,solid}, color=six, thick, densely dashed] table[x=n_qb, y=Delta_beta3_U1.5_spinful] {data/gap_vs_nqb_metropolis.dat};
    \addplot[mark=x, mark size=2, color=eight, thick] table[x=n_qb, y=Delta_beta3_U2_spinless] {data/gap_vs_nqb_metropolis.dat};
        \addplot[mark=x, mark size=2, mark options={fill=white,solid}, color=eight, thick, densely dashed] table[x=n_qb, y=Delta_beta3_U2_spinful] {data/gap_vs_nqb_metropolis.dat};
\end{groupplot}

\end{tikzpicture}
    \caption{Gap $\Delta$ of the Lindbladian for the spinless and spinful Fermi-Hubbard model at different interaction strengths $U$. In the spinful model, spin-up and spin-down are counted as separate sites, resulting in only even-numbered $n_\mathrm{sites}$. The setup is identical to that in Figure \ref{fig:strong_interaction}, but the filter function is now of Metropolis-type, see Eq.~\eqref{eq:metropolis}.}
    \label{fig:metropolis}
\end{figure}
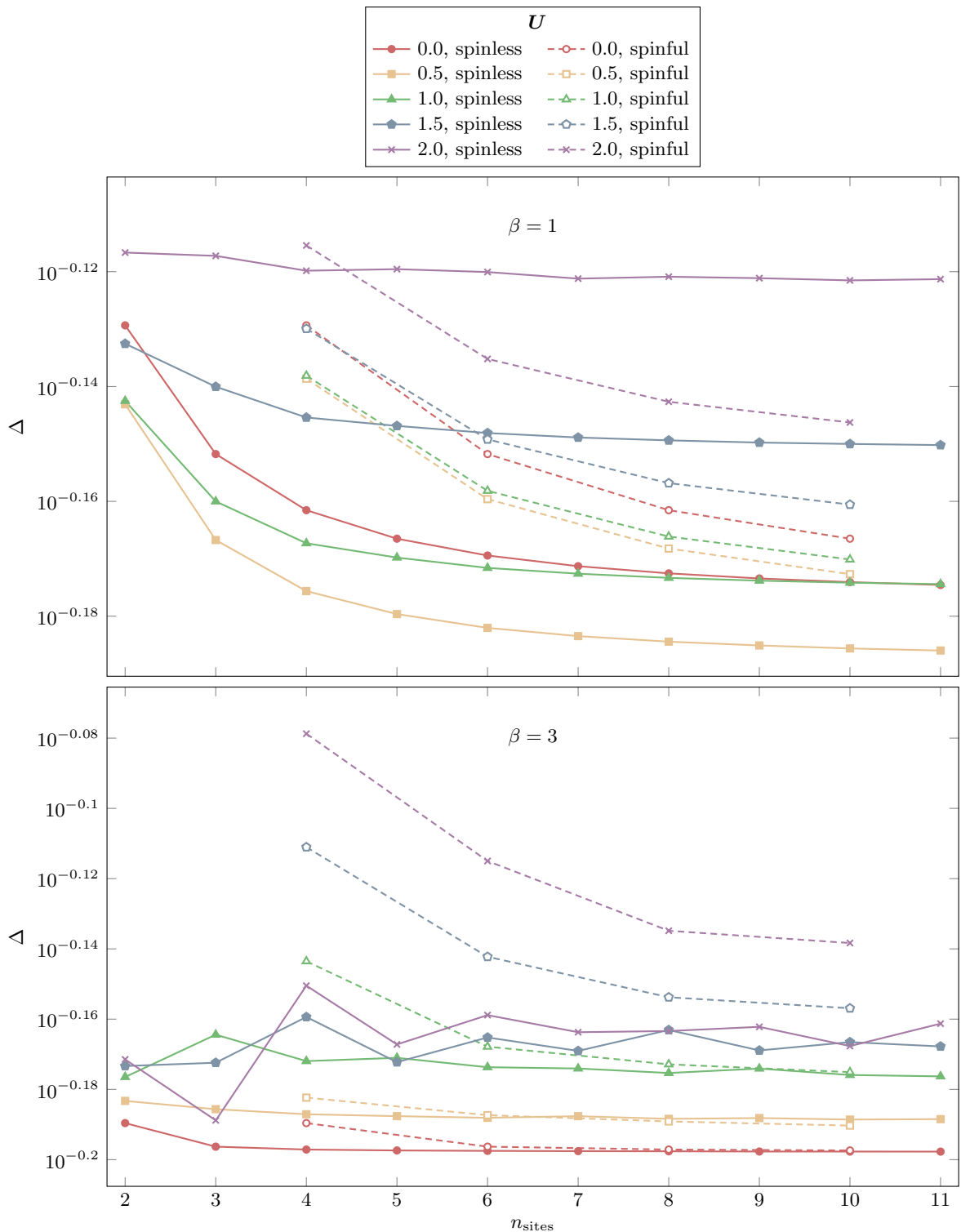

\paragraph{Pauli jump operators and Metropolis filter function}
The last modification we look at is to use different jump operators. So far, we described all operators in terms of fermionic creation and annihilation operators, and the results are therefore independent of how the mapping to qubits is performed. In this part, however, we will first map the fermionic system onto qubit operators using the Jordan-Wigner (JW) transformation~(see Eq.~\eqref{eq:jordan-wigner}). Performing the JW transformation manually for the spinless Fermi-Hubbard model, its Hamiltonian becomes
\begin{align}
    \label{eq:hubbard_1d}
    H &= -t \sum_{i=1}^{n-1} (c_i ^ \dagger c_{i+1} + c_{i+1}^\dagger c_i) + U \sum_{i=1}^{n-1} N_i N_{i+1}\nonumber\\
    &\equiv \sum_{i=1}^{n-1} -\frac{t}{2}(X_i X_{i+1} + Y_i Y_{i+1}) + \frac{U}{4} (Z_i Z_{i+1} + Z_{i+1} + Z_i + I).
\end{align}
In this transformed system, we then use single-site Pauli operators as the jump operators in the Gibbs state preparation algorithm. The filter function remains of Metropolis-type as in the previous paragraph. Figure~\ref{fig:pauli_jump} shows the dependence of the Lindbladian gap on the system size. While this setup is not supported by any of the theorems presented earlier in this work, it yields some interesting behaviour that is worth commenting on. We thus continue with some observations and speculative remarks regarding these results:

\begin{itemize}
\item \emph{Size convergence of spinless model.} At the higher temperature of $\beta = 1$, the spinless variant of the model yields very smooth curves that\,---\,as in previous setups\,---\,show a gap that shrinks when increasing the system size. This behaviour, though, seems to approach some saturation relatively quickly, comparable to the setup with Majorana jump operators. Importantly, a key distinction is that while for the setup with Majorana operators we can prove that the gap for $U=0$ is lower bounded for system sizes $n_\mathrm{sites}\rightarrow\infty$, the same cannot be said for the setup with Pauli operators. However, the results in Fig. \ref{fig:pauli_jump} give some confidence that the asymptotic behaviour might be similar. The lower temperature $\beta = 3$ for the spinless case shows qualitatively similar behaviour, even though the data is not quite as smooth.

\item \emph{Size convergence of spinful model.} When including spin, the data for both considered temperatures become much flatter, staying almost constant at $\beta = 1$ and seemingly fluctuating around a constant for $\beta = 3$. This could hint at even better scaling behaviour than the spinless case, even though four data points might not be enough to draw any strong conclusions.

\item \emph{Temperature stability.} A quite significant difference when using Pauli operators is the temperature dependence of the gap. Recall that with Majorana operators, at $U=0$ a Gaussian filter function yields $\Delta_0 \sim e^{-\beta^2}$ (Eq.~\eqref{eq:gauss-delta0}), and a Metropolis-type filter gives $\Delta_0 \sim e^{-\beta}$ (Eq.~\eqref{eq:metropolis-unperturbed-gap}). Conversely, our numerics suggest that using Pauli jump operators, the temperature dependence of the gap size is quite strongly suppressed. To illustrate this point further, Fig.~\ref{fig:usweep} shows the size of the Lindbladian gap depending on the interaction strength $U$ at different temperatures $\beta = 1, 5, 25$. At high temperatures, the dependence of $\Delta$ on $U$ is quite smooth, and the system size has relatively little influence on it. As the system is cooled down, the overall shape remains roughly the same, but much more structure with rapid oscillations emerges. Crucially, however, the magnitude of the gap seems quite unaffected by the temperature.

\item \emph{Dependence of $\Delta$ on $U$.} Figure~\ref{fig:usweep} also gives some confidence in regards to the stability of the gap across a wide range of interaction strengths $-5 \lesssim U \lesssim 10$, complementing the analytical result that holds in the region around $U\approx 0$. Notice that there is a sharp drop of the gap $\Delta$ to $0$ as $|U|$ approaches the limit of the support $S$ of the filter function (recall that $S=10$ for these simulations). However, to facilitate large $U$, increasing the size of the support only incurs an overhead of the algorithm that is polylogarithmic in $S$, as per \cite[Theorem 34]{ding2024efficient}.
\end{itemize}
\enlargethispage{0.2cm}

\begin{figure}[H]
    \centering
    \begin{tikzpicture}

\definecolor{zero}{RGB}{180, 180, 180}
\definecolor{one}{RGB}{206, 104, 104}
\definecolor{two}{RGB}{231, 195, 146}
\definecolor{three}{RGB}{231, 217, 146}
\definecolor{four}{RGB}{209, 223, 141}
\definecolor{five}{RGB}{116, 185, 116}
\definecolor{six}{RGB}{126, 147, 165}
\definecolor{seven}{RGB}{146, 136, 176}
\definecolor{eight}{RGB}{166, 124, 166}

\begin{groupplot}[
    height=.6\columnwidth,
    width=.9\columnwidth,
    xmin=1.8, xmax=11.2,
    ymode=log,
    xlabel={$n_\mathrm{sites}$}, ylabel={$\Delta$},
    axis on top,
    xtick={2, 3, ..., 11},
    legend cell align={left},
    legend style={
        at={(0.5,1.02)},
        anchor=south,
        cells={anchor=west},
        /tikz/every even column/.append style={column sep=1em},
        /tikz/every odd column/.append style={column sep=.1em}},
    legend columns=2,
    group style={
        group size=1 by 2,
        vertical sep=5pt,
      },
    ]
    \nextgroupplot[
      xlabel=\empty,
      xticklabels={}
    ]
    \node at (rel axis cs:.5,.9) [anchor=center] {$\beta=1$};
    \addlegendimage{empty legend}
    \addlegendimage{empty legend}
    \addlegendentry{}
    \addlegendentry{\hspace{-5.1em}$\boldsymbol{U}$}
    \addplot[mark=*, mark size=1.5, color=one, thick] table[x=n_qb, y=Delta_beta1_U0_spinless] {data/gap_vs_nqb_paulis_metropolis.dat};
    \addlegendentry{$0.0$, spinless}
    \addplot[mark=*, mark size=1.5, mark options={fill=white,solid}, color=one, thick, densely dashed] table[x=n_qb, y=Delta_beta1_U0_spinful] {data/gap_vs_nqb_paulis_metropolis.dat};
    \addlegendentry{$0.0$, spinful}

    \addplot[mark=square*, mark size=1.5, color=two, thick] table[x=n_qb, y=Delta_beta1_U0.5_spinless] {data/gap_vs_nqb_paulis_metropolis.dat};
    \addlegendentry{$0.5$, spinless}
    \addplot[mark=square*, mark size=1.5, mark options={fill=white,solid}, color=two, thick, densely dashed] table[x=n_qb, y=Delta_beta1_U0.5_spinful] {data/gap_vs_nqb_paulis_metropolis.dat};
    \addlegendentry{$0.5$, spinful}

    \addplot[mark=triangle*, mark size=2, color=five, thick] table[x=n_qb, y=Delta_beta1_U1_spinless] {data/gap_vs_nqb_paulis_metropolis.dat};
    \addlegendentry{$1.0$, spinless}
    \addplot[mark=triangle*, mark size=2, mark options={fill=white,solid}, color=five, thick, densely dashed] table[x=n_qb, y=Delta_beta1_U1_spinful] {data/gap_vs_nqb_paulis_metropolis.dat};
    \addlegendentry{$1.0$, spinful}

    \addplot[mark=pentagon*, mark size=2, color=six, thick] table[x=n_qb, y=Delta_beta1_U1.5_spinless] {data/gap_vs_nqb_paulis_metropolis.dat};
    \addlegendentry{$1.5$, spinless}
    \addplot[mark=pentagon*, mark size=2, mark options={fill=white,solid}, color=six, thick, densely dashed] table[x=n_qb, y=Delta_beta1_U1.5_spinful] {data/gap_vs_nqb_paulis_metropolis.dat};
    \addlegendentry{$1.5$, spinful}

    \addplot[mark=x, mark size=2, color=eight, thick] table[x=n_qb, y=Delta_beta1_U2_spinless] {data/gap_vs_nqb_paulis_metropolis.dat};
    \addlegendentry{$2.0$, spinless}
        \addplot[mark=x, mark size=2, mark options={fill=white,solid}, color=eight, thick, densely dashed] table[x=n_qb, y=Delta_beta1_U2_spinful] {data/gap_vs_nqb_paulis_metropolis.dat};
    \addlegendentry{$2.0$, spinful}

    \nextgroupplot[
    ]
    \node at (rel axis cs:.5,.9) [anchor=center] {$\beta=3$};
    \addplot[mark=*, mark size=1.5, color=one, thick] table[x=n_qb, y=Delta_beta3_U0_spinless] {data/gap_vs_nqb_paulis_metropolis.dat};
    \addplot[mark=*, mark size=1.5, mark options={fill=white,solid}, color=one, thick, densely dashed] table[x=n_qb, y=Delta_beta3_U0_spinful] {data/gap_vs_nqb_paulis_metropolis.dat};

    \addplot[mark=square*, mark size=1.5, color=two, thick] table[x=n_qb, y=Delta_beta3_U0.5_spinless] {data/gap_vs_nqb_paulis_metropolis.dat};
    \addplot[mark=square*, mark size=1.5, mark options={fill=white,solid}, color=two, thick, densely dashed] table[x=n_qb, y=Delta_beta3_U0.5_spinful] {data/gap_vs_nqb_paulis_metropolis.dat};

    \addplot[mark=triangle*, mark size=2, color=five, thick] table[x=n_qb, y=Delta_beta3_U1_spinless] {data/gap_vs_nqb_paulis_metropolis.dat};
    \addplot[mark=triangle*, mark size=2, mark options={fill=white,solid}, color=five, thick, densely dashed] table[x=n_qb, y=Delta_beta3_U1_spinful] {data/gap_vs_nqb_paulis_metropolis.dat};

    \addplot[mark=pentagon*, mark size=2, color=six, thick] table[x=n_qb, y=Delta_beta3_U1.5_spinless] {data/gap_vs_nqb_paulis_metropolis.dat};
    \addplot[mark=pentagon*, mark size=2, mark options={fill=white,solid}, color=six, thick, densely dashed] table[x=n_qb, y=Delta_beta3_U1.5_spinful] {data/gap_vs_nqb_paulis_metropolis.dat};
    \addplot[mark=x, mark size=2, color=eight, thick] table[x=n_qb, y=Delta_beta3_U2_spinless] {data/gap_vs_nqb_paulis_metropolis.dat};
        \addplot[mark=x, mark size=2, mark options={fill=white,solid}, color=eight, thick, densely dashed] table[x=n_qb, y=Delta_beta3_U2_spinful] {data/gap_vs_nqb_paulis_metropolis.dat};
\end{groupplot}

\end{tikzpicture}
    \caption{Gap $\Delta$ of the Lindbladian for the spinless and spinful Fermi-Hubbard model at different interaction strengths $U$. In the spinful model, spin-up and spin-down are counted as separate sites, resulting in only even-numbered $n_\mathrm{sites}$. The setup is identical to that in \ref{fig:metropolis}, but the jump operators are now single-site Pauli operators.}
    \label{fig:pauli_jump}
\end{figure}
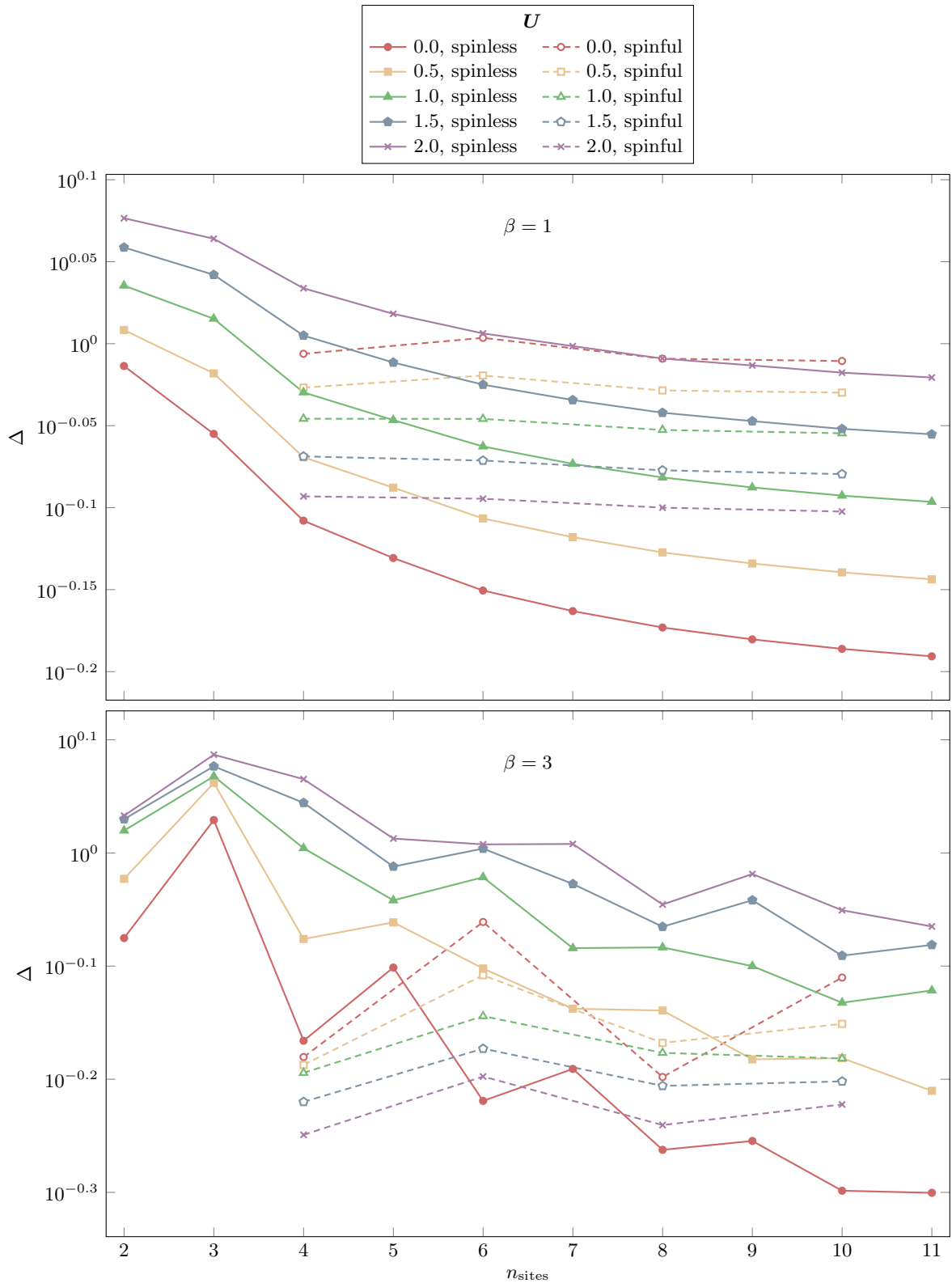

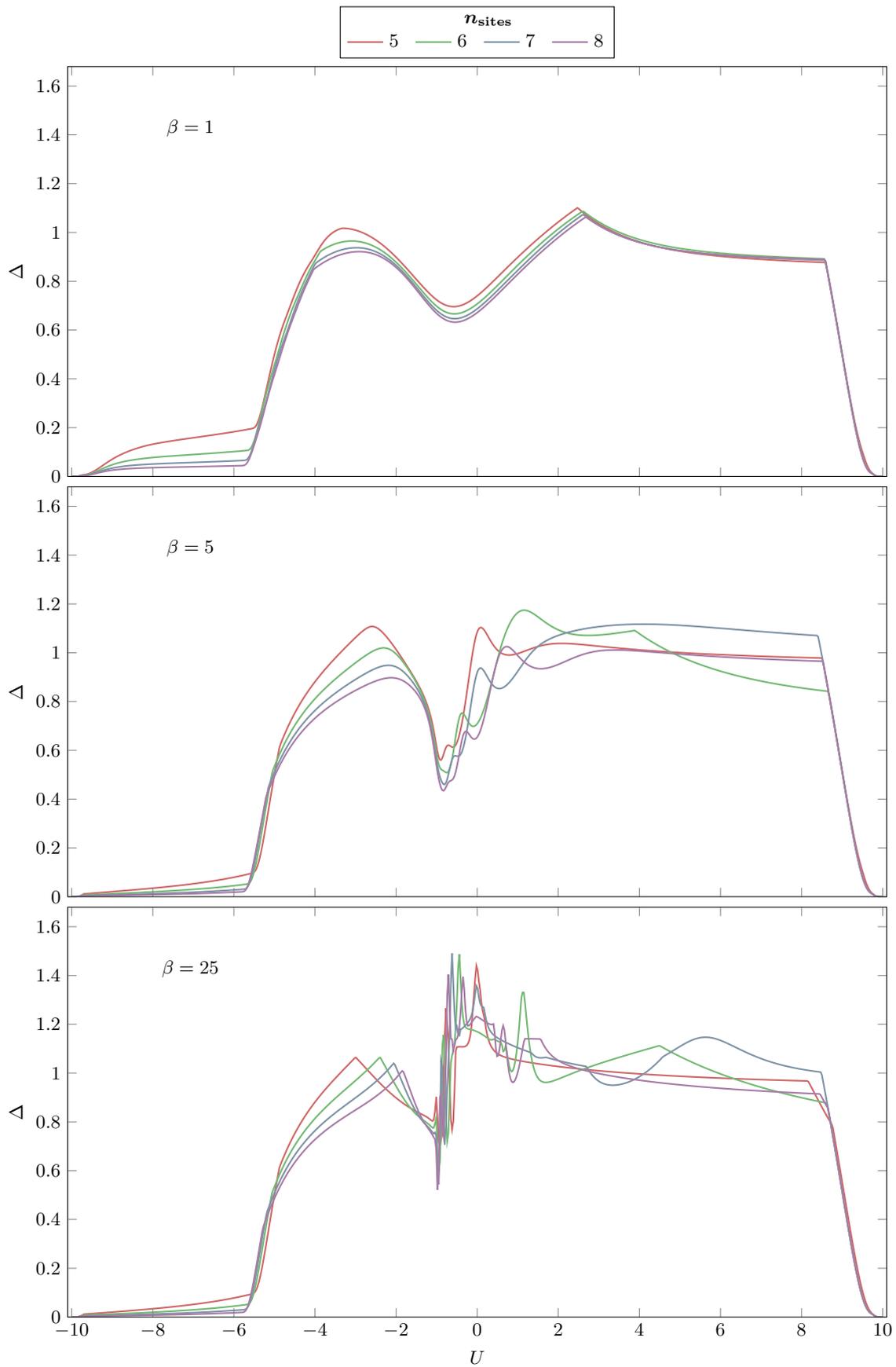
\begin{figure}[p]
    \centering
    \begin{tikzpicture}

\definecolor{zero}{RGB}{180, 180, 180}
\definecolor{one}{RGB}{206, 104, 104}
\definecolor{two}{RGB}{231, 195, 146}
\definecolor{three}{RGB}{231, 217, 146}
\definecolor{four}{RGB}{209, 223, 141}
\definecolor{five}{RGB}{116, 185, 116}
\definecolor{six}{RGB}{126, 147, 165}
\definecolor{seven}{RGB}{146, 136, 176}
\definecolor{eight}{RGB}{166, 124, 166}

\begin{groupplot}[
    height=.5\columnwidth,
    width=.9\columnwidth,
    xmin=-10.1, xmax=10.1,
    ymin=0, ymax=1.68,
    xlabel={$U$}, ylabel={$\Delta$},
    axis on top,
    legend cell align={left},
    legend style={
        at={(0.5,1.02)},
        anchor=south,
        cells={anchor=west},
        /tikz/every even column/.append style={column sep=.6em},
        /tikz/every odd column/.append style={column sep=.05em}},
    legend columns=4,
        group style={
        group size=1 by 3,
        vertical sep=5pt,
      },
    ]
    \nextgroupplot[
      xlabel=\empty,
      xticklabels={}
    ]
    \node at (rel axis cs:.15,.85) [anchor=center] {$\beta=1$};
    \addlegendimage{empty legend}
    \addlegendimage{empty legend}
    \addlegendimage{empty legend}
    \addlegendimage{empty legend}
    \addlegendentry{}
    \addlegendentry{}
    \addlegendentry{\hspace{-3.5em}$\boldsymbol{n_\mathrm{sites}}$}
    \addlegendentry{}
    \addplot[mark=none, color=one, thick] table[x=U, y=Delta_nqb5_beta1] {data/usweep.dat};
    \addlegendentry{$5$}
    \addplot[mark=none, color=five, thick] table[x=U, y=Delta_nqb6_beta1] {data/usweep.dat};
    \addlegendentry{$6$}
    \addplot[mark=none, color=six, thick] table[x=U, y=Delta_nqb7_beta1] {data/usweep.dat};
    \addlegendentry{$7$}
    \addplot[mark=none, color=eight, thick] table[x=U, y=Delta_nqb8_beta1] {data/usweep.dat};
    \addlegendentry{$8$}
    
    \nextgroupplot[
      xlabel=\empty,
      xticklabels={}
    ]
    \node at (rel axis cs:.15,.85) [anchor=center] {$\beta=5$};
    \addplot[mark=none, color=one, thick] table[x=U, y=Delta_nqb5_beta5] {data/usweep.dat};
    \addplot[mark=none, color=five, thick] table[x=U, y=Delta_nqb6_beta5] {data/usweep.dat};
    \addplot[mark=none, color=six, thick] table[x=U, y=Delta_nqb7_beta5] {data/usweep.dat};
    \addplot[mark=none, color=eight, thick] table[x=U, y=Delta_nqb8_beta5] {data/usweep.dat};
    
    \nextgroupplot
    \node at (rel axis cs:.15,.85) [anchor=center] {$\beta=25$};
    \addplot[mark=none, color=one, thick] table[x=U, y=Delta_nqb5_beta25] {data/usweep.dat};
    \addplot[mark=none, color=five, thick] table[x=U, y=Delta_nqb6_beta25] {data/usweep.dat};
    \addplot[mark=none, color=six, thick] table[x=U, y=Delta_nqb7_beta25] {data/usweep.dat};
    \addplot[mark=none, color=eight, thick] table[x=U, y=Delta_nqb8_beta25] {data/usweep.dat};
    
\end{groupplot}

\end{tikzpicture}
    \caption{Lindbladian gap $\Delta$ as a function of the interaction strength $U$ in the spinless Fermi-Hubbard model at different inverse temperatures $\beta$ when using a Metropolis-type filter function with a support of $S=10$ and single-site Pauli operators as jump operators. Observe that the magnitude of the gap does not decay w.r.t. $\beta$, and that it closes completely for $|U| \gtrsim S$.}
    \label{fig:usweep}
\end{figure}


\clearpage

\section{Outlook}
\label{sec:outlook}

We have shown that the Gibbs states of weakly interacting fermions in fixed dimension and at any constant temperature can be efficiently created in polynomial time on quantum computers. It would be great to potentially improve the exact dependencies on all the relevant parameters in our main result \eqref{eq:main-result}, as well as to work out all the hidden constants therein. One might then also fine-tune the choice of jump operators and filter functions for specific systems and parameter regimes, potentially even improving on the current cubic complexity in system size in \eqref{eq:main-result} all the way down to quasi-linear. The rapid mixing of free fermions serves as a good sign for the weakly interacting case to also mix rapidly, which would bring down one factor of $n$. As we have seen, the cubic dependency on the system size is then mainly due to the quadratic dependence on the number of jump operators taken, which needs to be linear so that the quantum Markov semigroup can be irreducible and ergodic. However, we could potentially use different sets of jump operators at different times, consisting of as few as a single jump operator at any given time. The generated dynamics could then not be described by a single quantum Markov semigroup, and any one of the corresponding Lindbladians would not be able to get us from an arbitrary starting position to the Gibbs state; however, each one could get us slightly closer, and the combination of all of them could create a complicated path in the state space eventually getting us to the Gibbs state, with potentially significantly better dependency on the system size. Though such dynamics would be very challenging to analyse.

Furthermore, in order to determine the classical-quantum efficiency boundary for the Fermi-Hubbard model in the absence of reasonable quantum computers, it would be important to perform larger-scale classical simulations of the quantum Gibbs samplers, with varying design parameters to estimate the relevant spectral gap (e.g., based on tensor network methods \cite{biamonte2017tensornutshell,mortier2024tensornetworks}). Especially for translationally invariant systems in $D=1$, for which the parent Hamiltonian would also inherit the invariance, one could use the imaginary-time iTEBD algorithm to simulate the evolution or iDMRG to calculate the spectral gap for infinite-sized systems. Any such findings should then be compared to the state-of-the-art classical results \cite{Arovas_2022,Qin2022hubbardmodel} to make statements about potentially practical quantum advantages. In turn, this will also require to complete the quantum complexity analysis to an end-to-end fashion \cite[Section 1.1]{dalzell2023algorithmsreview}, where the exact properties of interest are sampled from the prepared quantum Gibbs state \cite[Section 5]{baertschi2024usecaseslosalamos} (such as, e.g., correlations functions). This will lead to a significant overhead cost; for example, the approximation error for estimating any observable will scale at the very least as $\mathcal{O}(1/\epsilon)$\,---\,which is on top of the $\mathcal{O}(\text{polylog}(1/\epsilon))$ scaling of the quantum Gibbs state preparation itself. Ultimately and from an application perspective, one can then compare what are the computational costs of resolving the phase diagram of the Fermi-Hubbard model at different interaction strengths.

Finally, our presented proof methods based on eigenvalue perturbation techniques also seem promising to explore other quantum many-body systems in different regimes. As a first step, the extensions presented in Section \ref{sec:stability atomic} are easily shown to hold for any Hamiltonians that are separable in the lattice sites.


\begin{acknowledgments}
During the write-up of our manuscript, we have become aware that Yu Tong and Yongtao Zhan were independently working on the same topic. We thank them for agreeing on a date for uploading to the arXiv together. MB and RM acknowledge support from the EPSRC Grant number EP/W032643/1. MB acknowledges funding by the European Research Council (ERC Grant Agreement No.~948139) and the Excellence Cluster Matter and Light for Quantum Computing (ML4Q). The authors acknowledge the use of the Imperial College Research Computing Service~\cite{rcs} to obtain some of the results in this work.
\end{acknowledgments}



\bibliographystyle{alphaurl}
\bibliography{sample}


\appendix

\section{Useful Lemmas}
\label{app:lemmas}

\begin{lemma}\label{lemma:bound on perturbed evolution}
For any operator $O$, Hermitian operators $H_0, V$ and  $\lambda,\alpha\in\mathbb{C}$, we have
\begin{align}
    \|e^{\alpha (H_0+\lambda V)}Oe^{-\alpha (H_0+\lambda V)} 
    -
    e^{\alpha H_0}Oe^{-\alpha H_0} 
    \|
    \le
    |\lambda|
    \,|\alpha|\,
    \max_{s\in [0,1]}
    \|
    [V, e^{s\alpha H_0}Oe^{-s\alpha H_0}]
    \|
    \,.
\end{align}
\end{lemma}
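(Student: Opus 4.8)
The plan is to prove this by introducing the interpolating family $A(s) := e^{s\alpha(H_0+\lambda V)} O e^{-s\alpha(H_0+\lambda V)}$ on $s \in [0,1]$ and comparing it with the unperturbed analog, then using the fundamental theorem of calculus together with Duhamel's formula. Concretely, I would write the difference as a telescoping integral: define $F(s) = e^{s\alpha(H_0+\lambda V)} \, e^{-s\alpha H_0}$, so that $e^{\alpha(H_0+\lambda V)} O e^{-\alpha(H_0+\lambda V)} - e^{\alpha H_0} O e^{-\alpha H_0} = F(1)\big(e^{\alpha H_0} O e^{-\alpha H_0}\big)F(1)^{-1} - e^{\alpha H_0} O e^{-\alpha H_0}$. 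Actually the cleaner route is to directly differentiate $G(s) := e^{s\alpha(H_0+\lambda V)} \big(e^{-s\alpha H_0} O e^{s\alpha H_0}\big) e^{-s\alpha(H_0+\lambda V)}$; one checks $G(0) = O$ and $G(1) = e^{\alpha(H_0+\lambda V)} O e^{-\alpha(H_0+\lambda V)}$ conjugated appropriately — so I should set it up so the $s=1$ endpoint is exactly the perturbed conjugation and the $s=0$ endpoint is the unperturbed one.

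The slick version: let $\Phi(s) = e^{s\alpha(H_0+\lambda V)} \, e^{-s\alpha H_0} \big(e^{s\alpha H_0} O e^{-s\alpha H_0}\big) e^{s\alpha H_0} \, e^{-s\alpha(H_0+\lambda V)}$. Hmm, this is getting circular. The genuinely clean approach: define $D(s) = e^{s\alpha(H_0+\lambda V)} O e^{-s\alpha(H_0+\lambda V)}$ and $D_0(s) = e^{s\alpha H_0} O e^{-s\alpha H_0}$, and consider the "mixed" quantity $K(s) = e^{s\alpha(H_0+\lambda V)} \, e^{-s\alpha H_0}\, D_0(1)\, e^{s\alpha H_0}\, e^{-s\alpha(H_0+\lambda V)}$ — no. Let me instead just use the standard trick: write the difference of the two conjugations as an integral over the interpolation parameter $s$ that turns $H_0$ into $H_0 + \lambda V$ inside the exponentials. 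Set $W(s) = H_0 + s\lambda V$ and $J(s) = e^{\alpha W(s)} O e^{-\alpha W(s)}$. Then $J(1) - J(0)$ is exactly the left-hand side, and $\frac{d}{ds}J(s) = \int_0^1 \alpha \lambda\, e^{u\alpha W(s)} V e^{-u\alpha W(s)}\, du \cdot J(s) - J(s) \cdot \int_0^1 \alpha\lambda\, e^{u\alpha W(s)} V e^{-u\alpha W(s)}\, du$ by Duhamel, i.e. $\frac{d}{ds}J(s) = \alpha\lambda \int_0^1 \big[ e^{u\alpha W(s)} V e^{-u\alpha W(s)},\, J(s)\big]\, du$. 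Taking norms, using submultiplicativity and that conjugation by $e^{u\alpha W(s)}$ is an isometry is false for non-unitary (complex $\alpha$) — this is where care is needed.

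The main obstacle is precisely that for complex $\alpha$ (or more generally non-skew-Hermitian $\alpha H_0$), the map $X \mapsto e^{\alpha H_0} X e^{-\alpha H_0}$ is not norm-preserving, so the naive bound would pick up factors like $e^{|\alpha|\|H_0\|}$ that the claimed inequality does not contain. The resolution is to choose the interpolation so that the commutator $[V, e^{s\alpha H_0} O e^{-s\alpha H_0}]$ appears with the \emph{unperturbed} evolution of $O$ inside, and to arrange the Duhamel expansion so that the outer conjugations cancel. I would therefore use the following identity instead: $e^{\alpha(H_0+\lambda V)} O e^{-\alpha(H_0+\lambda V)} - e^{\alpha H_0} O e^{-\alpha H_0} = \int_0^1 \frac{d}{ds}\Big( e^{\alpha s(H_0+\lambda V)}\, \big(e^{\alpha(1-s)H_0} O e^{-\alpha(1-s)H_0}\big)\, e^{-\alpha s(H_0+\lambda V)} \Big)\, ds$, where at $s=0$ the integrand is $e^{\alpha H_0} O e^{-\alpha H_0}$ and at $s=1$ it is $e^{\alpha(H_0+\lambda V)} O e^{-\alpha(H_0+\lambda V)}$. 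Differentiating, the two "bare" $H_0$ pieces from the inner factor and from $\frac{d}{ds}$ of the outer factor cancel, leaving only $\frac{d}{ds}(\text{integrand}) = \alpha\lambda\, e^{\alpha s(H_0+\lambda V)}\, \big[V,\, e^{\alpha(1-s)H_0} O e^{-\alpha(1-s)H_0}\big]\, e^{-\alpha s(H_0+\lambda V)}$. Hmm, but then taking norms still leaves conjugation by $e^{\alpha s(H_0+\lambda V)}$, which is non-isometric for complex $\alpha$ — so strictly the inequality as stated must be implicitly using that $H_0, V$ are such that $\alpha(H_0+\lambda V)$ gives a bounded-norm conjugation, or the statement is meant in a regime where this is absorbed. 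Given the paper's usage (they immediately follow with $e^{4|\beta/4+it|\|h\|_\infty}$-type factors coming from elsewhere), I suspect the intended proof simply bounds $\|e^{\alpha s W}\, Y\, e^{-\alpha s W}\| \le \|e^{\alpha s W}\|\,\|e^{-\alpha s W}\|\,\|Y\|$ is \emph{not} used, and instead one writes the commutator expansion so that only the \emph{unperturbed} conjugation of $O$ shows up, with the remaining factors being conjugations that telescope. I would present the Duhamel identity above, take norms inside the $s$-integral, bound $\|\frac{d}{ds}(\cdot)\| \le |\alpha||\lambda| \max_{s\in[0,1]} \| [V, e^{\alpha(1-s)H_0} O e^{-\alpha(1-s)H_0}] \|$ after renaming $1-s \mapsto s$, and integrate over $s\in[0,1]$ to get the factor $1$ — flagging that the cancellation of the outer $e^{\pm\alpha s W}$ conjugation (which holds exactly as an operator identity for the \emph{derivative} term, since those factors sit outside a commutator that is itself the full $s$-derivative) is the one subtle point to verify carefully.
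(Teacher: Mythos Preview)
Your Duhamel/interpolation identity is exactly the paper's approach: writing $\mathcal{A}=\mathrm{ad}_{H_0}$, $\mathcal{B}=\mathrm{ad}_V$ and applying Duhamel to $e^{\alpha(\mathcal{A}+\lambda\mathcal{B})}-e^{\alpha\mathcal{A}}$ gives $\alpha\lambda\int_0^1 e^{\alpha(1-s)(\mathcal{A}+\lambda\mathcal{B})}\,\mathcal{B}\,e^{\alpha s\mathcal{A}}O\;ds$, which is your formula after $s\mapsto 1-s$. At the step where you hesitate --- removing the outer conjugation by $e^{\alpha(1-s)(H_0+\lambda V)}$ --- the paper simply asserts that $\|e^{\alpha H}Ye^{-\alpha H}\|=\|Y\|$ for all $H,Y$ and $\alpha\in\mathbb{C}$, on the grounds that similarity preserves spectrum.

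Your instinct that this step is problematic is correct, and in fact the lemma as stated is \emph{false} for general complex $\alpha$. Take $H_0=0$, $V=Z$, $O=X$, $\alpha=1$, and $\lambda>0$ real. The left-hand side is
\[
\bigl\|e^{\lambda Z}Xe^{-\lambda Z}-X\bigr\|
=\left\|\begin{pmatrix}0&e^{2\lambda}-1\\ e^{-2\lambda}-1&0\end{pmatrix}\right\|
=e^{2\lambda}-1,
\]
while the right-hand side is $|\lambda|\cdot\|[Z,X]\|=2\lambda$, and $e^{2\lambda}-1>2\lambda$ for every $\lambda>0$. The paper's justification fails because equal spectra do not force equal operator norms when the conjugated operator is not normal. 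So you have not missed a trick; the inequality requires either an additional hypothesis (for instance $\alpha\in i\mathbb{R}$, which makes the outer conjugation unitary and the argument goes through verbatim) or an extra factor such as $\|e^{\alpha(H_0+\lambda V)}\|\,\|e^{-\alpha(H_0+\lambda V)}\|$ on the right-hand side. Note that in the paper's application ($\alpha=\beta/4+it$) this missing factor scales with the spectral width of $H$ and is therefore not obviously system-size independent, so your speculation that it is harmlessly ``absorbed'' downstream is also not clear.
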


\begin{proof}
We first recall Duhamel's formula. For any operators $A,B$:
\begin{align}
    e^{(A+B)t}
    =
    e^{At}
    +
    \int_0^t 
    e^{(A+B)(t-s)}Be^{As}\ \dd s
\end{align}
The proof is simple: if we call $C(t)$ the right hand side, we have that 
\begin{align}
C'(t)=
Ae^{At}
+
Be^{At}
+
\int_0^t
(A+B)    e^{(A+B)(t-s)}Be^{As}\ \dd s
=
(A+B)C(t)
\,.
\end{align}
Solving this differential equation together with $C(0)=\id$ yields the formula.
Then denote 
\begin{align}
    \mathcal{A}
    =
    \mathrm{ad}_{H_0}
    \,,\quad 
    \mathcal{B}
    =
    \mathrm{ad}_V
    \,,
\end{align}
where $\mathrm{ad}_HA=[H,A]$.
Let us also denote the formula we want to study by $f(\lambda,\alpha)$, so that we have
\begin{align}
    f(\lambda,\alpha)
    :=
    \|e^{\alpha(H_0+\lambda V)}Oe^{-\alpha(H_0+\lambda V)} 
    -
    e^{\alpha H_0}Oe^{-\alpha H_0} 
    \|
    =
    \|
    (
    e^{\alpha(\mathcal{A}+\lambda \mathcal{B})}
    -
    e^{\alpha\mathcal{A}}
    )O
    \|
    \,,
\end{align}
where we used the Campbell identity
\begin{align}
    e^{\alpha H}Oe^{-\alpha H}
    =
    e^{\alpha \,\mathrm{ad}_H}O\,.
\end{align}
Duhamel's formula with $A=\alpha\mathcal{A}$, $B=
\alpha\lambda \mathcal{B}$, and $ t=1$ gives:
\begin{align}
    f(\lambda,\alpha)
    =
    |\lambda|
    \, |\alpha|\,
    \left\|
    \int_0^1
    e^{\alpha(\mathcal{A}+\lambda \mathcal{B})(1-s)} \mathcal{B}e^{ \alpha \mathcal{A}s}
    O\, \dd s
    \right\|
    \le 
    |\lambda|
    \, |\alpha|\,
    \int_0^1
    \left\|
    e^{\alpha(\mathcal{A}+\lambda \mathcal{B})(1-s)} \mathcal{B}e^{ \alpha \mathcal{A}s}
    O
    \right\|\ \dd s
\end{align}
Now for any operators $H$ and $O'$, and $\alpha\in\mathbb{C}$, we have
\begin{align}
    \|
    e^{\alpha \mathrm{ad}_H}
    O'
    \|
    =
    \|
    e^{\alpha H}
    O'
    e^{-\alpha H}
    \|
    =
    \|
    O'
    \| \,,
\end{align}
because $e^{\alpha H}O'e^{-\alpha H}$ has the same spectrum of $O'$.
Thus
\begin{align}
    f(\lambda,\alpha)
    \le 
    |\lambda|
    \,|\alpha|\,
    \int_0^1
    \left\|
    \mathcal{B}e^{\alpha \mathcal{A}s}
    O
    \right\|\ \dd s
    \le 
    |\lambda|
    \,|\alpha|\,
    \max_{s\in [0,1]}
    \left\|
    \mathcal{B}e^{\alpha \mathcal{A}s}
    O
    \right\|\,.
\end{align}
Plugging in the definitions of $\mathcal{A}$ and $\mathcal{B}$ gives us the result of the lemma.
\end{proof}

\begin{lemma}\label{lemma:contour change}
For the Lindblad operators $L_a$ as defined in \eqref{eq:jump_op_int_def}, we can express their conjugation by the Gibbs state appearing in the parent Hamiltonian as
\begin{equation}
    \sigma_\beta ^{-1/4} L_a \sigma_\beta ^{1/4} = \int_{-\infty}^\infty f^a(t + i\beta/4) \cdot e^{iHt} A^a e^{-iHt}\ \dd t\,.
\end{equation}
Similarly for the (individual parts of the) coherent term defined in \eqref{eqn:Coherent term}: \begin{equation}
     \sigma_\beta ^{-1/4} G_a \sigma_\beta ^{1/4} = \int_{-\infty}^\infty g(t + i\beta/4) \cdot e^{iHt} L_a ^\dagger L_a e^{-iHt}\ \dd t\,.
\end{equation} 
\end{lemma}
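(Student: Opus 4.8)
\emph{Proof proposal.} The plan is to prove the first identity by a change-of-contour argument and then deduce the second by the same reasoning. First I would note that the normalisation $Z=\operatorname{Tr} e^{-\beta H}$ cancels in the conjugation, so that $\sigma_\beta^{-1/4}(\cdot)\sigma_\beta^{1/4}=e^{\beta H/4}(\cdot)e^{-\beta H/4}$, and it suffices to establish
\begin{equation}
e^{\beta H/4} L_a e^{-\beta H/4} = \int_{-\infty}^{\infty} f^a(t+i\beta/4)\, e^{iHt}A^a e^{-iHt}\ \dd t .
\end{equation}
Inserting the definition \eqref{eq:jump_op_int_def} of $L_a$ and using $e^{\beta H/4}e^{iHt}=e^{iH(t-i\beta/4)}$ together with $e^{-iHt}e^{-\beta H/4}=e^{-iH(t-i\beta/4)}$, the left-hand side becomes $\int_{\mathbb{R}}\Phi(t)\,\dd t$ with $\Phi(z):=f^a(z)\,e^{iH(z-i\beta/4)}A^a e^{-iH(z-i\beta/4)}$. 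Since $\Phi(t+i\beta/4)=f^a(t+i\beta/4)\,e^{iHt}A^a e^{-iHt}$, the desired identity is precisely the assertion that $\int_{\mathbb{R}}\Phi=\int_{\mathbb{R}+i\beta/4}\Phi$.

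Second, I would verify the hypotheses needed to shift the contour across the closed strip $\{\,0\le\operatorname{Im}z\le\beta/4\,\}$. The operator factor $z\mapsto e^{iH(z-i\beta/4)}A^a e^{-iH(z-i\beta/4)}$ is entire, because $H$ acts on a finite-dimensional space, and on the strip it is uniformly bounded, with $\|e^{iH(z-i\beta/4)}A^a e^{-iH(z-i\beta/4)}\|\le e^{2\|H\|\,|\operatorname{Im}z-\beta/4|}\|A^a\|\le e^{\beta\|H\|/2}\|A^a\|$. The filter factor $f^a$ is analytic in a horizontal strip of width at least $\beta/4$ about the real axis and decays super-exponentially as $|\operatorname{Re}z|\to\infty$ uniformly on that strip; for the Gaussian filter \eqref{eq:Gaussian filter} this is immediate since $\hat f^a$ decays super-exponentially and hence $f^a$ is entire (more generally it holds whenever $q^a$ decays sufficiently fast). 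Thus $\Phi$ is analytic on the strip, the integrals over the two horizontal boundaries converge absolutely, and the integrals over the vertical segments $\operatorname{Re}z=\pm R$ vanish as $R\to\infty$. Applying Cauchy's theorem to the rectangle $[-R,R]\times[0,\beta/4]$ and letting $R\to\infty$ gives $\int_{\mathbb{R}}\Phi=\int_{\mathbb{R}+i\beta/4}\Phi$, which is the first identity.

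The coherent-term identity follows verbatim by replacing $(f^a,A^a)$ with $(g,L_a^\dagger L_a)$ throughout the argument above, using the definition \eqref{eqn:Coherent term} of $G_a$ (the operator $L_a^\dagger L_a$ is bounded and evolved unitarily, so it plays the same role as $A^a$). The only additional point is analyticity and decay of $g$ on the relevant strip: the shift by $i\beta/4$ stays strictly below the nearest singularities of $g$ (which for the uncut coherent function lie at $\pm i\beta/2$, recalling that $g(t+i\beta/4)=-\tfrac{i}{\beta}\operatorname{sech}(2\pi t/\beta)$), and the required uniform super-exponential decay of $g(t+i\beta/4)$ in $t$ is supplied by \cite[Lemma 30]{ding2024efficient}.

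I expect the main obstacle to be the justification of the contour shift itself, and in particular controlling the decay of the integrand \emph{uniformly across} the strip $0\le\operatorname{Im}z\le\beta/4$ so that the vertical segments of the rectangle drop out in the limit. This is exactly the step where the super-exponential decay of the Gaussian filter (and the analogous bound on $g$) is essential; for a filter whose Fourier transform decayed only polynomially this argument would break down, which is one of the reasons the construction singles out rapidly decaying filters.
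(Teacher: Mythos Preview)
Your proposal is correct and takes essentially the same contour-shift approach as the paper: both integrate around a rectangle in the strip $0\le\operatorname{Im}z\le\beta/4$, use analyticity of the integrand and decay of the filter to kill the vertical contributions, and conclude via Cauchy's theorem. One minor difference worth noting: for the coherent term $G_a$, the paper supplements the contour argument with an alternative functional-calculus derivation (writing $\sigma_\beta^{-1/4}G\sigma_\beta^{1/4}=e^{\frac{\beta}{4}\operatorname{ad}_H}G=\hat{d}(\operatorname{ad}_H)\big(\sum_a L_a^\dagger L_a\big)$ and invoking the Fourier shift property) in order to sidestep any subtleties coming from the unspecified smooth cutoff $\kappa(\nu)$ in $\hat g$; your direct contour argument is fine too, since the cutoff makes $\hat g$ compactly supported (hence $g$ entire), but the paper's route avoids having to think about this at all. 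Your treatment of the operator-norm bound on the vertical segments is in fact slightly more careful than the paper's, which appeals loosely to ``conjugation preserves the spectral norm'' where a uniform bound on the strip is what is really being used.
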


\begin{proof}
    We need to change the contour of integration in the complex plane. For that, consider the integral \begin{equation}
        I = \oint_\Gamma f^a(i\beta/4 - iz) \cdot e^{Hz} A^a e^{-Hz}\ \dd z\,,
    \end{equation} where $\Gamma = \Gamma_1 + \Gamma_2 + \Gamma_T + \Gamma_B$ is the rectangular contour of integration, with $\Gamma_1 = \{z = \beta/4 + it\ |\ -R < t < R\}$, $\Gamma_2 = \{z = it\ |\ R > t > -R\}$, $\Gamma_T =\{ z = iR + t\ |\ \beta/4 > t > 0 \}$, and $\Gamma_B = \{ z = -iR + t\ |\ 0 < t < \beta/4 \}$; in the limit $R \to \infty$. As the function of interest is holomorphic, by the functional version of Cauchy's integral theorem we get that $I = 0$. We can also find that \begin{align}
        I_1 &= i \cdot \sigma_\beta ^{-1/4} L_a \sigma_\beta ^{1/4}\,,\\
        I_2 &= -i \int_{-\infty}^\infty f^a(t + i\beta/4) \cdot e^{iHt} A^a e^{-iHt}\ \dd t\,,
    \end{align} in the limit $R \to \infty$.

    Now we want to show that the contributions $I_T$ and $I_B$ vanish. To do that, we can use the fact that conjugation preserves the spectral norm and that $\|A^a\| \leq 1$ to bound \begin{equation}
        \|I_T\| \leq \lim_{R\to \infty} \int^{\beta/4}_0 |f^a(i\beta/4 - it + R)|\ \dd t = 0\,,
    \end{equation} which shows that $I_T$ vanishes; and similarly for $I_B$.
    Hence we get that \begin{equation}
    \sigma_\beta ^{-1/4} L_a \sigma_\beta ^{1/4} = \int_{-\infty}^\infty f^a(t + i\beta/4) \cdot e^{iHt} A^a e^{-iHt}\ \dd t\,.
\end{equation}
    The result for $G$ follows likewise. Alternatively, to avoid any potential issues with the (unspecified) smoothened indicator function $\kappa(\nu)$ appearing in the definition of $\hat g(\nu)$, we present the following functional argument:

    Observe that $\operatorname{ad}_H = -\frac{1}{\beta} \log (\Delta_{\sigma_\beta})$, where $\Delta_\rho [X] = \rho X \rho^{-1}$ is the modular superoperator. Hence the coherent term can be equivalently expressed as \begin{equation}
        G = i \tanh\left(\frac{\beta}{4} \operatorname{ad}_H \right) \left( \frac{1}{2} \sum_{a \in \mathcal{A}} L^\dagger _a L_a\right)\,.
    \end{equation} Now we can equate \begin{align}
        \sigma_\beta ^{-1/4} G \sigma_\beta ^{1/4} = e^{\frac{\beta}{4}\operatorname{ad}_H} G &= i e^{\frac{\beta}{4}\operatorname{ad}_H}\tanh\left(\frac{\beta}{4} \operatorname{ad}_H \right) \left( \frac{1}{2} \sum_{a \in \mathcal{A}} L^\dagger _a L_a\right) \\ 
        &= \hat d(\operatorname{ad}_H)\left( \sum_{a \in \mathcal{A}} L^\dagger _a L_a\right)\,,
    \end{align} with $\hat d (\nu) = \frac{i}{2} e^{\beta \nu /4} \tanh(\beta \nu /4) \kappa(\nu)$, where we've again introduced the smooth indicator function $\kappa(\nu)$, as $\kappa(\nu) = 1$ for any $\nu \in \operatorname{spec}(\operatorname{ad}_H)$. Finally we get that \begin{equation}
     \sigma_\beta ^{-1/4} G\sigma_\beta ^{1/4} = \int_{-\infty}^\infty g(t + i\beta/4) \cdot \sum_{a \in \mathcal{A}} e^{iHt} L_a ^\dagger L_a e^{-iHt}\ \dd t
\end{equation} by using the shifting property of the Fourier transform.
\end{proof}


\section{Details on Bounds of Lindbladian Perturbation}
\label{appendix:detailed bounds}

The full perturbation of the parent Hamiltonian we wish to study is
    \begin{align}
        \mathcal{V}[\rho] &= \mathcal{H}[\rho] - \mathcal{H}_0[\rho]\\
        &= \sigma_{\beta} ^{-1/4} \cdot\mathcal{L}^\dagger [\sigma_{\beta} ^{1/4} \cdot\rho\cdot \sigma_{\beta} ^{1/4}] \cdot \sigma_{\beta} ^{-1/4} - \sigma_{\beta,0} ^{-1/4} \cdot\mathcal{L}_0^\dagger [\sigma_{\beta,0} ^{1/4} \cdot\rho\cdot \sigma_{\beta,0} ^{1/4}] \cdot \sigma_{\beta,0} ^{-1/4}\\
        &=\sigma_{\beta} ^{-1/4}  \left(-i[G,\sigma_{\beta} ^{1/4} \cdot\rho\cdot \sigma_{\beta} ^{1/4}] + \sum_{a \in \mathcal{A}} \left( L_a \sigma_{\beta} ^{1/4} \cdot\rho\cdot \sigma_{\beta} ^{1/4} L_a ^\dagger - \frac{1}{2} \{ L_a ^\dagger L_a, \sigma_{\beta} ^{1/4} \cdot\rho\cdot \sigma_{\beta} ^{1/4} \} \right) \right) \sigma_{\beta} ^{-1/4} \\
        &\quad - \sigma_{\beta,0} ^{-1/4}  \left(-i[G^0,\sigma_{\beta,0} ^{1/4} \cdot\rho\cdot \sigma_{\beta,0} ^{1/4}] + \sum_{a \in \mathcal{A}} \left( L^0_a \sigma_{\beta,0} ^{1/4} \cdot\rho\cdot \sigma_{\beta,0} ^{1/4} L_a ^{0\dagger} - \frac{1}{2} \{ L_a ^{0\dagger} L^0 _a, \sigma_{\beta,0} ^{1/4} \cdot\rho\cdot \sigma_{\beta,0} ^{1/4} \} \right) \right) \sigma_{\beta,0} ^{-1/4}\\
        &= -i \left( \sigma_{\beta} ^{-1/4} G \sigma_{\beta} ^{1/4} -  \sigma_{\beta,0} ^{-1/4} G^0 \sigma_{\beta,0} ^{1/4}\right)\cdot \rho + i \rho \cdot \left( \sigma_{\beta} ^{1/4} G \sigma_{\beta} ^{-1/4} - \sigma_{\beta,0} ^{1/4} G^0 \sigma_{\beta,0} ^{-1/4}\right)\\
        &\quad + \sum_{a \in \mathcal{A}} \left( \sigma_{\beta} ^{-1/4} L_a \sigma_{\beta} ^{1/4} \cdot\rho\cdot \sigma_{\beta} ^{1/4} L_a ^\dagger \sigma_{\beta} ^{-1/4} - \sigma_{\beta,0} ^{-1/4} L^0_a \sigma_{\beta,0} ^{1/4} \cdot\rho\cdot \sigma_{\beta,0} ^{1/4} L_a ^{0\dagger} \sigma_{\beta,0} ^{-1/4}\right)\\
        &\quad - \frac{1}{2} \sum_{a \in \mathcal{A}} \left( \sigma_{\beta} ^{-1/4} L_a ^\dagger L_a \sigma_{\beta} ^{1/4} \cdot\rho + \rho\cdot \sigma_{\beta} ^{1/4} L_a ^\dagger L_a \sigma_{\beta} ^{-1/4} - \sigma_{\beta,0} ^{-1/4} L_a ^{0\dagger} L^0_a \sigma_{\beta,0} ^{1/4} \cdot\rho - \rho\cdot \sigma_{\beta,0} ^{1/4} L_a ^{0\dagger} L^0_a \sigma_{\beta,0} ^{-1/4} \right),
    \end{align} from which we can consider the vectorised operator version of $\mathcal{V}$, obtained via mapping $|\psi\rangle\langle \phi| \to |\psi\rangle|\overline \phi\rangle$ and $O[\rho] = A \rho B \to O \cong A \otimes B^T$:
    \begin{align}
        \mathcal{V} &\cong -i \left( \sigma_{\beta} ^{-1/4} G \sigma_{\beta} ^{1/4} -  \sigma_{\beta,0} ^{-1/4} G^0 \sigma_{\beta,0} ^{1/4}\right)\otimes I + i I \otimes \overline{\left( \sigma_{\beta} ^{1/4} G \sigma_{\beta} ^{-1/4} - \sigma_{\beta,0} ^{1/4} G^0 \sigma_{\beta,0} ^{-1/4}\right)}\\
        &\quad + \sum_{a \in \mathcal{A}} \left( \sigma_{\beta} ^{-1/4} L_a \sigma_{\beta} ^{1/4} \otimes \overline{\sigma_{\beta} ^{-1/4} L_a \sigma_{\beta} ^{1/4}} - \sigma_{\beta,0} ^{-1/4} L^0_a \sigma_{\beta,0} ^{1/4} \otimes \overline{\sigma_{\beta,0} ^{-1/4} L_a ^{0} \sigma_{\beta,0} ^{1/4}}\right)\\
        &\quad - \frac{1}{2} \sum_{a \in \mathcal{A}} \left( \sigma_{\beta} ^{-1/4} L_a ^\dagger L_a \sigma_{\beta} ^{1/4} \otimes I + I\otimes \overline{\sigma_{\beta} ^{1/4} L_a ^\dagger L_a \sigma_{\beta} ^{-1/4}} - \sigma_{\beta,0} ^{-1/4} L_a ^{0\dagger} L^0_a \sigma_{\beta,0} ^{1/4} \otimes I - I \otimes \overline{\sigma_{\beta,0} ^{1/4} L_a ^{0\dagger} L^0_a \sigma_{\beta,0} ^{-1/4}} \right)\,.
    \end{align}

We wish to show that $\mathcal{V}$ has $(c|\lambda|,\mu)$-decay, and we already understand the quasi-locality of this operator given Proposition \ref{prop: locality}; hence we can focus only on upper bounding its strength. We can start by considering $\| \sigma_{\beta} ^{-1/4} L_a \sigma_{\beta} ^{1/4} - \sigma_{\beta,0} ^{-1/4} L^0_a \sigma_{\beta,0} ^{1/4} \|$, and after using the definition of $L_a$'s, we will need to upper bound the following expression to obtain equation \eqref{eqn:bound on L - L}, where we make use of Lemma \ref{lemma:bound on perturbed evolution}: \begin{align}
    \left\| \left( e^{H(\beta/4 + it)} A^a e^{-H(\beta/4 + it)} - e^{H_0(\beta/4 + it)} A^a e^{H_0(\beta/4 + it)} \right) \right\| &\leq |\lambda| |\beta/4 + it| \max_{s\in [0,1]}
    \|[V, e^{s(\beta/4 + it) H_0}A^a e^{-s(\beta/4 + it) H_0}]\|\\
    &= |\lambda| |\beta/4 + it| \max_{s\in [0,1]} \|[V, e^{s(\beta/4 + it) \operatorname{ad}_{H_0}}\omega_a]\|\\
    &= |\lambda| |\beta/4 + it| \max_{s\in [0,1]} \left\|\left[V, \sum_i \left(e^{-4s(\beta/4 + it) h}\right)_{ai}\omega_i\right]\right\|\\
    &\leq |\lambda| |\beta/4 + it| \max_{s\in [0,1]} \left\|e^{-4s(\beta/4 + it) h} \right\|_\infty \max_i \|[V,\omega_i]\|\\
    &\leq  |\lambda| |\beta/4 + it| e^{4|\beta/4 + it| \cdot \|h\|_\infty} \max_i \|[V,\omega_i]\|\\
    &\leq c_2 |\lambda| \cdot |\beta/4 + it|\cdot e^{c_3|\beta/4 + it|}\,,
\end{align} where we then first utilised the exact solution for time evolution of $\omega_a$, then upper bounded a weighted sum by the maximal absolute sum of the weights multiplied by the maximal element, and finally used the submultiplicativity of the $\ell_\infty$ norm, representing the maximal absolute row sum, to obtain a system-size-independent bound due to the assumption $\|h\|_\infty = \mathcal{O}(1)$. This expression is subsequently integrated over the Gaussian filter function $f(t)$, which is convergent, and gives us that $\| \sigma_{\beta} ^{-1/4} L_a \sigma_{\beta} ^{1/4} - \sigma_{\beta,0} ^{-1/4} L^0_a \sigma_{\beta,0} ^{1/4} \| \leq c_1 |\lambda|$. The bounds on the different products of $L_a$'s then follow immediately from this one.

Then we similarly needed to bound $\| \sigma_{\beta} ^{-1/4} G_a \sigma_{\beta} ^{1/4} -  \sigma_{\beta,0} ^{-1/4} G^0_a \sigma_{\beta,0} ^{1/4}\|$:
\begin{align}
    \| \sigma_{\beta} ^{-1/4} &G_a \sigma_{\beta} ^{1/4} - \sigma_{\beta,0} ^{-1/4} G^0_a \sigma_{\beta,0} ^{1/4}\| = \| e^{H\beta/4} G_a e^{-H\beta/4}  - e^{H_0 \beta/4} G^0_a e^{-H_0 \beta/4} \|\\
    &= \left\|\int_{-\infty}^\infty g(t) \cdot \left( e^{H(\beta/4 + it)} L^\dagger_a L_a e^{-H(\beta/4 + it)} -  e^{H_0(\beta/4 + it)} L^{0\dagger}_a L^0_a e^{-H_0(\beta/4 + it)} \right)\ \dd t\right\|\\
    &= \left\|\int_{-\infty}^\infty g(t) \cdot \left( e^{H(\beta/4 + it)} L^\dagger_a L_a e^{-H(\beta/4 + it)} - e^{H(\beta/4 + it)} L^{0\dagger}_a L^0_a e^{-H(\beta/4 + it)}\right.\right. \\ &\qquad \qquad \left. \vphantom{\int_1^2} \left. + e^{H(\beta/4 + it)} L^{0\dagger}_a L^0_a e^{-H(\beta/4 + it)} -  e^{H_0(\beta/4 + it)} L^{0\dagger}_a L^0_a e^{-H_0(\beta/4 + it)} \right)\ \dd t\right\|\\
    &\leq \int_{-\infty}^\infty |g(t)| \left(\left\| L^\dagger_a L_a - L^{0\dagger}_a L^0 _a \right\| + \left\| e^{H(\beta/4 + it)} L^{0\dagger}_a L^0_a e^{-H(\beta/4 + it)} -  e^{H_0(\beta/4 + it)} L^{0\dagger}_a L^0_a e^{-H_0(\beta/4 + it)}  \right\|\right) \dd t\,.
\end{align}
Here we will utilise the exact solution $L^0_a = \sum_i \hat{f}(-4h)_{ai} \omega_i$ to analogously proceed with the following upper bound:
\begin{align}
    \max_{s\in [0,1]}\|[V, e^{s(\beta/4 + it) H_0} L^{0\dagger}_a &L^0_a e^{-s(\beta/4 + it) H_0}]\| \leq \| \hat{f}(-4h) \|_\infty ^2 \max_{i,j} \max_{s\in [0,1]}  \left\|\left[V, e^{s(\beta/4 + it) H_0} \omega_i \omega_j e^{-s(\beta/4 + it) H_0}\right]\right\|\\
    &\leq 2 \| \hat{f}(-4h) \|_\infty ^2 \max_{i} \max_{s\in [0,1]}  \left\|\left[V, e^{s(\beta/4 + it) H_0} \omega_i e^{-s(\beta/4 + it) H_0}\right]\right\|\\
    &= 2 \| \hat{f}(-4h) \|_\infty ^2 \max_{i} \max_{s\in [0,1]} \left\|\sum_{k} \left(e^{-4s(\beta/4+it)h}\right)_{ik} \left[V,\omega_k \right]\right\|\\
    &\leq  2 \| \hat{f}(-4h) \|_\infty ^2  \max_{s\in [0,1]} \left\|e^{-s\beta h} \right\|_\infty \left\|e^{-4isth} \right\|_\infty \max_{k}\left\|\left[V,\omega_k \right]\right\|\\
    &\leq 2  \| \hat{f}(-4h) \|_\infty ^2 \cdot e^{\beta \|h\|_\infty} \cdot w_h(t) \cdot \max_k \left\|\left[V,\omega_k\right]\right\|\\
    &\leq 2c_2 e^{c_3 \beta/4} \cdot w_h(t) \cdot \| \hat{f}(-4h) \|_\infty ^2\,,
    \end{align}
    which is then also system-size-independent due to submultiplicativity of the norm. Here, the $w_h(t) \geq \max_{s \in [0,1]} \left\|e^{-4isth} \right\|_\infty$ represents a function independent of the system size which grows subexponentially in $t$. A priori, we can bound $\max_{s \in [0,1]} \left\|e^{-4isth} \right\|_\infty \leq e^{4|t|\|h\|_\infty}$, which is system size independent due to $\|h\|_\infty = \mathcal{O}(1)$; but it can cause convergence issues in the integral weighted by $g(t)$. However, by Gelfand's spectral radius formula, we get that $1 = \lim_{t \to \infty} \|e^{-4isth}\|_\infty ^{1/t}$ due to the orthogonality, which then implies that the growth has to be subexponential, as otherwise the limit would have to be greater than $1$. One can find that for example in the 1D spinless case we consider in Section \ref{sec:simulations}, the norm grows like $\|e^{ith}\|_\infty \sim \sqrt{\pi t}$, and so these bounds are actually quite loose for the sparse systems we consider. Hence integrating this expression over $g(t)$ is also convergent, and we arrive at $\| \sigma_{\beta} ^{-1/4} G_a \sigma_{\beta} ^{1/4} -  \sigma_{\beta,0} ^{-1/4} G^0_a \sigma_{\beta,0} ^{1/4}\| \leq c_6 |\lambda|$.

    While these results depend on the specifics of the Hamiltonian considered, we may state the following result, which only requires exponentially decaying correlations in the underlying system:

\begin{lemma}[General bound on decay of the perturbation]\label{lemma:general bound decay}
    For a Hamiltonian $H = H_0 + \lambda V$ with exponentially decaying correlations (as per the assumptions of Lieb-Robinson bounds discussed in Section \ref{sec: locality}), there exists a constant $\lambda_\textup{bound}$, such that for any $|\lambda| \leq \lambda_\textup{bound}$ the perturbation of the parent Hamiltonian corresponding to the Lindbladian has $(K,\mu)$-decay, where $K \leq c |\lambda|^\alpha$  for an arbitrary positive constant $\alpha < 1$, with constants $c$ and $\mu$ being independent of the system size and $\lambda$.
\end{lemma}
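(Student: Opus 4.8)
The plan is to mirror the proof of Lemma~\ref{lemma: decay of interaction part}, replacing the one step that used the free-fermionic exact solution $L^0_a=\sum_i\hat f(-4h)_{ai}\omega_i$ together with $\|h\|_\infty=\mathcal O(1)$ by a purely Lieb--Robinson argument; the price of this substitution is exactly the weaker exponent $\alpha<1$. First I would recall the explicit form of $\mathcal V=\sum_{a\in\mathcal A}\mathcal V_a$ from the start of Appendix~\ref{appendix:detailed bounds}: each $\mathcal V_a$ is assembled from the differences $\sigma_\beta^{-1/4}L_a\sigma_\beta^{1/4}-\sigma_{\beta,0}^{-1/4}L^0_a\sigma_{\beta,0}^{1/4}$, the analogous difference for $L_a^\dagger L_a$, and for the quasi-local pieces $G_a$ of the coherent term, tensored appropriately in the vectorised picture. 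Using Lemma~\ref{lemma:contour change} (stated for an arbitrary Hamiltonian, hence also valid after truncation) I would rewrite every Euclidean conjugation $\sigma_\beta^{-1/4}(\cdot)\sigma_\beta^{1/4}$ as a real-time integral of $e^{iHt}A^ae^{-iHt}$, resp.\ $e^{iHt}L_a^\dagger L_ae^{-iHt}$, against the shifted filter $f(t+i\beta/4)$, resp.\ $g(t+i\beta/4)$. The shifted filters are still Gaussian-decaying, resp.\ exponentially decaying and now non-singular at $t=0$, and all imaginary-time evolution has disappeared, so Lieb--Robinson bounds for $H$ and $H_0$ become available.

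For each radius $r$ let $\mathcal V_a^{(r)}$ be $\mathcal V_a$ with $H$ replaced by the truncated Hamiltonian $H_{B_r(a)}$ and $H_0$ by $H_{0,B_r(a)}$ (and the corresponding Gibbs states), so that $\mathcal V_a=\mathcal V_a^{(0)}+\sum_{r\ge1}\varepsilon_a^{(r)}$ with $\varepsilon_a^{(r)}=\mathcal V_a^{(r)}-\mathcal V_a^{(r-1)}$. I would then establish two complementary bounds, both uniform in system size and in $|\lambda|\le\lambda_\textup{bound}$. The first is a Lipschitz-in-$\lambda$ bound $\|\mathcal V_a^{(r)}\|\le c|\lambda|$ holding for \emph{every} $r$: Duhamel's formula (as in Lemma~\ref{lemma:bound on perturbed evolution}) factors out the explicit $|\lambda|$ and leaves a commutator $[V,e^{iH_0s}A^ae^{-iH_0s}]$, resp.\ with $L_a^{0\dagger}L_a^0$; bounding this by summing the Lieb--Robinson commutator estimates over all exponentially decaying terms of $V$ — so that $\sum_X\|v_X\|\,|X|\,e^{-\mu\operatorname{dist}(X,\operatorname{supp}A^a)}$ is a convergent, system-size-independent constant — yields a constant times $e^{\mu v|s|}$, which is integrable against $f(t+i\beta/4)$ and $g(t+i\beta/4)$; the bound does not feel the truncation radius because real-time evolution is norm-preserving. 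The second is the locality tail bound $\|\mathcal V_a-\mathcal V_a^{(r)}\|\le c'e^{-\mu_0r}$, which is precisely the quasi-locality of the parent Hamiltonian from Proposition~\ref{prop: locality} applied to both $\mathcal H$ (whose interactions $H_0+\lambda V$ still decay exponentially) and $\mathcal H_0$, with constants uniform for bounded $|\lambda|$.

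Combining the two, $\|\varepsilon_a^{(r)}\|\le\min\{2c|\lambda|,\,2c'e^{-\mu_0(r-1)}\}$, and interpolating via $\min\{x,y\}\le x^{1-\theta}y^\theta$ with $\theta=1-\alpha$ gives $\|\varepsilon_a^{(r)}\|\le c''|\lambda|^\alpha e^{-(1-\alpha)\mu_0r}$ for any $\alpha\in(0,1)$; hence $\mathcal V=\sum_a\sum_r\varepsilon_a^{(r)}$ has $(K,\mu)$-decay in the sense of Definition~\ref{def: Hastings locality} with $\mu=(1-\alpha)\mu_0>0$ and $K=c''|\lambda|^\alpha$, where $c''$ and $\mu_0$ depend only on $\beta$, the lattice, and the decay rate of the interactions, and $\lambda_\textup{bound}$ is chosen small enough that $K\le K_\textup{max}$ whenever it is needed. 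I expect the main obstacle to be the uniformity in system size of the Lieb--Robinson step: one must verify that the sum over all interaction terms of the perturbation genuinely converges to a constant (rather than incurring a $\log n$ from Lieb--Robinson tails), that the resulting bound grows at most exponentially in $|t|$ so that the Gaussian- and $g$-weighted time integrals converge, and that the $t=0$ singularity of $g(t)$ is tamed by the $i\beta/4$ shift exactly as in Proposition~\ref{prop: locality}. The reason one obtains only $\alpha<1$, rather than the $c|\lambda|$ of Lemma~\ref{lemma: decay of interaction part}, is that without the free-fermion structure the $|\lambda|$-smallness and the $e^{-\mu_0r}$-decay are available only as \emph{separate} bounds (a $\min$), and cannot be fused into a single product bound $|\lambda|e^{-\mu_0r}$.
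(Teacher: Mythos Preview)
Your strategy is exactly the paper's: obtain two independent bounds on the telescoping pieces $\varepsilon_a^{(r)}$ (a Lipschitz-in-$\lambda$ bound and a quasi-locality tail bound) and then interpolate via the weighted geometric mean $\min\{x,y\}\le x^{1-\theta}y^\theta$ to extract $|\lambda|^\alpha e^{-\mu r}$. The only tactical difference is in how the Lipschitz bound is obtained. You propose to bound $\|[V_{B_r(a)},e^{istH^0_{B_r(a)}}A^ae^{-istH^0_{B_r(a)}}]\|$ term-by-term using Lieb--Robinson commutator estimates, summing $\sum_X\|v_X\|e^{-\mu\operatorname{dist}(X,a)}$ to a system-size- and $r$-independent constant, which yields $\|\mathcal V_a^{(r)}\|\le c|\lambda|$ uniformly in $r$. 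The paper instead uses the cruder bound $\|[V_{B_r(a)},\,\cdot\,]\|\le 2\|V_{B_r(a)}\|=\mathcal O(r^D)$, getting $\|\tilde L_a^{(r)}-\tilde L_a^{0(r)}\|\le c\,r^D|\lambda|$; the extra polynomial $r^{D\alpha}$ is then absorbed into the exponential after interpolation. Both routes land on the same $(c|\lambda|^\alpha,\mu)$-decay. Your version is cleaner in principle but requires more care for the coherent term, where the commutator is with the quasi-local operator $L_a^{0\dagger}L_a^0$ rather than the strictly local $A^a$; the paper's crude $\|V_{B_r(a)}\|$ bound sidesteps this entirely, which is likely why they chose it.
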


\begin{proof}
    Define the truncated versions of the operators appearing in the parent Hamiltonian, 
    \begin{align}
        \tilde{L}_a^{(r)}
        &=
        \int_{-\infty} ^{\infty} f^a (t+i\beta/4)  e^{iH_{B_r(a)}t} A^a e^{-iH_{B_r(a)}t}\ \dd t 
        \,,\\
        L_a^{(r)}
        &=
        \int_{-\infty} ^{\infty} f^a (t)  e^{iH_{B_r(a)}t} A^a e^{-iH_{B_r(a)}t}\ \dd t 
        \,,\\
        \tilde{G}_a^{(r)}
        &=
        \int_{-\infty} ^{\infty} g (t+i\beta/4)  e^{iH_{B_r(a)}t} L^{(r)\dagger}_a L^{(r)}_a e^{-iH_{B_r(a)}t}\ \dd t 
        \,,
    \end{align} which are supported only on the balls centred at $a$ with radius $r$; and similarly $\tilde{L}_a^{0(r)}$, $L_a^{0(r)}$, and $\tilde{G}_a^{0(r)}$ for the versions corresponding to the unperturbed Hamiltonian $H_0$. The quasi-locality proved in Proposition \ref{prop: locality} shows immediately that \begin{equation}
        \| \tilde{L}_a^{(r+1)} -  \tilde{L}_a^{(r)}\| \leq c_1 e^{-\mu_1 r}\,,
    \end{equation} where $c_1$ and $\mu_1$ are independent of the system size. Note that these generally depend on the coupling $\lambda$, but due to their independence of system size, they must be continuous and finite for any finite $\lambda$, and so we can say that there exists some $\lambda_\textup{bound}$ below which these bounds hold for constants $c_1$ and $\mu_1$ which are also independent of $\lambda$. The same bound then also holds for $\lambda = 0$ for the case $\| \tilde{L}_a^{0(r+1)} -  \tilde{L}_a^{0(r)}\|$. 

    Now consider bounding $ \| \tilde{L}_a^{(r)} -   \tilde{L}_a^{0(r)}\|$: \begin{align}
         \| \tilde{L}_a^{(r)} -   \tilde{L}_a^{0(r)}\| &= \left\|  \int_{-\infty} ^{\infty} f^a (t+i\beta/4)  (e^{iH_{B_r(a)}t} A^a e^{-iH_{B_r(a)}t} - e^{iH^0_{B_r(a)}t} A^a e^{-iH^0_{B_r(a)}t})\ \dd t  \right\| \\
         &\leq  \int_{-\infty} ^{\infty} |f^a (t+i\beta/4)|  \left\| e^{iH_{B_r(a)}t} A^a e^{-iH_{B_r(a)}t} - e^{iH^0_{B_r(a)}t} A^a e^{-iH^0_{B_r(a)}t})\right\| \ \dd t  \\
         &\leq  \int_{-\infty} ^{\infty} |f^a (t+i\beta/4)|  |\lambda| |t| \max_{s\in [0,1]}
    \left\|[V_{B_r(a)}, e^{ist H^0_{B_r(a)}}A^a e^{-ist H^0_{B_r(a)}}]\right\|\\
    &\leq  \int_{-\infty} ^{\infty} |f^a (t+i\beta/4)|  \cdot |\lambda| |t| \cdot 2 \|V_{B_r(a)}\|\\
    &\leq c_2 r^D |\lambda|\,,
    \end{align} where we've used Lemma \ref{lemma:bound on perturbed evolution}, $\|A^a\| \leq 1$ and $\|V_{B_r(a)}\| = \mathcal{O}(r^D)$. Hence it follows that \begin{align}
        \|\tilde{L}_a ^{(r+1)} \otimes \tilde{L}_a ^{(r+1)} - \tilde{L}_a ^{(r)}\otimes \tilde{L}_a ^{(r)} \| &\leq 2 c_1 e^{-\mu_1 r}\,,\\
        \|\tilde{L}_a ^{0(r+1)} \otimes \tilde{L}_a ^{0(r+1)} - \tilde{L}_a ^{0(r)}\otimes \tilde{L}_a ^{0(r)} \| &\leq 2 c_1 e^{-\mu_1 r}\,,\\
        \| \tilde{L}_a^{(r)} \otimes \tilde{L}_a^{(r)} -  \tilde{L}_a^{0(r)} \otimes  \tilde{L}_a^{0(r)}\| &\leq 2 c_2 r^D |\lambda|\,,\\
        \| \tilde{L}_a^{(r+1)} \otimes \tilde{L}_a^{(r+1)} -  \tilde{L}_a^{0(r+1)} \otimes  \tilde{L}_a^{0(r+1)}\| &\leq 2 c_2 (r+1)^D |\lambda| \leq 2^{D+1}c_2 r^D |\lambda|\,.\\        
    \end{align} We can then combine these bounds to say \begin{align}
        \|\tilde{L}_a ^{(r+1)} \otimes \tilde{L}_a ^{(r+1)}  - \tilde{L}_a ^{0(r+1)} \otimes \tilde{L}_a ^{0(r+1)}  - \tilde{L}_a ^{(r)}\otimes \tilde{L}_a ^{(r)}+ \tilde{L}_a ^{0(r)}\otimes \tilde{L}_a ^{0(r)} \| &\leq 4 c_1 e^{-\mu_1 r}\,,\\
         \|\tilde{L}_a ^{(r+1)} \otimes \tilde{L}_a ^{(r+1)}  - \tilde{L}_a ^{0(r+1)} \otimes \tilde{L}_a ^{0(r+1)} - \tilde{L}_a ^{(r)}\otimes \tilde{L}_a ^{(r)} + \tilde{L}_a ^{0(r)}\otimes \tilde{L}_a ^{0(r)} \| &\leq (2+2^{D+1})c_2 r^D |\lambda|\,. 
    \end{align} Finally, as we obtained two different bounds for the same operator, we can unify them by taking their weighted geometric mean to get 
    \begin{align}
        \|\tilde{L}_a ^{(r+1)} \otimes \tilde{L}_a ^{(r+1)}  - \tilde{L}_a ^{0(r+1)} \otimes \tilde{L}_a ^{0(r+1)}  - \tilde{L}_a ^{(r)}\otimes \tilde{L}_a ^{(r)}+ \tilde{L}_a ^{0(r)}\otimes \tilde{L}_a ^{0(r)} \| &\leq (4c_1 e^{-\mu_1 r})^{1-\alpha} \left((2+2^{D+1})c_2r^D|\lambda|\right)^{\alpha}\\ 
        &\leq c_3(\alpha) |\lambda|^\alpha e^{-\mu_2(\alpha) r} \,,
    \end{align} for an arbitrary positive constant $\alpha <1$. Denoting this operator by $\epsilon_a ^{1(r+1)}$, hence writing \begin{equation}
        \mathcal{V}^1= \sum_{a \in \mathcal{A}}  \left( \mathcal{V}_a^{1(0)} + \sum_{r \geq 1} \mathcal{V}_a^{1(r+1)} - \mathcal{V}_a^{1(r)} \right) = \sum_{a \in \mathcal{A}} \sum_{r \geq 0} \epsilon_a^{1(r)} \,,
    \end{equation} we see that the first part of the perturbation $\mathcal{V}$ obeys the result of the lemma (where we made a shortcut by writing $\epsilon_a^{1(0)} = \tilde{L}_a ^{(0)} \otimes \tilde{L}_a ^{(0)}  - \tilde{L}_a ^{0(0)} \otimes \tilde{L}_a ^{0(0)}$); here we have separated the perturbation into three parts like    \begin{align}
        \mathcal{V}^1 &=  \sum_{a \in \mathcal{A}} \left( \sigma_{\beta} ^{-1/4} L_a \sigma_{\beta} ^{1/4} \otimes \overline{\sigma_{\beta} ^{-1/4} L_a \sigma_{\beta} ^{1/4}} - \sigma_{\beta,0} ^{-1/4} L^0_a \sigma_{\beta,0} ^{1/4} \otimes \overline{\sigma_{\beta,0} ^{-1/4} L_a ^{0} \sigma_{\beta,0} ^{1/4}}\right)\\
        \mathcal{V}^2 &= - \frac{1}{2} \sum_{a \in \mathcal{A}} \left( \sigma_{\beta} ^{-1/4} L_a ^\dagger L_a \sigma_{\beta} ^{1/4} \otimes I + I\otimes \overline{\sigma_{\beta} ^{1/4} L_a ^\dagger L_a \sigma_{\beta} ^{-1/4}} - \sigma_{\beta,0} ^{-1/4} L_a ^{0\dagger} L^0_a \sigma_{\beta,0} ^{1/4} \otimes I - I \otimes \overline{\sigma_{\beta,0} ^{1/4} L_a ^{0\dagger} L^0_a \sigma_{\beta,0} ^{-1/4}} \right)\\
        \mathcal{V}^3 &= -i \left( \sigma_{\beta} ^{-1/4} G \sigma_{\beta} ^{1/4} -  \sigma_{\beta,0} ^{-1/4} G^0 \sigma_{\beta,0} ^{1/4}\right)\otimes I + i I \otimes \overline{\left( \sigma_{\beta} ^{1/4} G \sigma_{\beta} ^{-1/4} - \sigma_{\beta,0} ^{1/4} G^0 \sigma_{\beta,0} ^{-1/4}\right)}\,.
    \end{align}

    For the second part, first note that the bound $\| \tilde{L}_a^{(r)} -   \tilde{L}_a^{0(r)}\|$ will also work if we conjugate $L_a^{(r)}$ by the Gibbs state with opposite exponents, amounting to changing $+i\beta/4$ to $-i\beta/4$ within the filter function in the integral. We may denote these two directions of conjugation by $\tilde{L}_a^{(r,+)}$ and $\tilde{L}_a^{(r,-)}$ respectively. Hence we arrive at \begin{align}
        \| \tilde{L}_a^{(r,-) \dagger} \tilde{L}_a^{(r,+) } -   \tilde{L}_a^{0(r,-) \dagger} \tilde{L}_a^{0(r,+) }\| &\leq c_4 r^D |\lambda|\,,\\
        \| \tilde{L}_a^{(r+1,-) \dagger} \tilde{L}_a^{(r+1,+) } -   \tilde{L}_a^{0(r+1,-) \dagger} \tilde{L}_a^{0(r+1,+) }\| &\leq 2 c_4 r^D |\lambda|\,,\\
    \end{align} and the quasi-locality also gives us \begin{align}
        \| \tilde{L}_a^{0(r+1,-) \dagger} \tilde{L}_a^{0(r+1,+) } -  \tilde{L}_a^{0(r,-) \dagger} \tilde{L}_a^{0(r,+) }\| &\leq c_5 e^{-\mu_3 r}\,,\\
        \| \tilde{L}_a^{(r+1,-) \dagger} \tilde{L}_a^{(r+1,+) } -  \tilde{L}_a^{(r,-) \dagger} \tilde{L}_a^{(r,+) } \| &\leq c_5 e^{-\mu_3 r}\,,\\
    \end{align} which finally leads to \begin{equation}
        \| \tilde{L}_a^{(r+1,-) \dagger} \tilde{L}_a^{(r+1,+) } -   \tilde{L}_a^{0(r+1,-) \dagger} \tilde{L}_a^{0(r+1,+) } - \tilde{L}_a^{(r,-) \dagger} \tilde{L}_a^{(r,+) } +  \tilde{L}_a^{0(r,-) \dagger} \tilde{L}_a^{0(r,+) }\| \leq c_6(\alpha) |\lambda|^\alpha e^{-\mu_4(\alpha) r}\,,
    \end{equation} by the same argument as previously. This then shows that the second part of the perturbation $\mathcal{V}$ obeys the result of the lemma.

    Lastly, we look at the coherent part. Here we start by bounding \begin{align}
        \|\tilde{G}_a^{(r)} - \tilde{G}_a^{0(r)}\| &= \left\| \int_{-\infty} ^{\infty} g (t+i\beta/4) ( e^{iH_{B_r(a)}t} L^{(r)\dagger}_a L^{(r)}_a e^{-iH_{B_r(a)}t} - e^{iH^0_{B_r(a)}t} L^{0(r)\dagger}_a L^{0(r)}_a e^{-iH^0_{B_r(a)}t})\ \dd t  \right\|\\
        &\leq \int_{-\infty} ^{\infty} |g (t+i\beta/4)| \left\| e^{iH_{B_r(a)}t} L^{(r)\dagger}_a L^{(r)}_a e^{-iH_{B_r(a)}t} - e^{iH^0_{B_r(a)}t} L^{0(r)\dagger}_a L^{0(r)}_a e^{-iH^0_{B_r(a)}t} \right\|\ \dd t \\
        &\leq \int_{-\infty} ^{\infty} |g (t+i\beta/4)| \left( \| L^{(r)\dagger}_a L^{(r)}_a - L^{0(r)\dagger}_a L^{0(r)}_a \| \right. \\ 
        &\quad\quad\quad + \left. \left\| e^{iH_{B_r(a)}t} L^{0(r)\dagger}_a L^{0(r)}_a e^{-iH_{B_r(a)}t} - e^{iH^0_{B_r(a)}t} L^{0(r)\dagger}_a L^{0(r)}_a e^{-iH^0_{B_r(a)}t} \right\| \right)\ \dd t\\
        &\leq  c_7 r^D |\lambda| +  \int_{-\infty} ^{\infty} |g (t+i\beta/4)| |t| |\lambda| \max_{s \in [0,1]} \left\| \left[V_{B_r(a)},e^{iH^0_{B_r(a)}st} L^{0(r)\dagger}_a L^{0(r)}_a e^{-iH^0_{B_r(a)}st}\right] \right\| \dd t\\
        &\leq c_8 r^D |\lambda|\,,
    \end{align} where we've used that $\|L_a\| \leq 1$ due to the normalisation of $f^a(t)$, the fact that the previous bound on $\| \tilde{L}_a^{0(r+1,-) \dagger} \tilde{L}_a^{0(r+1,+) } -  \tilde{L}_a^{0(r,-) \dagger} \tilde{L}_a^{0(r,+) }\|$ also immediately works without the conjugation, and again that $\|V_{B_r(a)}\| = \mathcal{O}(r^D)$. Together with the quasi-locality of $\tilde{G}_a^{(r)}$, the same argument as before shows \begin{equation}
       \| \tilde{G}_a^{(r+1)} - \tilde{G}_a^{0(r+1)} - \tilde{G}_a^{(r)} - \tilde{G}_a^{0(r)}\| \leq c_9(\alpha) |\lambda|^\alpha e^{-\mu_5(\alpha) r}\,,
    \end{equation} which leads to the third and final part of $\mathcal{V}$ to obey the result of the lemma, meaning it holds for the full $\mathcal{V}$, as \begin{equation}
        \mathcal{V} = \sum_{a \in \mathcal{A}} \sum_{t=1}^3 \sum_{r\geq 0} \epsilon_a^{t(r)},
    \end{equation} where $\|\epsilon_a^{t(r)}\| \leq c(\alpha)|\lambda|^\alpha e^{-\mu(\alpha) r}$; hence finishing the proof.
\end{proof}


\end{document}